\documentclass[11pt]{article}

\usepackage[a4paper,margin=1in]{geometry}
\usepackage[T1]{fontenc}
\usepackage{lmodern}
\usepackage{amsmath,amssymb,amsthm,mathtools}
\usepackage{enumitem}
\usepackage{needspace}
\usepackage{booktabs,tabularx,array}
\usepackage{placeins}
\usepackage{hyperref}

\numberwithin{equation}{section}

\hypersetup{
  colorlinks=true,
  linkcolor=blue,
  citecolor=blue,
  urlcolor=blue
}

\newtheorem{theorem}{Theorem}[section]
\newtheorem{lemma}[theorem]{Lemma}
\newtheorem{proposition}[theorem]{Proposition}
\newtheorem{corollary}[theorem]{Corollary}
\newtheorem{remark}[theorem]{Remark}

\newtheorem*{definition*}{Definition}

\newcommand{\F}{\mathbb F}
\newcommand{\Fq}{\F_q}
\newcommand{\Fp}{\F_p}
\newcommand{\Fthree}{\F_3}
\newcommand{\Fqstar}{\F_q^*}
\newcommand{\Fthreestar}{\F_3^*}
\newcommand{\Tr}{\operatorname{Tr}}
\newcommand{\DDT}{\operatorname{DDT}}
\newcommand{\BCT}{\operatorname{BCT}}
\newcommand{\CCZ}{\operatorname{CCZ}}
\newcommand{\EA}{\operatorname{EA}}

\newcolumntype{Y}{>{\raggedright\arraybackslash}X}

\title{On APN Functions with Boomerang Uniformity One over $\mathbb F_{3^n}$:\\
Differential and Boomerang Spectra and CCZ-Inequivalence}
\author{ Namhun Koo$^1$, Soonhak Kwon$^{2,3}$, Minwoo Ko$^2$, Byunguk Kim$^2$\\
	\small{\texttt{ Email: komaton@skku.edu, shkwon@skku.edu,}}\\
	\small{\texttt{minwoo1403@skku.edu, kbu0923@g.skku.edu}}\\
	\small{$^1$Institute of Basic Science, Sungkyunkwan University, Suwon, Korea}\\
	\small{$^2$Department of Mathematics, Sungkyunkwan University, Suwon, Korea}\\
	\small{$^3$Applied Algebra and Optimization Research Center, Sungkyunkwan University, Suwon, Korea}
}
\date{\small September 8, 2026}

\begin{document}
\maketitle

\begin{abstract}
Let $q=3^n$, where $n>1$ is odd, and let $g:\Fq\to\Fq$ be a
perfect nonlinear (PN) function represented by a
Dembowski--Ostrom (DO) polynomial.  Put $\tau=g(1)$,
let $\epsilon$ be the indicator of $\Fthree^*$, and, for $c\in\Fq$, define
$\widetilde G_c(x):=g(x+c)+\tau\epsilon(x)$.  We prove that every
$\widetilde G_c$ is APN and has boomerang uniformity either one or two.  More precisely,
\[
  \beta_{\widetilde G_c}=1
  \quad\Longleftrightarrow\quad
  c\in\mathcal C_g
  :=\{c\in\Fq\setminus\Fthree:g(c)+\tau\notin g(\Fq)\},
  \qquad
  |\mathcal C_g|=\frac{q-3}{2},
\]
whereas $\beta_{\widetilde G_c}=2$ for the remaining $(q+3)/2$ parameters.
We determine the common differential spectrum and complete boomerang spectra
of all the functions $\widetilde G_c$.
Since boomerang uniformity one is the least possible for an APN function over
a finite field of odd characteristic, this gives, to the best of our
knowledge, the first general construction yielding infinite families of
APN functions attaining this optimum.

This common differential spectrum rules out CCZ equivalence with every
power function and every Ness--Helleseth-type binomial.  We also prove that
CCZ equivalence between sign-switches of DO PN functions forces EA equivalence
between the original PN functions.  Using the orders of the nuclei of the
associated presemifields, we exhibit, for
infinitely many odd $n$, three pairwise CCZ-inequivalent PN functions over
$\F_{3^n}$, one from each of the Gold $f_1$, Ding--Yuan $f_3$, and Bierbrauer
$f_5$ families.
Consequently, over each such field, our construction produces three pairwise
CCZ-inequivalent APN functions with boomerang uniformity one.  The smallest
extension degree obtained in this way is $n=45$.

\bigskip
	\noindent \textbf{Keywords.} APN functions; planar functions; Dembowski--Ostrom polynomials; boomerang uniformity; differential spectrum; boomerang spectrum; CCZ-equivalence.
	
	\bigskip
	\noindent \textbf{Mathematics Subject Classification (2020).} 94A60, 11T71, 11T06.
\end{abstract}

\section{Introduction}

Let $q$ be an odd prime power of characteristic $p$ and let $\Fq$ be the
finite field with $q$ elements.  Differential and boomerang properties of functions on $\Fq$ are
fundamental criteria in the study of cryptographic mappings.  For a function
$F:\Fq\to\Fq$ and $a\in\Fq$, write
\[
  D_aF(x):=F(x+a)-F(x),
\]
and define
\[
  L_{a,F}(x)
  :=D_aF(x)-D_aF(0)
  =F(x+a)-F(x)-F(a)+F(0),
\]
the nonconstant part of the derivative $D_aF$.  When the underlying function
is clear, we simply write $L_a$.  For a general function $F$, the map $L_{a,F}$
need not be linear.  If $F$ is quadratic, then $D_aF$ is affine and $L_{a,F}$
is its $\Fp$-linear part; in this case we call $L_{a,F}$ the
\emph{linearized derivative} of $F$ in direction $a$.

For a function $F:\Fq\to\Fq$ and $a,b\in\Fq$, define the entry of the
difference distribution table (DDT) of $F$ at $(a,b)$ by
\[
  \delta_F(a,b):=\DDT_F(a,b):=
  \#\{x\in\Fq:D_aF(x)=b\}.
\]
Following Nyberg~\cite{Nyberg94}, the differential uniformity of $F$ is
\[
  \delta_F:=\max_{a\in\Fq^*,\ b\in\Fq}\delta_F(a,b).
\]
The function $F$ is perfect nonlinear (PN), also called planar in odd
characteristic, when $\delta_F=1$, and it is almost perfect nonlinear (APN)
when $\delta_F=2$.

The boomerang connectivity table was introduced for permutations by Cid
et al.~\cite{CidEtAl18} and was extended to arbitrary functions by Li
et al.~\cite{LiQuSunLi19}.  For $a,b\in\Fq^*$, its entries and the boomerang
uniformity are
\[
  \beta_F(a,b):=\BCT_F(a,b):=
  \#\{(x,y)\in\Fq^2:
       F(y+a)-F(x+a)=b,\ F(y)-F(x)=b\}
\]
and
\[
  \beta_F:=\max_{a,b\in\Fq^*}\beta_F(a,b),
\]
respectively.

The very small values $\beta_F=0,1,2$ have recently attracted particular
attention over finite fields of odd characteristic.  Every PN function has boomerang uniformity zero, and
recent work has also produced non-PN families with boomerang uniformity zero.
A particularly active line of research concerns the Ness--Helleseth-type binomials
\[
  F_{r,u}(x)=x^r(1+u\chi(x)),
\]
where $\chi$ is the quadratic character of $\Fq$, extended by $\chi(0)=0$.
Lyu, Wang, and Zheng determined the differential and boomerang spectra of
$F_{q-2,\pm1}$ and obtained classes with boomerang uniformity zero or at most one
\cite{LyuWangZheng24}.  Koo and Kwon studied $F_{(q+1)/4,\pm1}$ and found
classes with boomerang uniformity zero, one, or two \cite{KK26}.  Further
characteristic-$3$ classes with boomerang uniformity zero or one were obtained
by Koo et al.~\cite{KKKB26b,KKKB26}, while other locally-APN
classes with boomerang uniformity two were studied in
\cite{HuEtAl23,MW25}.

In the terminology of \cite[Definition~2.1]{KK26}, the terms locally-PN and
locally-APN, used for monomials as well as for the Ness--Helleseth-type
binomials above, refer, respectively, to the conditions
\[
  \delta_F(1,b)\le1
  \quad\text{and}\quad
  \delta_F(1,b)\le2
  \qquad(b\in\Fq\setminus\Fp).
\]
Our exhaustive BCT computations for the small-field APN binomials recorded by
Budaghyan and Pal \cite[Table~5]{BudaghyanPal25} show that some have boomerang
uniformity one; one such example is discussed in
Subsection~\ref{sec:apn-beta-lower-bound}.  Their conjecture that the binomial
class contains an infinite APN subfamily was subsequently disproved in
\cite{BartoliStanica26,MW25}.  Thus these examples did not provide a general
construction of infinite families of APN functions with boomerang
uniformity one.

For boomerang uniformity two, Pal and St\u{a}nic\u{a} related the BCT entries of
odd APN functions $F$, that is, APN functions satisfying $F(-x)=-F(x)$,
to the corresponding $(-1)$-DDT entries and determined
conditions under which the inverse function is APN with boomerang uniformity two
\cite{PalStanica25}.  This gives an infinite family.  To the best of our
knowledge, prior to the present work, the inverse family appears to have been
the only established infinite family of APN functions over finite
fields of odd characteristic with boomerang uniformity two.

Apart from PN functions, which have boomerang uniformity zero, all previously
known infinite non-PN families with boomerang uniformity at most one were
locally-PN or locally-APN families; none of the families with boomerang
uniformity one yielded an infinite family of APN functions.  To the best of
our knowledge, APN functions with boomerang uniformity one were known only
through isolated examples over small fields.  The present paper fills this
gap by giving a general construction that produces several such infinite
families.

Throughout the paper, $q=3^n$ with $n>1$ odd.  Thus $q\equiv3\pmod4$.

Let $g:\Fq\to\Fq$ be represented by a Dembowski--Ostrom (DO)
polynomial~\cite{DO68}, i.e.\ a polynomial of the form
\[
  g(x)=\sum_{0\le i\le j<n} c_{ij}x^{3^i+3^j}
\]
up to reduction modulo $x^q-x$.  We assume that $g$ is perfect nonlinear (PN), also called planar in odd characteristic, meaning that
the derivative $D_ag$ is a permutation of $\Fq$ for every $a\in\Fq^*$.

Unless the underlying function is explicitly specified otherwise, $L_a$
denotes the linearized derivative of this fixed function $g$.  Thus
\[
  L_a(x):=L_{a,g}(x)=D_ag(x)-D_ag(0)
  =g(x+a)-g(x)-g(a).
\]
Put
\begin{equation}\label{eq:tau}
  \tau:=g(1).
\end{equation}
We shall see below that $\tau\ne0$.  Define the sign-switch $G$ of $g$ by
\[
  G(x)=
  \begin{cases}
    -g(x), & x\in\Fthree,\\
     g(x), & x\in\Fq\setminus\Fthree.
  \end{cases}
\]
Since $g(0)=0$ and $g(\pm1)=\tau$, equivalently,
\begin{equation}\label{eq:sign-switch}
  G(x):=g(x)+\tau\epsilon(x),
  \qquad
  \epsilon(x):=
  \begin{cases}
    1, & x=\pm1,\\
    0, & \text{otherwise}.
  \end{cases}
\end{equation}

Earlier modification and switching constructions from PN functions over fields
of odd characteristic yield, up to EA equivalence, certain ternary Gold
instances of the sign-switch $G$ above, thereby already establishing their APN
property~\cite{XuCaoXu16,ZhaHu13}.  Apart from computational differential
spectra for a few small examples in~\cite{XuCaoXu16}, those works do not give
general exact differential and boomerang spectra or study CCZ equivalence among
sign-switches arising from distinct PN inputs.

For a map $F:\Fq\to\Fq$ and $i\ge0$, define the DDT and BCT
entry counts by
\[
\begin{aligned}
  N_i^{\DDT}(F)
    &:=\#\{(a,b)\in\Fq^*\times\Fq:\delta_F(a,b)=i\},\\
  N_i^{\BCT}(F)
    &:=\#\{(a,b)\in\Fq^*\times\Fq^*:\beta_F(a,b)=i\}.
\end{aligned}
\]
Using $v^m$ for $m$ copies of $v$, with $v^0$ omitted, define the differential
and boomerang spectra of $F$ by the multisets
\begin{equation}\label{eq:global-spectra}
\begin{aligned}
  \mathcal S_{\DDT}(F)
    &:=\{\!\{\,0^{N_0^{\DDT}(F)},\ldots,
         \delta_F^{N_{\delta_F}^{\DDT}(F)}\,\}\!\},\\
  \mathcal S_{\BCT}(F)
    &:=\{\!\{\,0^{N_0^{\BCT}(F)},\ldots,
         \beta_F^{N_{\beta_F}^{\BCT}(F)}\,\}\!\}.
\end{aligned}
\end{equation}

\subsection*{Main contributions of the paper}

For the fixed DO PN function $g:\Fq\to\Fq$, with $\tau$
and $G$ as in \eqref{eq:tau}--\eqref{eq:sign-switch}, define
\begin{equation}\label{eq:main-family}
\begin{aligned}
  \mathcal C_g
    &:=\{c\in\Fq\setminus\Fthree:g(c)+\tau\notin g(\Fq)\},\\
  \widetilde G_c(x)
    &:=g(x+c)+\tau\epsilon(x)
      =G(x)+D_cg(x)
      \qquad(c\in\Fq).
\end{aligned}
\end{equation}
Our main construction, given in
Corollary~\ref{cor:natural-derivative-perturbations}, yields
\[
  |\mathcal C_g|=\frac{q-3}{2},
  \qquad
  \delta_{\widetilde G_c}=2\quad(c\in\Fq),
  \qquad
  \beta_{\widetilde G_c}=
  \begin{cases}
    1,&c\in\mathcal C_g,\\
    2,&c\in\Fq\setminus\mathcal C_g.
  \end{cases}
\]
Thus every DO PN function gives exactly $(q-3)/2$ parameter values for which
the corresponding natural perturbation is APN and has boomerang uniformity one.
Since $\widetilde G_c\sim_{\EA}G$, all members of the family share the exact
differential spectrum of $G$ determined in
Theorem~\ref{thm:diff-spectrum}.  Corollary~\ref{cor:lc-bct-spectrum}
determines the complete BCT row distributions and hence the boomerang spectra
for every $c\in\Fq$: it lists three possible types when
$\beta_{\widetilde G_c}=1$ and five when $\beta_{\widetilde G_c}=2$.

For functions with $\delta_F\le2$ over finite fields of odd characteristic,
Proposition~\ref{prop:beta-zero-pn} identifies boomerang uniformity zero with
perfect nonlinearity.  Hence
Corollary~\ref{cor:apn-beta-lower-bound} shows that boomerang uniformity one
is optimal for APN functions.

The construction begins with the sign-switch $G$.
Proposition~\ref{prop:zero-sum}, together with the two-to-one property in
Theorem~\ref{thm:CM-two-to-one}, yields the disjoint-union decomposition
\[
  \Fq^*=g(\Fq^*)\sqcup\bigl(-g(\Fq^*)\bigr).
\]
Thus $g(\Fq^*)$ contains exactly one element from each pair $\{a,-a\}$ in
$\Fq^*$.  Corollary~\ref{cor:special-preimages} gives the resulting special
preimages $g^{-1}(0)=\{0\}$, $g^{-1}(\tau)=\{\pm1\}$, and
$g^{-1}(-\tau)=\varnothing$, which are used throughout the paper.
Theorems~\ref{thm:diff-spectrum} and~\ref{thm:bct-spectrum} determine the
exact differential and boomerang spectra of $G$; in particular,
\[
  \delta_G=\beta_G=2.
\]

Every $\Fthree$-affine perturbation $G_A=G+A$ is APN and has
$\beta_{G_A}\in\{1,2\}$.  Theorem~\ref{thm:affine-criterion-spectrum} gives a
necessary and sufficient condition for $\beta_{G_A}=1$ and, in that case,
determines the exact BCT row distribution and boomerang spectrum.
Proposition~\ref{prop:admissible-lc} specializes this criterion to the linear
part $M=L_c$; adding the constant $g(c)$ gives the derivative
$D_cg=L_c+g(c)$ and hence the natural family \eqref{eq:main-family}.

We also compare the differential and boomerang spectra of $G$ with those of
power functions and Ness--Helleseth-type binomials
$x^r(1+u\chi(x))$ with $u\ne0$.  For each such function, all DDT rows indexed
by $a\ne0$ have the same value distribution, and likewise all BCT rows indexed by $a\ne0$ have the
same value distribution.  The resulting DDT and BCT entry counts are therefore
divisible by $q-1$.  In contrast,
Theorems~\ref{thm:diff-spectrum} and~\ref{thm:bct-spectrum} give
\[
  N_0^{\DDT}(G)=N_2^{\BCT}(G)=4(q-3).
\]
Theorem~\ref{thm:spectral-separation} consequently shows
that both spectra of $G$ differ from those of every function in these
two classes.  Since the differential spectrum is CCZ-invariant, the DDT
comparison yields CCZ-inequivalence.  The BCT comparison gives an additional
spectral distinction, but it is not used for
CCZ-inequivalence because the boomerang spectrum is not a CCZ invariant in
general.

The construction applies to every DO PN function over
$\F_{3^n}$ with $n>1$ odd.  Subsection~\ref{sec:known-examples} applies the
main construction to the Gold $f_1$, Ding--Yuan $f_3$,
Zha--Kyureghyan--Wang $f_4$, and Bierbrauer $f_5$ families.  Each valid family
member gives natural APN functions with boomerang uniformity one for exactly
$(q-3)/2$ parameters, together with their complete differential and
boomerang spectra.  For $g(x)=x^{10}$,
Remark~\ref{rem:gold-all-kappa} proves that all eight boomerang spectrum types
in Corollary~\ref{cor:lc-bct-spectrum} occur for every odd $n\ge9$.
Table~\ref{tab:kappa-evidence} reports exhaustive computations realizing the
six types in Corollary~\ref{cor:lc-bct-spectrum}\textup{(a)} and \textup{(b)}
for selected members of the $f_1$, $f_3$, and $f_5$ families.  Together with
the two special types in parts \textup{(c)} and \textup{(d)}, all eight types
occur for each selected function.

Finally, Theorem~\ref{thm:switch-rigidity} proves that sign-switching does not
collapse CCZ classes of the underlying DO PN functions.  More precisely, if
$G$ and $H$ are the sign-switches of $g$ and $h$, respectively, then
\[
  G\sim_{\CCZ}H
  \quad\Longrightarrow\quad
  g\sim_{\EA}h.
\]
We refer to this implication as \emph{switch-rigidity}.  Its proof combines an
intrinsic characterization of the vertical subspace $\{0\}\times\Fq$, a
CCZ-to-EA collapse for switched maps, and a theorem establishing the
uniqueness of the underlying PN function.

Since every affine perturbation constructed above is EA-equivalent to its
sign-switch,
Corollary~\ref{cor:ccz-inequivalent-associated-functions} transfers pairwise
CCZ-inequivalence of the underlying DO PN functions to associated APN
functions with boomerang uniformity one.  Using the orders of the nuclei of
the associated presemifields,
Theorem~\ref{thm:three-beta-one-classes} obtains three such functions from one
member of each of the Gold $f_1$, Ding--Yuan $f_3$, and Bierbrauer $f_5$
families.  The specialization $s=5\ell$, $t=\ell$, with odd $\ell>1$, gives
$n=15\ell$, beginning with $n=45$.

The remainder of the paper is organized as follows.
Section~\ref{sec:preliminaries} establishes the required structural facts
about DO PN functions;
Section~\ref{sec:spectra} determines the spectra of $G$, proves the comparison
results, and establishes optimality;
Section~\ref{sec:affine-perturbations} presents the affine-perturbation
construction and its applications; and
Section~\ref{sec:switch-rigidity-inequivalence} proves switch-rigidity and the
resulting CCZ-inequivalence results.
Section~\ref{sec:conclusion} concludes the paper and records directions for
further work.

\section{Preliminaries for DO PN functions}\label{sec:preliminaries}

We first record the general facts for DO PN functions that will be used
throughout the paper.  The statements through Proposition~\ref{prop:zero-sum} are
formulated for arbitrary prime powers $q=p^m$ with $q\equiv3\pmod4$, as required for
the application of Feng and Luo~\cite[Lemma~3(ii)]{FL07} below.  In this general part,
a DO polynomial over $\F_q$ means a polynomial whose monomials have exponents
$p^i+p^j$, that is,
\[
  f(x)=\sum_{0\le i\le j<m} c_{ij}x^{p^i+p^j}
\]
up to reduction modulo $x^q-x$.  From Corollary~\ref{cor:special-preimages} onward, we
return to the standing assumption $q=3^n$ with $n$ odd.

For a map $f:\Fq\to\Fq$ with $f(0)=0$, we say that $f$ is \emph{two-to-one} if
\[
  f^{-1}(0)=\{0\},
\]
and every nonzero value in $f(\Fq)$ has exactly two preimages.

\begin{theorem}[Coulter--Matthews {\cite[Theorem~3]{CM11}}]\label{thm:CM-two-to-one}
Let $q$ be an odd prime power, and let $f\in\Fq[x]$ be a DO polynomial.
Then $f$ is planar over $\Fq$ if and only if $f$ is two-to-one.  Equivalently, $f$ is planar over $\Fq$ if and only if
\[
  |f(\Fq)|=\frac{q+1}{2}.
\]
\end{theorem}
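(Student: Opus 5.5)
Since $f$ is a DO polynomial, it is even and homogeneous of degree two over $\Fp$. Define the symmetric $\Fp$-bilinear form $B(x,y):=f(x+y)-f(x)-f(y)$. Then $f(x)=\tfrac12B(x,x)$, $f(-x)=f(x)$, and $D_af(x)=B(x,a)+f(a)$. So $f$ is planar if and only if, for every $a\ne0$, the $\Fp$-linear map $x\mapsto B(x,a)$ is injective, that is, $B(x,a)=0\Rightarrow x=0$. The equivalence of the two formulations follows by counting. If $f$ is two-to-one, then $|f(\Fq)|=1+(q-1)/2=(q+1)/2$. Conversely, since $f$ is even, every nonzero $x$ shares its value with $-x\ne x$. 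So when $f^{-1}(0)=\{0\}$, each nonzero value has at least two preimages, and $|f(\Fq)|=(q+1)/2$ forces exactly two. If instead $f^{-1}(0)$ contained some $x\ne0$, the value $0$ would have at least three preimages, and $|f(\Fq)|\le(q+1)/2$ would have to be checked against the fiber structure. I would handle this by proving the two main implications directly, using the "two-to-one" form, rather than the cardinality form.

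\textbf{Planar $\Rightarrow$ two-to-one.} Suppose $f(x)=f(y)$. Write $x=u+v$ and $y=u-v$ with $u=(x+y)/2$ and $v=(x-y)/2$. Then $f(u+v)-f(u-v)=2B(u,v)$, so $B(u,v)=0$. By planarity, $u=0$ or $v=0$, hence $y=\pm x$. In particular $f(x)=0=f(0)$ forces $x=0$, and every nonzero value has exactly the two preimages $\pm x$.

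\textbf{Two-to-one $\Rightarrow$ planar.} This is the main obstacle. Suppose $B(u,v)=0$ with $u,v\ne0$. The identity above gives $f(u+v)=f(u-v)$, so two-to-one forces $u+v=\pm(u-v)$, i.e.\ $v=0$ or $u=0$, a contradiction. There is one case to watch: the identity only shows that $u+v$ and $u-v$ have equal values, and two-to-one then gives $u+v=-(u-v)$, which means $u=0$. So both subcases are contradictory and the argument closes. Hence every $x\mapsto B(x,a)$ with $a\ne0$ is injective, and $f$ is planar.

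The cardinality statement then reduces to showing that $|f(\Fq)|=(q+1)/2$ implies $f^{-1}(0)=\{0\}$. I expect this to be the genuinely delicate step, since the counting argument alone does not give it. My first attempt would be to show that for a DO polynomial the fibers over nonzero values in a given square class have equal size, using the scaling $f(\lambda x)=\lambda^2 f(x)$ for $\lambda\in\Fp^*$ together with the quadratic-form structure over $\Fp$: the fibers of $\Tr(\mu f)$ are governed by the rank of $B$. A nontrivial radical or isotropic vector would then force every nonzero fiber to have size at least two and the zero fiber to be larger. That would make $|f(\Fq)|<(q+1)/2$, which is the contradiction. If this becomes heavy, I would defer to the cited \cite[Theorem~3]{CM11} for this direction.
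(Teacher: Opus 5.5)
Your bilinear-form argument for ``planar $\Leftrightarrow$ two-to-one'' is correct and self-contained. Planarity means $x\mapsto B(x,a)$ has trivial kernel for every $a\ne0$. Combined with the identity $f(u+v)-f(u-v)=2B(u,v)$, this gives both directions exactly as you wrote. The paper itself gives no proof of this statement and simply cites \cite[Theorem~3]{CM11}, so your route is genuinely more self-contained.

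The gap is in the cardinality formulation. You leave the implication $|f(\Fq)|=(q+1)/2\Rightarrow f^{-1}(0)=\{0\}$ unproved and assert that ``the counting argument alone does not give it.'' You then sketch a quadratic-form/fiber-size argument, and finally propose to defer to \cite[Theorem~3]{CM11} for this direction, which cites the very statement you are proving. Your assessment is wrong, and the sketched detour does not work as stated. First, nonzero fibers already have size at least two by evenness, whatever the radical of $B$. Second, scaling by $\lambda\in\Fp^*$ relates only the values $c$ and $\lambda^2c$. For $p=3$ it relates nothing beyond evenness, since $\lambda^2=1$. So it cannot yield equal fiber sizes across a square class of $\Fq^*$.

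The missing step is pure counting; it is also the remark the paper makes immediately after the statement. Since $f$ is even and $p$ is odd, $f$ is constant on each of the $(q+1)/2$ orbits $\{0\}$ and $\{x,-x\}$ ($x\in\Fq^*$) of $x\mapsto -x$. Hence $|f(\Fq)|\le(q+1)/2$. Equality holds if and only if distinct orbits receive distinct values, that is, if and only if every fiber is a single orbit. That is exactly the two-to-one condition, with the fiber of $0=f(0)$ being $\{0\}$. Concretely, if $0\ne x\in f^{-1}(0)$, then the orbits $\{0\}$ and $\{x,-x\}$ share the value $0$, so $|f(\Fq)|\le(q-1)/2$. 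With this one observation your proof is complete, and no quadratic-form analysis or appeal to \cite{CM11} is needed.
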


For $t\in\Fp$, every DO polynomial $f$ satisfies $f(tx)=t^2f(x)$; in particular,
$f(0)=0$ and $f(-x)=f(x)$.  Hence
Theorem~\ref{thm:CM-two-to-one} has the following consequence: if $f$ is a
DO PN function, then
\[
  f^{-1}(0)=\{0\},
\]
and for every nonzero value $c\in f(\Fq)$ one has
\[
  \#f^{-1}(c)=2.
\]
Indeed, the domain is partitioned into the $(q+1)/2$ orbits
\[
  \{0\},\qquad \{x,-x\}\quad(x\in\Fq^*),
\]
and Theorem~\ref{thm:CM-two-to-one} says that these orbits give exactly $(q+1)/2$ distinct
values.

The following lemma is an immediate consequence of
Feng and Luo~\cite[Lemma~3(ii)]{FL07}.

\begin{lemma}\label{lem:FL-walsh}
Let $q=p^m$ be a prime power with $q\equiv3\pmod4$, and let $f$ be a
DO PN function on $\Fq$.  For $a,b\in\Fq$, define the Walsh transform of $f$ by
\[
  W_f(a,b):=\sum_{x\in\Fq}\zeta^{\Tr(bf(x)-ax)},
\]
where $\zeta=e^{2\pi i/p}$ is a primitive $p$-th root of unity and
$\Tr:\Fq\to\Fp$ is the absolute trace map.  Then, for every $b\in\Fq^*$,
\[
  W_f(0,b)^2=-q.
\]
\end{lemma}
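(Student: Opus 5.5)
The plan is to compute $W_f(0,b)^2$ directly by squaring the sum and using planarity, and to invoke Feng--Luo only to exclude the sign $+q$.

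\emph{Step 1: the modulus.} Write
\[
  |W_f(0,b)|^2=\sum_{x,y\in\Fq}\zeta^{\Tr(b(f(x)-f(y)))}.
\]
Substitute $x=y+a$ and group the terms by $a$, so that the inner sum runs over $\zeta^{\Tr(b\,D_af(y))}$. For $a\neq0$, $D_af$ is a permutation because $f$ is PN. Hence the inner sum is $\sum_{z}\zeta^{\Tr(bz)}=0$, since $b\neq0$. For $a=0$ the inner sum equals $q$, so $|W_f(0,b)|^2=q$.

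\emph{Step 2: reduction to the sign.} Since $f$ is DO, $f(-x)=f(x)$, and therefore
\[
  \overline{W_f(0,b)}=\sum_x\zeta^{-\Tr(bf(x))}=W_f(0,-b).
\]
It remains to relate $W_f(0,-b)$ to $W_f(0,b)$.

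\emph{Step 3: the relation via Feng--Luo.} This is the step where \cite[Lemma~3(ii)]{FL07} is needed. That lemma states that for $q\equiv3\pmod4$ and a quadratic (DO) planar $f$, one has $W_f(0,-b)=-W_f(0,b)$. The underlying reason is the standard form of quadratic Gauss sums: $W_f(0,b)=\varepsilon\,\eta(\cdot)\,G(\eta)$, where $\eta$ is the quadratic character and $G(\eta)^2=\eta(-1)q=-q$. Moreover $\eta(-1)=-1$, which accounts for the sign flip under $b\mapsto-b$ and for the fact that $W_f(0,b)$ is purely imaginary.

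\emph{Conclusion.} Combining the steps,
\[
  W_f(0,b)^2=-W_f(0,b)\,W_f(0,-b)=-|W_f(0,b)|^2=-q.
\]

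The only genuine obstacle is the sign determination in Step~3. If one wanted to avoid citing \cite[Lemma~3(ii)]{FL07}, the route would be to diagonalize the quadratic form $\Tr(bf(x))$ over $\Fp$, which is nondegenerate by planarity. The sum then factors as $\prod_i G_p(\eta(d_i))$, a product of $m$ Gauss sums over $\Fp$, and its square is $\prod_i\eta_p(-1)p=\eta_p(-1)^m q$. Here $\eta_p(-1)^m=-1$ exactly because $q\equiv3\pmod4$, i.e.\ because $p\equiv3\pmod4$ and $m$ is odd.
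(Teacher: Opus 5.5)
Your proof is correct, but it does not follow the paper's route. The paper gives no argument of its own; it declares the identity an immediate consequence of \cite[Lemma~3(ii)]{FL07}, presumably read as the evaluation $W_f(0,b)=\pm(\sqrt{p^*})^m$ with $p^*=(-1)^{(p-1)/2}p$, whose square is $(p^*)^m=-q$.

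The real content of your proposal is its last paragraph, and it stands on its own. For $x\neq0$, the polar form $y\mapsto\Tr\bigl(b(f(x+y)-f(x)-f(y))\bigr)$ is a nonzero functional, because $y\mapsto f(x+y)-f(x)-f(y)$ is bijective by planarity. Hence $\Tr(bf(x))$ is a nondegenerate quadratic form over $\Fp$; you should state this check explicitly. Diagonalizing gives $W_f(0,b)=\prod_{i=1}^m\eta_p(d_i)G_p$ with $G_p=\sum_{y\in\Fp}\zeta^{y^2}$ and $G_p^2=\eta_p(-1)p$, so $W_f(0,b)^2=(\eta_p(-1)p)^m=-q$.

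This makes Steps 1--3 unnecessary. It also proves the antisymmetry $W_f(0,-b)=-W_f(0,b)$ that you attribute to Feng--Luo in Step~3: $b\mapsto-b$ replaces every $d_i$ by $-d_i$ and so multiplies $W_f(0,b)$ by $\eta_p(-1)^m=-1$. As written, Step~3 rests only on that attribution and a heuristic Gauss-sum remark, so derive it this way or drop it. In Step~2, the identity $\overline{W_f(0,b)}=W_f(0,-b)$ holds for every $f$, so evenness plays no role there.

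What your route buys over the bare citation is transparency. Step~1 shows that $|W_f(0,b)|^2=q$ uses only planarity, so the DO hypothesis and $q\equiv3\pmod4$ matter only for the phase. The diagonalization shows that the sign is exactly $\eta_p(-1)^m$, which is $-1$ precisely because $p\equiv3\pmod4$ and $m$ is odd. The paper's citation is shorter but leaves all of this inside the external reference.
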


\begin{proposition}\label{prop:zero-sum}
Let $q=p^m$ be a prime power with $q\equiv3\pmod4$, and let $f$ be a
DO PN function on $\Fq$.  Then
\[
  f(x)+f(y)=0
  \qquad\Longrightarrow\qquad
  x=y=0.
\]
\end{proposition}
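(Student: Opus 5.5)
The plan is a counting argument via the Walsh transform from Lemma~\ref{lem:FL-walsh}. Let
\[
  N:=\#\{(x,y)\in\Fq^2: f(x)+f(y)=0\}.
\]
The pair $(0,0)$ always lies in this set, so it suffices to show $N=1$.

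First I will express $N$ with additive characters:
\[
  N=\frac1q\sum_{b\in\Fq}\sum_{x,y}\zeta^{\Tr(b(f(x)+f(y)))}
   =\frac1q\sum_{b\in\Fq}W_f(0,b)^2 .
\]
Here $W_f(0,b)=\sum_x\zeta^{\Tr(bf(x))}$, and the double sum factors as its square. The term $b=0$ contributes $q^2$. For each $b\in\Fq^*$, Lemma~\ref{lem:FL-walsh} gives $W_f(0,b)^2=-q$. Hence
\[
  N=\frac1q\bigl(q^2+(q-1)(-q)\bigr)=\frac1q\cdot q=1.
\]
So $(0,0)$ is the only solution, which is exactly the claim.

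I don't expect a real obstacle, since the hypothesis $q\equiv3\pmod4$ is already built into Lemma~\ref{lem:FL-walsh}. The two points to verify are that the sign convention $\Tr(bf(x)-ax)$ with $a=0$ matches the character sum above, and that the factorization into $W_f(0,b)^2$ is valid.

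As a sanity check there is an alternative route. The two-to-one property from Theorem~\ref{thm:CM-two-to-one} shows that $f(\Fq^*)$ has $(q-1)/2$ values. The proposition then says that this image meets its negative trivially, so it picks one element from each pair $\{a,-a\}$, which is consistent with the decomposition stated in the introduction. This route does not prove the result by itself, so the Walsh count remains the argument.
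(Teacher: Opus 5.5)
Your proposal is correct and matches the paper's proof: both count the solutions of $f(x)+f(y)=0$ by summing $W_f(0,b)^2$ over $b\in\Fq$, getting $q\cdot|S|$, and evaluate the same sum as $q^2+(q-1)(-q)=q$ using Lemma~\ref{lem:FL-walsh}, which forces $|S|=1$. Your sign-convention check (setting $a=0$ in $\Tr(bf(x)-ax)$) and the factorization of the double sum are both fine.
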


\begin{proof}
Let
\[
  S:=\{(x,y)\in\Fq\times\Fq:f(x)+f(y)=0\}.
\]
Clearly $(0,0)\in S$.  For $b\in\Fq$,
\[
  W_f(0,b)^2
   =\sum_{x\in\Fq}\zeta^{\Tr(bf(x))}
      \sum_{y\in\Fq}\zeta^{\Tr(bf(y))}
   =\sum_{x,y\in\Fq}\zeta^{\Tr(b(f(x)+f(y)))}.
\]
Summing over $b\in\Fq$ and using the orthogonality of additive characters gives
\[
\begin{aligned}
  \sum_{b\in\Fq}W_f(0,b)^2
   &=\sum_{(x,y)\in S}\sum_{b\in\Fq}1
     +\sum_{(x,y)\notin S}\sum_{b\in\Fq}\zeta^{\Tr(b(f(x)+f(y)))}  \\
   &=|S|q.
\end{aligned}
\]
On the other hand, by Lemma~\ref{lem:FL-walsh},
\[
\begin{aligned}
  \sum_{b\in\Fq}W_f(0,b)^2
   &=W_f(0,0)^2+\sum_{b\in\Fq^*}W_f(0,b)^2 \\
   &=q^2+(q-1)(-q)=q.
\end{aligned}
\]
Thus $|S|=1$.  Since $(0,0)\in S$, we get $S=\{(0,0)\}$.

\end{proof}

We next combine Theorem~\ref{thm:CM-two-to-one} and
Proposition~\ref{prop:zero-sum}.  The two-to-one property gives
$f^{-1}(0)=\{0\}$ and
\[
  \bigl|f(\Fq^*)\bigr|=\frac{q-1}{2},
\]
whereas Proposition~\ref{prop:zero-sum} gives
\[
  f(\Fq^*)\cap\bigl(-f(\Fq^*)\bigr)=\varnothing.
\]
Since these two subsets of $\Fq^*$ are disjoint and each has $(q-1)/2$
elements, they give the disjoint-union decomposition
\begin{equation}\label{eq:nonzero-image-disjoint-union}
  \Fq^*=f(\Fq^*)\sqcup\bigl(-f(\Fq^*)\bigr).
\end{equation}
Equivalently, the partition \eqref{eq:nonzero-image-disjoint-union} says that
$f(\Fq^*)$ contains exactly one element from each pair $\{a,-a\}$ in
$\Fq^*$.

We now apply the disjoint-union decomposition
\eqref{eq:nonzero-image-disjoint-union} with $q=3^n$ and $f=g$.

\begin{corollary}\label{cor:special-preimages}
Let $q=3^n$ with $n$ odd, and let $g$ be a DO PN function on $\Fq$.  Put
$\tau=g(1)$.  Then $\tau\ne0$, and
\[
  g^{-1}(0)=\{0\},\qquad
  g^{-1}(\tau)=\{\pm1\},\qquad
  g^{-1}(-\tau)=\varnothing.
\]
Consequently, $g$ attains none of $0,\pm\tau$ on
$\Fq\setminus\Fthree$:
\[
  g(\Fq\setminus\Fthree)\cap \tau\Fthree=\varnothing,
  \qquad
  \tau\Fthree:=\{0,\tau,-\tau\}.
\]
\end{corollary}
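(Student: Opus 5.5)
The plan is to read everything off the two-to-one property (Theorem~\ref{thm:CM-two-to-one}) and the disjoint-union decomposition \eqref{eq:nonzero-image-disjoint-union} applied with $f=g$. Since $q=3^n$ with $n$ odd, we have $q\equiv 3\pmod 4$, so Proposition~\ref{prop:zero-sum} and hence \eqref{eq:nonzero-image-disjoint-union} are available.

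\emph{Step 1: $g^{-1}(0)=\{0\}$ and $\tau\ne0$.} The first equality is part of the two-to-one property recorded after Theorem~\ref{thm:CM-two-to-one}. Since $1\ne0$, it gives $\tau=g(1)\ne0$.

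\emph{Step 2: $g^{-1}(\tau)=\{\pm1\}$.} The scaling rule $g(tx)=t^2g(x)$ for $t\in\Fthree$ gives $g(-1)=g(1)=\tau$, so $\{\pm1\}\subseteq g^{-1}(\tau)$. Because $1\ne-1$ in characteristic $3$, these are two distinct preimages. The nonzero value $\tau$ has exactly two preimages, so equality holds.

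\emph{Step 3: $g^{-1}(-\tau)=\varnothing$.} Suppose $g(x)=-\tau$ for some $x$. Then $x\ne0$ by Step~1, so $-\tau\in g(\Fq^*)$. But $\tau\in g(\Fq^*)$ means $-\tau\in -g(\Fq^*)$, which contradicts the disjointness in \eqref{eq:nonzero-image-disjoint-union}. Alternatively, $g(x)+g(1)=0$ contradicts Proposition~\ref{prop:zero-sum} directly, since $1\ne0$.

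\emph{Consequence.} The three preimages computed above all lie in $\Fthree$. Hence no $x\in\Fq\setminus\Fthree$ satisfies $g(x)\in\{0,\tau,-\tau\}=\tau\Fthree$, which is the final assertion.

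There is no real obstacle here. The only points needing care are that $q\equiv3\pmod4$ holds precisely because $n$ is odd, and that $\pm1$ are distinct.
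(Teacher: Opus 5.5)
Your proof is correct and follows essentially the same route as the paper: the two-to-one property gives $g^{-1}(\tau)=\{\pm1\}$, and the disjointness $g(\Fq^*)\cap\bigl(-g(\Fq^*)\bigr)=\varnothing$ gives $g^{-1}(-\tau)=\varnothing$, from which the final assertion follows. The only difference is cosmetic: the paper obtains $g^{-1}(0)=\{0\}$ from Proposition~\ref{prop:zero-sum} with one input equal to zero, whereas you take it directly from the two-to-one property, and both are valid.
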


\begin{proof}
Since $3^n\equiv3\pmod4$ for odd $n$, the disjoint-union decomposition
\eqref{eq:nonzero-image-disjoint-union} applies to $g$.
Proposition~\ref{prop:zero-sum}, with one input equal to zero, gives
$g^{-1}(0)=\{0\}$, so $\tau=g(1)\ne0$.  Also
$g(1)=g(-1)=\tau$, and the two-to-one property gives
\[
  g^{-1}(\tau)=\{\pm1\}.
\]
Since $\tau\in g(\Fq^*)$ and
$g(\Fq^*)\cap(-g(\Fq^*))=\varnothing$, one has
$-\tau\notin g(\Fq^*)$.  Hence $g^{-1}(-\tau)=\varnothing$.

It remains to prove
$g(\Fq\setminus\Fthree)\cap\tau\Fthree=\varnothing$.  If
$x\notin\Fthree$ and $g(x)\in\tau\Fthree$, then $g(x)$ is one of
$0,\tau,-\tau$, contradicting the three preimage statements just proved.
\end{proof}

The preimage statements in Corollary~\ref{cor:special-preimages} will be used
in the proofs of the exact differential spectrum
(Theorem~\ref{thm:diff-spectrum}), the exact boomerang spectrum
(Theorem~\ref{thm:bct-spectrum}), and the affine criterion
(Theorem~\ref{thm:affine-criterion-spectrum}).

Recall that $L_a$ is the linearized derivative of the fixed
DO polynomial $g$.  Since $g(0)=0$, one has
\[
  D_ag(x)=L_a(x)+g(a).
\]
Moreover, the map $(a,x)\mapsto L_a(x)$ is symmetric and $\Fthree$-bilinear.  In particular,
\[
  L_a(x)=L_x(a),
\]
and, for each fixed $a$, $L_a$ is $\Fthree$-linear in $x$.

\begin{lemma}\label{lem:L1-image}
One has
\[
  L_1(\Fthree)=\tau\Fthree.
\]
\end{lemma}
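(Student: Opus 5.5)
The plan is to compute $L_1(1)$ directly and then use $\Fthree$-linearity of $L_1$. By definition,
\[
  L_1(x)=g(x+1)-g(x)-g(1).
\]
Evaluating at $x=1$ gives
\[
  L_1(1)=g(2)-2g(1)=g(-1)-2\tau=\tau-2\tau=-\tau.
\]
Here I use $g(-1)=g(1)=\tau$, which holds because $g$ is DO, hence even (the relation $g(tx)=t^2g(x)$ for $t\in\Fthree$ recorded above).

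Alternatively, I could use the DO identity $g(tx)=t^2g(x)$ with $t=2$, so $g(2)=4g(1)=\tau$ in characteristic $3$. The same value $L_1(1)=-\tau$ results; equivalently $L_1(1)=2g(1)=-\tau$, the familiar $L_a(a)=2g(a)$ for a quadratic form.

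Next I will use that $L_1$ is $\Fthree$-linear, as noted just before the lemma. Then
\[
  L_1(0)=0,\qquad L_1(1)=-\tau,\qquad L_1(-1)=\tau,
\]
so $L_1(\Fthree)=\{0,-\tau,\tau\}=\tau\Fthree$.

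The only genuine input is that these three values are distinct, so that the image really is the whole set $\tau\Fthree$. This holds because $\tau\ne0$ by Corollary~\ref{cor:special-preimages}; in any case, the set $\{0,\pm\tau\}$ equals $\tau\Fthree$ as a set regardless. There is no real obstacle here: the computation is immediate once evenness of $g$ and linearity of $L_1$ are available.
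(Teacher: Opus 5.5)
Your proof is correct and is essentially the paper's argument: the paper computes $L_1(t)=\bigl((1+t)^2-1-t^2\bigr)\tau=-t\tau$ for every $t\in\Fthree$ at once via the homogeneity $g(tx)=t^2g(x)$, whereas you compute $L_1(1)=-\tau$ and extend by $\Fthree$-linearity. As you already observe, distinctness of $0,\pm\tau$ is not needed for the set equality, so $\tau\ne0$ plays no role.
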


\begin{proof}
For $t\in\Fthree$, the homogeneity of $g$ gives
\[
  L_1(t)=g(1+t)-g(1)-g(t)
  =\bigl((1+t)^2-1-t^2\bigr)\tau=2t\tau=-t\tau.
\]
Hence $L_1(\Fthree)=\tau\Fthree$.
\end{proof}

With $\epsilon$ and $G$ as in \eqref{eq:sign-switch}, we shall use the
following directional-derivative notation.  For $a,x\in\Fq$, put
\begin{equation}\label{eq:switching-derivative}
\begin{aligned}
  \epsilon_a(x)&:=D_a\epsilon(x)=\epsilon(x+a)-\epsilon(x),\\
  D_aG(x)&=D_ag(x)+\tau\epsilon_a(x)
          =L_a(x)+g(a)+\tau\epsilon_a(x).
\end{aligned}
\end{equation}

\begin{lemma}\label{lem:quadratic-identities}
For the fixed DO polynomial $g$, the following identities hold
for all $x,y\in\Fq$:
\[
  g(x+y)+g(x-y)=-g(x)-g(y),
\]
and
\[
  g(x+y)-g(x-y)=-L_y(x)=-L_x(y).
\]
\end{lemma}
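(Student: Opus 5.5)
The plan is to use only two facts. First, $g$ is a quadratic form over $\Fthree$: the map $(a,x)\mapsto L_a(x)$ is symmetric and $\Fthree$-bilinear. Second, $g$ is homogeneous of degree two: $g(tx)=t^2g(x)$ for $t\in\Fthree$.

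The first step is to expand via the linearized derivative. By definition,
\[
  g(x+y)=g(x)+g(y)+L_y(x),
\]
and applying this with $-y$ in place of $y$ gives
\[
  g(x-y)=g(x)+g(-y)+L_{-y}(x)=g(x)+g(y)-L_y(x).
\]
Here we used $g(-y)=g(y)$ and the $\Fthree$-linearity of $a\mapsto L_a(x)$, which comes from bilinearity together with the symmetry $L_a(x)=L_x(a)$. This gives $L_{-y}(x)=-L_y(x)$.

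The second step is to add and subtract these two expansions. Adding them gives
\[
  g(x+y)+g(x-y)=2g(x)+2g(y)=-g(x)-g(y),
\]
since $2=-1$ in characteristic $3$. Subtracting them gives
\[
  g(x+y)-g(x-y)=2L_y(x)=-L_y(x).
\]
Finally, the symmetry $L_y(x)=L_x(y)$ yields the last equality.

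I do not expect any real obstacle. The only point requiring care is the justification of $L_{-y}(x)=-L_y(x)$. I would derive it directly from the definition, $L_{-y}(x)=L_x(-y)=-L_x(y)=-L_y(x)$, using the $\Fthree$-linearity of $L_x$ in its argument stated just before Lemma~\ref{lem:L1-image}.
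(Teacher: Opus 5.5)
Your proposal is correct and follows essentially the same route as the paper: expand $g(x\pm y)=g(x)+g(y)\pm L_y(x)$ using evenness of $g$ and $L_{-y}(x)=-L_y(x)$, then add and subtract, with $2=-1$ in characteristic $3$. The last equality then follows from the symmetry $L_y(x)=L_x(y)$. Your justification of $L_{-y}(x)=-L_y(x)$ via symmetry and linearity of $L_x$ is equivalent to the paper's appeal to linearity of $L$ in the subscript.
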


\begin{proof}
By the definition of $L$, the evenness $g(-y)=g(y)$, and the $\Fthree$-linearity of
$L$ in the subscript,
\[
  g(x+y)=g(x)+g(y)+L_y(x),
  \qquad
  g(x-y)=g(x)+g(y)-L_y(x).
\]
Adding and subtracting these two identities gives the result.  The last
equality follows from the symmetry $L_y(x)=L_x(y)$.
\end{proof}

\subsection*{Equivalence conventions and the differential spectrum}

For later use, we briefly recall some standard facts about CCZ equivalence
and the CCZ-invariance of the differential spectrum.

For a function $F:\Fq\to\Fq$, write
\[
  \mathcal G_F:=\{(x,F(x)):x\in\Fq\}\subseteq\Fq\times\Fq
\]
for its graph.  For $v=(a,b)\in\Fq\times\Fq$, we also write
\[
  \delta_F(v):=\delta_F(a,b).
\]
With this notation, the DDT entry $\delta_F(v)$ has the graph-intersection form
\[
  \delta_F(v)
  =|\mathcal G_F\cap(\mathcal G_F-v)|
  =|\mathcal G_F\cap(\mathcal G_F+v)|.
\]
Thus $\delta_F(v)=0$ if and only if the two sets $\mathcal G_F$ and
$\mathcal G_F-v$ are disjoint.

Two functions $F,F':\Fq\to\Fq$ are CCZ-equivalent in the sense of Carlet,
Charpin, and Zinoviev~\cite{CCZ98}, written $F\sim_{\CCZ}F'$, if there exists
an affine automorphism $\mathcal A(P)=\mathcal M(P)+\mathbf c$ of the
$\Fthree$-vector space $\Fq\times\Fq$ such that
$\mathcal A(\mathcal G_F)=\mathcal G_{F'}$.
They are EA-equivalent, written $F\sim_{\EA}F'$, if there exist invertible
$\Fthree$-linear maps $M_1,M_2:\Fq\to\Fq$, an $\Fthree$-linear map
$B:\Fq\to\Fq$, and constants $a_0,b_0\in\Fq$ such that
\[
  F'(M_1x+a_0)=M_2F(x)+B(x)+b_0
  \qquad(x\in\Fq).
\]

\begin{lemma}\label{lem:ddt-entry-transport}
Suppose $\mathcal A(P)=\mathcal M(P)+\mathbf c$ sends $\mathcal G_F$ to
$\mathcal G_{F'}$. Then
\[
  \delta_F(v)=\delta_{F'}(\mathcal Mv)
  \qquad(v\in\Fq\times\Fq).
\]
In particular, if $\delta_F(v)=0$, then $\delta_{F'}(\mathcal Mv)=0$.
\end{lemma}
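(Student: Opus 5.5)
The plan is to use the graph-intersection form $\delta_F(v)=|\mathcal G_F\cap(\mathcal G_F+v)|$ recorded just before the statement, and to show that the affine map $\mathcal A$ carries the set counted on the left bijectively onto the set counted on the right with $v$ replaced by $\mathcal Mv$. The first step is to check the shift identity
\[
  \mathcal A(S+v)=\mathcal M(S)+\mathcal Mv+\mathbf c=\mathcal A(S)+\mathcal Mv
  \qquad(S\subseteq\Fq\times\Fq,\ v\in\Fq\times\Fq),
\]
which follows directly from the linearity of $\mathcal M$.

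Next, since $\mathcal A$ is a bijection of $\Fq\times\Fq$, it commutes with intersections. Using $\mathcal A(\mathcal G_F)=\mathcal G_{F'}$ together with the shift identity, we get
\[
  \mathcal A\bigl(\mathcal G_F\cap(\mathcal G_F+v)\bigr)
  =\mathcal A(\mathcal G_F)\cap\mathcal A(\mathcal G_F+v)
  =\mathcal G_{F'}\cap(\mathcal G_{F'}+\mathcal Mv).
\]
Because $\mathcal A$ is injective, the two intersections have the same cardinality. By the graph-intersection form, this says exactly $\delta_F(v)=\delta_{F'}(\mathcal Mv)$.

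The "in particular" clause is the special case $\delta_F(v)=0$ of this equality. The only point that needs care is the identification of $|\mathcal G_F\cap(\mathcal G_F+v)|$ with $\delta_F(a,b)$ for $v=(a,b)$. This identification holds for all $v$, including $a=0$, where both sides equal $q$ if $b=0$ and $0$ otherwise. Since the identity is already recorded in the text, I expect no real obstacle here.
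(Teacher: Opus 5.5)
Your proposal is correct and follows essentially the same argument as the paper: the affine bijection $\mathcal A$ restricts to a bijection between the two graph intersections. The only difference is that the paper writes the intersections as $\mathcal G_F\cap(\mathcal G_F-v)$, while you use the equivalent form $\mathcal G_F\cap(\mathcal G_F+v)$; both forms are recorded in the text just before the lemma.
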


\begin{proof}
For $v\in\Fq\times\Fq$, the affine bijection $\mathcal A$ restricts to a bijection
\[
  \mathcal A:\ \mathcal G_F\cap(\mathcal G_F-v)
  \xrightarrow{\sim}
  \mathcal G_{F'}\cap(\mathcal G_{F'}-\mathcal Mv).
\]
Taking cardinalities gives the claim.
\end{proof}

\begin{lemma}
\label{lem:ccz-global-ddt-spectrum}
If $F,F':\Fq\to\Fq$ are CCZ-equivalent, then
\[
  \mathcal S_{\DDT}(F)=\mathcal S_{\DDT}(F').
\]
\end{lemma}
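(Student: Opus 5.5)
The plan is to deduce the statement directly from Lemma~\ref{lem:ddt-entry-transport}, by showing that the linear part $\mathcal M$ of the CCZ map induces a bijection on the index set of the DDT spectrum. Let $\mathcal A(P)=\mathcal M(P)+\mathbf c$ be an affine automorphism of $\Fq\times\Fq$ with $\mathcal A(\mathcal G_F)=\mathcal G_{F'}$. Since $\mathcal M$ is an $\Fthree$-linear automorphism of $\Fq\times\Fq$, it permutes $\Fq\times\Fq$. By Lemma~\ref{lem:ddt-entry-transport}, for every $i\ge0$ it maps the level set
\[
  \{v\in\Fq\times\Fq:\delta_F(v)=i\}
\]
bijectively onto
\[
  \{w\in\Fq\times\Fq:\delta_{F'}(w)=i\}.
\]
So the full multisets $\{\!\{\delta_F(v):v\in\Fq\times\Fq\}\!\}$ and $\{\!\{\delta_{F'}(w):w\in\Fq\times\Fq\}\!\}$ coincide.

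The only point needing care is that $\mathcal S_{\DDT}$ counts only pairs $(a,b)$ with $a\ne0$. So I must show that $\mathcal M$ preserves the complement of the vertical subspace $V:=\{0\}\times\Fq$. I would avoid proving $\mathcal M(V)=V$, which may fail for general CCZ maps. Instead I would remove $V$ by a counting argument.

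Every function satisfies
\[
  \delta_F(0,0)=q,\qquad \delta_F(0,b)=0\quad(b\ne0).
\]
Hence the restricted multiset over $V$ is $\{\!\{q,0^{q-1}\}\!\}$, the same for $F$ and $F'$. Subtracting this common sub-multiset from the equal full multisets gives equality of the multisets of entries indexed by $a\ne0$. Concretely,
\[
  N_i^{\DDT}(F)=N_i^{\DDT}(F')
  \qquad\text{for all } i.
\]
This holds because, for each $i$, the total count over $\Fq\times\Fq$ agrees for $F$ and $F'$, and so does the count over $V$.

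This gives $\mathcal S_{\DDT}(F)=\mathcal S_{\DDT}(F')$, including the maximal values, i.e.\ $\delta_F=\delta_{F'}$. The only subtle step is the one just described: handling the vertical subspace by subtraction rather than by claiming $\mathcal M$ preserves it. The rest is immediate from Lemma~\ref{lem:ddt-entry-transport}.
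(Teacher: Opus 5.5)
Your proof is correct and follows the paper's approach: transport the DDT entries via Lemma~\ref{lem:ddt-entry-transport} to get equality of the full multisets of entries, then strip off the contribution of $\{0\}\times\Fq$, which is identical for every function, by counting rather than by showing that $\mathcal M$ preserves the vertical subspace. The only cosmetic difference is that the paper works over nonzero $v$ and removes $q-1$ zeros, whereas you include $v=(0,0)$ and remove $\{\!\{q,0^{q-1}\}\!\}$.
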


\begin{proof}
By Lemma~\ref{lem:ddt-entry-transport}, the multiset of
$\delta_F(v)$ over all nonzero $v\in\Fq\times\Fq$ is CCZ-invariant.
Since $\delta_F(0,b)=0$ for every $b\in\Fq^*$, this multiset is obtained
from $\mathcal S_{\DDT}(F)$ by adjoining $q-1$ zeros, and the claim follows.
\end{proof}

\section{Exact differential and boomerang spectra and comparison}\label{sec:spectra}

For a map $F:\Fq\to\Fq$, a fixed $a\in\Fq^*$, and $b\in\Fq$, the
\emph{fiber of $D_aF$ over $b$} is
\[
  (D_aF)^{-1}(b)=\{x\in\Fq:D_aF(x)=b\}.
\]
In what follows, the unqualified term \emph{fiber} always means a fiber of a
derivative.

\subsection{The exact differential spectrum}\label{sec:differential}

We now determine the differential rows of $G$.  We begin with a small symmetry lemma which
will be used repeatedly.  Recall that a map $f:\Fq\to\Fq$ is called \emph{even} if
$f(-x)=f(x)$ for all $x\in\Fq$.

\begin{lemma}\label{lem:even-derivative}
Let $f:\Fq\to\Fq$ be even.  For $a\in\Fq$, let
$\iota_a:\Fq\to\Fq$ be the involution defined by
\[
  \iota_a(x):=-x-a.
\]
Then, for every $x\in\Fq$,
\[
  D_{-a}f(x)=D_af(-x),
  \qquad
  D_af(\iota_a(x))=-D_af(x).
\]
Consequently, for every $b\in\Fq$, the involution $\iota_a$ maps the
fiber of $D_af$ over $b$ bijectively onto the fiber over $-b$.
Moreover, if $D_af(x)=D_af(y)$, then
\[
  f(\iota_a(y))-f(\iota_a(x))=f(y)-f(x).
\]
\end{lemma}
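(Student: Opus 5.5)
The plan is to verify each identity by direct substitution, using only the evenness $f(-z)=f(z)$ and the definition $D_af(x)=f(x+a)-f(x)$.

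First, for the identity $D_{-a}f(x)=D_af(-x)$, I write out both sides:
\[
  D_{-a}f(x)=f(x-a)-f(x)=f(-x+a)-f(-x)=D_af(-x),
\]
where the middle step applies evenness to each term.

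Second, for the antisymmetry under $\iota_a$, note that $\iota_a(x)+a=-x$ and $\iota_a(x)=-(x+a)$. Hence
\[
  D_af(\iota_a(x))=f(-x)-f(-(x+a))=f(x)-f(x+a)=-D_af(x).
\]
The map $\iota_a$ is an involution, since $\iota_a(\iota_a(x))=-(-x-a)-a=x$. So it is a bijection of $\Fq$. By the antisymmetry, it sends $(D_af)^{-1}(b)$ into $(D_af)^{-1}(-b)$, and by the same argument it sends $(D_af)^{-1}(-b)$ back into $(D_af)^{-1}(b)$. Since it is its own inverse, these two restrictions are mutually inverse, which gives the bijection of fibers.

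Third, for the last claim, I compute
\[
  f(\iota_a(y))-f(\iota_a(x))=f(y+a)-f(x+a),
\]
again by evenness. Subtracting $f(y)-f(x)$ from the right-hand side leaves $D_af(y)-D_af(x)$, which is $0$ by hypothesis. This gives the desired equality.

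None of these steps is a real obstacle. The only point needing care is the bookkeeping of signs in $\iota_a(x)+a=-x$, together with the observation that the involution property is what upgrades "maps into" to a bijection between the two fibers.
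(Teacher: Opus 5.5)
Your proposal is correct and matches the paper's proof: each identity is checked by direct substitution using evenness. The fiber bijection follows, as in the paper, from the antisymmetry $D_af(\iota_a(x))=-D_af(x)$ together with $\iota_a$ being an involution.
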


\begin{proof}
By the evenness of $f$,
\begin{align*}
  D_{-a}f(x)
   &=f(x-a)-f(x)
     =f(-x+a)-f(-x)
     =D_af(-x),\\
  D_af(\iota_a(x))
   &=f(-x)-f(-x-a)
     =f(x)-f(x+a)
     =-D_af(x).
\end{align*}
The fiber assertion follows from the second identity and the fact that
$\iota_a$ is an involution.  Finally, if $D_af(x)=D_af(y)$, then, by the
evenness of $f$,
\[
  f(\iota_a(y))-f(\iota_a(x))
  =f(y+a)-f(x+a)
  =f(y)-f(x).
\]
Here the last equality follows from $D_af(x)=D_af(y)$.
\end{proof}

The functions $g$, $G$, and $\epsilon$ in~\eqref{eq:sign-switch} are even.  Hence Lemma~\ref{lem:even-derivative}
applies to all three of them.

For each $a\in\Fq^*$, define the \emph{exceptional input set} for the
sign-switch $g\mapsto G$ by
\[
  E_a:=\{x\in\Fq:D_aG(x)\ne D_ag(x)\}
      =\{x\in\Fq:\epsilon_a(x)\ne0\}.
\]
The equality follows from \eqref{eq:switching-derivative}, since $\tau\ne0$.
Thus $E_a$ is a subset of the domain.  Its images $D_ag(E_a)$ and
$D_aG(E_a)$, which are subsets of the codomain, will be called the
\emph{old} and \emph{new exceptional derivative-value sets}, respectively.
Here ``old'' and ``new'' always refer to the derivatives before and after the
sign-switch $g\mapsto G$.  By definition, $D_aG=D_ag$ on
$\Fq\setminus E_a$.

For a DDT or BCT row, we use the same exponent notation for its value
distribution: the row has type $0^{m_0}1^{m_1}\cdots$ if it contains exactly
$m_i$ entries equal to $i$; terms with $m_i=0$ are omitted.

\begin{theorem}[Exact differential spectrum]\label{thm:diff-spectrum}
Let $q=3^n$ with $n>1$ odd, let $g:\Fq\to\Fq$ be a
DO PN function, put $\tau=g(1)$, and let $G$ be its
sign-switch:
\[
  G(x)=
  \begin{cases}
    -g(x), & x\in\Fthree,\\
     g(x), & x\in\Fq\setminus\Fthree.
  \end{cases}
\]
Then the following hold.
\begin{enumerate}[label=\textup{(\roman*)}]
\item If $a\in\Fthree^*$, then $D_aG$ is a permutation of $\Fq$.
Equivalently, the $a$-row of the DDT has type
\[
  1^q.
\]
\item If $a\in\Fq^*\setminus\Fthree$, then the $a$-row of the DDT has type
\[
  0^4 1^{q-8}2^4.
\]
\end{enumerate}
Thus, among the DDT rows indexed by $a\in\Fq^*$, two have type $1^q$ and
$q-3$ have type $0^4 1^{q-8}2^4$.
Consequently, $G$ is APN and its differential spectrum is
\[
  \mathcal S_{\DDT}(G)
  =
  \{\!\{\,
    0^{4(q-3)},\,
    1^{\,2q+(q-3)(q-8)},\,
    2^{4(q-3)}
  \,\}\!\}.
\]
\end{theorem}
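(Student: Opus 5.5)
We will compare $D_aG$ with the permutation $D_ag$. By \eqref{eq:switching-derivative}, $D_aG=D_ag+\tau\epsilon_a$, and $D_ag$ is a permutation of $\Fq$ because $g$ is PN. Hence the $a$-row of the DDT of $G$ is determined by the exceptional set $E_a$ together with the old and new exceptional derivative-value sets. If $D_aG$ takes the new values at $|E_a|$ distinct points, and these values are pairwise distinct and avoid the old exceptional values, then each old value is lost and each new value is hit twice. The row then has type $0^{|E_a|}1^{q-2|E_a|}2^{|E_a|}$. So we first compute $E_a$ and $\epsilon_a$ explicitly, and then check these distinctness conditions using Corollary~\ref{cor:special-preimages}, Lemma~\ref{lem:quadratic-identities} and Lemma~\ref{lem:even-derivative}.

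For (i), it suffices to treat $a=1$. Indeed, $D_{-1}G(x)=D_1G(-x)$ by Lemma~\ref{lem:even-derivative}, since $G$ is even. A direct check gives $\epsilon_1(0)=1$, $\epsilon_1(-1)=-1$ and $\epsilon_1=0$ elsewhere, so $E_1=\{0,-1\}$. The old values are $D_1g(0)=\tau$ and $D_1g(-1)=-\tau$. The new values are $\tau+\tau=-\tau$ and $-\tau-\tau=\tau$. The two values are merely swapped, so $D_1G$ is again a permutation and the row has type $1^q$.

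For (ii), let $a\notin\Fthree$. Then $x$ and $x+a$ cannot both lie in $\Fthree$, so $E_a=\{1,-1,1-a,-1-a\}$, and these four points are distinct. Moreover, $\epsilon_a=-1$ at $\pm1$ and $\epsilon_a=+1$ at $\pm1-a$. Put $u=D_ag(1)=g(a+1)-\tau$ and $w=D_ag(-1)=g(a-1)-\tau$. The involution $\iota_a$ maps $1\mapsto-1-a$ and $-1\mapsto 1-a$, so Lemma~\ref{lem:even-derivative} shows that the old values are $\{u,w,-u,-w\}$. The new values are $\{u-\tau,\,w-\tau,\,\tau-u,\,\tau-w\}$. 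Each required condition reduces, through Lemma~\ref{lem:quadratic-identities}, to one of a few facts. The first identity there gives $u+w=-g(a)$. The second gives $u-w=-L_1(a)$. We also use that $g$ avoids $0$ and $\pm\tau$ off $\Fthree$, that $g^{-1}(\tau)=\{\pm1\}$, and that $-\tau\notin g(\Fq)$.
\begin{itemize}
\item Old values distinct: $u\ne\pm w$ and $u,w\ne0$, because $u+w=-g(a)\ne0$, $u-w=-L_1(a)\ne0$ by injectivity of $L_1$, and $g(a\pm1)\ne\tau$.
\item New values distinct: $u\ne\tau$, because $g(a+1)\ne-\tau$; and $u+w\ne-\tau$, because $g(a)\ne\tau$.
\item New values avoid old values: $u-\tau\ne -u$ is equivalent to $u\ne-\tau$, i.e.\ $g(a+1)\ne0$; $u-\tau\ne w$ is equivalent to $L_1(a)\ne -\tau=L_1(1)$, i.e.\ $a\ne1$; and $u-\tau\ne -w$ is equivalent to $g(a)\ne-\tau$. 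The remaining cases follow by exchanging $u$ and $w$ and by negation.
\end{itemize}
Once these hold, the row has type $0^41^{q-8}2^4$.

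The final count is then immediate: there are two rows of type $1^q$ and $q-3$ rows of type $0^41^{q-8}2^4$, which gives the stated multiset and shows that $G$ is APN. The main obstacle is the case-by-case verification of the distinctness and avoidance conditions in (ii). I expect the cleanest route to be the one above: rewrite each condition via $u\pm w$ and the linearity of $L_1$, so that it becomes one of the special-preimage statements of Corollary~\ref{cor:special-preimages}.
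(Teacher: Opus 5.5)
Your proposal is correct and follows essentially the same route as the paper: you compare $D_aG$ with the permutation $D_ag$, identify $E_a$ and the old and new exceptional value sets $\{\pm u,\pm w\}$ and $\{\pm(u-\tau),\pm(w-\tau)\}$, and verify distinctness and avoidance via Corollary~\ref{cor:special-preimages} and Lemma~\ref{lem:quadratic-identities}. The paper packages your case-by-case checks into the single condition $u,\,w,\,u+w,\,u-w\notin\tau\Fthree$. Your ``old values distinct'' bullet is automatic, since $D_ag$ is a permutation, and the comparison $u-\tau\ne u$, which you did not list, is just $\tau\ne0$.
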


\begin{proof}
By \eqref{eq:switching-derivative}, $E_a$ consists exactly of the points
$x$ for which exactly one of $x$ and $x+a$ belongs to $\{\pm1\}$.

\medskip
\noindent\textbf{Case 1: $a\in\Fthree^*$.}
It is enough to treat $a=1$, because Lemma~\ref{lem:even-derivative}, applied to
the even map $G$, gives
\[
  D_{-1}G(x)=D_1G(-x).
\]
Thus $D_{-1}G$ is a permutation exactly when $D_1G$ is a permutation.

For $a=1$, the exceptional input set is $E_1=\{0,2\}$, and
\[
  \epsilon_1(0)=1,
  \qquad
  \epsilon_1(2)=2=-1.
\]
Moreover,
\[
  D_1g(0)=g(1)-g(0)=\tau,
  \qquad
  D_1g(2)=g(0)-g(2)=-\tau.
\]
Hence
\[
  D_1G(0)=-\tau,
  \qquad
  D_1G(2)=\tau.
\]
Thus $D_1g(E_1)=D_1G(E_1)=\{\pm\tau\}$: the switch merely exchanges
the two exceptional derivative values.  Therefore $D_1G$ is a permutation,
and consequently $D_2G=D_{-1}G$ is also a permutation.  This proves the
permutation-row statement for $a\in\Fthree^*$.

\medskip
\noindent\textbf{Case 2: $a\in\Fq^*\setminus\Fthree$.}
Here the exceptional input set is
\begin{equation}\label{eq:exceptional-input-set}
  E_a=\{1,2,1-a,2-a\},
\end{equation}
and its four elements are distinct.  At the two inputs $1$ and $2=-1$ we have
\[
  \epsilon_a(1)=\epsilon_a(2)=2=-1.
\]
The remaining two exceptional inputs are obtained from $1$ and $2$ by the involution
$\iota_a(x)=-x-a$:
\[
  \iota_a(2)=1-a,
  \qquad
  \iota_a(1)=2-a.
\]
Thus Lemma~\ref{lem:even-derivative}, applied to $\epsilon$, also gives
\[
  \epsilon_a(1-a)=\epsilon_a(2-a)=1.
\]

Put
\[
  A:=D_ag(1)=g(a+1)-\tau,
  \qquad
  B:=D_ag(2)=g(a-1)-\tau.
\]
By Lemma~\ref{lem:even-derivative}, applied to $g$,
\[
  D_ag(2-a)=-D_ag(1)=-A,
  \qquad
  D_ag(1-a)=-D_ag(2)=-B.
\]
The values of $D_ag$ and $D_aG$ on the exceptional input set are summarized by
\[
\begin{array}{c|cccc}
 x&1&2&1-a&2-a\\ \hline
 D_ag(x)&A&B&-B&-A\\
 D_aG(x)&A-\tau&B-\tau&\tau-B&\tau-A.
\end{array}
\]
Consequently,
\begin{equation}\label{eq:exceptional-derivative-value-sets}
  D_ag(E_a)=\{\pm A,\pm B\},
  \qquad
  D_aG(E_a)=\{\pm(A-\tau),\pm(B-\tau)\}.
\end{equation}
These are, respectively, the old and new exceptional derivative-value sets.

We claim that $D_aG(E_a)$ has four elements and
$D_aG(E_a)\cap D_ag(E_a)=\varnothing$.  It is enough to show
\begin{equation}\label{eq:four-exclusions-general}
  A,
  \ B,
  \ A+B,
  \ A-B
  \notin\tau\Fthree.
\end{equation}
Indeed, every equality among two elements of the new set, or between an element
of the new set and an element of the old set, forces
one of $A,B,A+B,A-B$ to lie in the line $\tau\Fthree$; for example,
\[
  A-\tau=B \Rightarrow A-B=\tau,
  \qquad
  A-\tau=-B \Rightarrow A+B=\tau,
\]
and
\[
  A-\tau=\tau-A \Rightarrow A=\tau.
\]
The remaining equalities are the same elementary check.

We now prove \eqref{eq:four-exclusions-general}.  Since $a\pm1\notin\Fthree$, Corollary~\ref{cor:special-preimages} gives
\[
  g(a+1),g(a-1)\notin\tau\Fthree.
\]
Therefore
\[
  A=g(a+1)-\tau\notin\tau\Fthree,
  \qquad
  B=g(a-1)-\tau\notin\tau\Fthree.
\]
Applying Lemma~\ref{lem:quadratic-identities} with $(x,y)=(a,1)$,
\[
  A+B=g(a+1)+g(a-1)-2\tau=(-g(a)-\tau)-2\tau=-g(a).
\]
Since $a\notin\Fthree$, Corollary~\ref{cor:special-preimages} gives $g(a)\notin\tau\Fthree$,
and hence $A+B\notin\tau\Fthree$.  Finally,
\[
  A-B=g(a+1)-g(a-1)=-L_1(a)
\]
by Lemma~\ref{lem:quadratic-identities} with $(x,y)=(a,1)$.  Since $g$ is PN,
$L_1$ is injective.  Hence Lemma~\ref{lem:L1-image} and
$a\notin\Fthree$ give $L_1(a)\notin\tau\Fthree$, and therefore
$A-B\notin\tau\Fthree$.

The preceding argument shows that
\[
  |D_aG(E_a)|=|E_a|=4,
  \qquad
  D_aG(E_a)\cap D_ag(E_a)=\varnothing.
\]
Hence the restriction of $D_aG$ to $E_a$ is injective.  Since $D_ag$ is
a permutation and $D_aG=D_ag$ on $\Fq\setminus E_a$, the restriction of
$D_aG$ to $\Fq\setminus E_a$ is also injective, and
\[
  D_aG(\Fq\setminus E_a)
  =D_ag(\Fq\setminus E_a)
  =\Fq\setminus D_ag(E_a).
\]
In particular,
\[
  D_aG(E_a)\subseteq D_aG(\Fq\setminus E_a).
\]
Consequently,
\[
  \delta_G(a,b)=
  \begin{cases}
    0,& b\in D_ag(E_a),\\
    2,& b\in D_aG(E_a),\\
    1,& \text{otherwise}.
  \end{cases}
\]
Thus the $a$-row of the DDT has type
\[
  0^4 1^{q-8}2^4.
\]
This proves the asserted row structure.  Summing the row types gives the
displayed differential spectrum and shows that $\delta_G=2$.
\end{proof}

For later use, we record the following consequences of Case~2.

\begin{corollary}
\label{cor:exceptional-values}
Let $a\in\Fq^*\setminus\Fthree$, and put
\[
  A:=D_ag(1)=g(a+1)-\tau,
  \qquad
  B:=D_ag(2)=g(a-1)-\tau.
\]
Then
\[
  A,\ B,\ A+B,\ A-B\notin\tau\Fthree.
\]
Moreover,
\[
  \{b\in\Fq:\delta_G(a,b)=0\}
  =D_ag(E_a)=\{\pm A,\pm B\}.
\]
The set $D_aG(E_a)$ has four elements and is disjoint from $D_ag(E_a)$.
\end{corollary}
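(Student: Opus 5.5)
The corollary simply extracts facts already established inside Case~2 of the proof of Theorem~\ref{thm:diff-spectrum}, so the plan is to re-trace that argument for a fixed $a\in\Fq^*\setminus\Fthree$ and record each conclusion separately.

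First, the four exclusions $A,B,A+B,A-B\notin\tau\Fthree$. Since $a\pm1\notin\Fthree$, Corollary~\ref{cor:special-preimages} gives $g(a\pm1)\notin\tau\Fthree$. Because $\tau\in\tau\Fthree$, subtracting $\tau$ keeps us outside the line, so $A,B\notin\tau\Fthree$. Lemma~\ref{lem:quadratic-identities} with $(x,y)=(a,1)$ gives $A+B=-g(a)$ and $A-B=-L_1(a)$. Then $g(a)\notin\tau\Fthree$ by Corollary~\ref{cor:special-preimages}. For $L_1(a)$, the map $L_1$ is an injective $\Fthree$-linear map (PN property), and Lemma~\ref{lem:L1-image} gives $L_1(\Fthree)=\tau\Fthree$. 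Hence $L_1(a)\in\tau\Fthree$ would force $a\in\Fthree$, which is excluded.

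Next, the statements about the value sets. From the exceptional-value table in the proof of Theorem~\ref{thm:diff-spectrum}, $D_ag(E_a)=\{\pm A,\pm B\}$ and $D_aG(E_a)=\{\pm(A-\tau),\pm(B-\tau)\}$. Any coincidence among the four new values, or between a new and an old value, yields one of $A,B,A\pm B$ lying in $\tau\Fthree$. Here $2\tau=-\tau$ in characteristic~$3$, and the checks cover cases such as $A-\tau=-(A-\tau)\Rightarrow A=\tau$. This gives $|D_aG(E_a)|=4$ and disjointness from $D_ag(E_a)$. This finite case check is the only step requiring care, and I would organize it as sign choices $\pm(A-\tau)$ versus $\pm(B-\tau)$ or $\pm A,\pm B$. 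Combined with $A,B\notin\tau\Fthree$ (so $A\ne\pm B$, since $A\pm B\ne0$), the old set also has exactly four elements.

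Finally, the zero set. $D_ag$ is a permutation and $D_aG=D_ag$ off $E_a$, so $D_aG(\Fq\setminus E_a)=\Fq\setminus D_ag(E_a)$. Disjointness shows that $D_aG(E_a)$ also avoids $D_ag(E_a)$. Hence the values not attained by $D_aG$ are exactly $D_ag(E_a)=\{\pm A,\pm B\}$, which is the claimed description of $\{b:\delta_G(a,b)=0\}$.
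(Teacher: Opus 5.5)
Your proposal is correct and follows the paper's route exactly: the paper proves this corollary by pointing back to Case~2 of the proof of Theorem~\ref{thm:diff-spectrum}, and you re-trace that argument step by step. One small precision point, shared with the paper's own wording: the coincidences $A-\tau=A$, $B-\tau=B$, $\tau-A=-A$, $\tau-B=-B$ do not reduce to one of $A,B,A\pm B$ lying in $\tau\Fthree$ but to $\tau=0$, so they are ruled out instead by $\tau\ne0$ from Corollary~\ref{cor:special-preimages}.
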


\begin{proof}
All the assertions were established in Case~2 of the proof of
Theorem~\ref{thm:diff-spectrum}.
\end{proof}

The proof also gives the two-element fibers of $D_aG$ explicitly.

\begin{corollary}\label{cor:derivative-two-element-fibers}
Let $a\in\Fq^*\setminus\Fthree$, and define
\[
  u_a:=L_a^{-1}(-\tau).
\]
Then $u_a\notin\Fthree$, and the four two-element fibers of $D_aG$ are
\[
  \{1,1+u_a\},\qquad
  \{2,2+u_a\},\qquad
  \{1-a,1-a-u_a\},\qquad
  \{2-a,2-a-u_a\}.
\]
\end{corollary}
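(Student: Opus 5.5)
From the proof of Theorem~\ref{thm:diff-spectrum} we already know the values with two-element fibers. The values $b$ with $\delta_G(a,b)=2$ are exactly the four elements of $D_aG(E_a)$. Each such $b$ is attained once at an exceptional input $x_0\in E_a$ and once at a non-exceptional input $y$ with $D_ag(y)=b$. Since $D_ag(y)=L_a(y)+g(a)$ and $L_a$ is injective, $y$ is unique, and we only need to identify it. The plan is to compute $y-x_0$ for each of the four exceptional inputs and show it equals $\pm u_a$ with the stated signs.

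\textbf{Input $x_0=1$.} Here $D_aG(1)=A-\tau=D_ag(1)-\tau$. We want $y$ with $D_ag(y)=D_ag(1)-\tau$. Writing $D_ag(y)-D_ag(1)=L_a(y-1)$, this becomes $L_a(y-1)=-\tau$, so $y=1+u_a$.

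\textbf{Input $x_0=2$.} In the same way, $D_aG(2)=B-\tau$ gives $L_a(y-2)=-\tau$, so $y=2+u_a$.

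\textbf{Inputs $x_0=1-a$ and $x_0=2-a$.} Here $D_aG(x_0)=D_ag(x_0)+\tau$. This gives $L_a(y-x_0)=\tau$, hence $y=x_0-u_a$ by linearity. Alternatively, one can apply the involution $\iota_a$ from Lemma~\ref{lem:even-derivative} to the first two fibers: $\iota_a(1+u_a)=2-a-u_a$ and $\iota_a(2+u_a)=1-a-u_a$. Since $\iota_a$ sends fibers of $D_aG$ to fibers of $D_aG$, this matches the list in the statement.

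It remains to check two things: that $u_a\notin\Fthree$, and that each $y$ really lies outside $E_a$, so the fiber has exactly two elements. The second point is automatic. Each $b\in D_aG(E_a)$ lies in $D_aG(\Fq\setminus E_a)$ and has exactly one exceptional preimage, because $D_aG$ is injective on $E_a$. So the solution $y$ of $D_ag(y)=b$ is non-exceptional.

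For $u_a\notin\Fthree$ we argue by contradiction, using the symmetry $L_a(t)=L_t(a)=tL_1(a)$ for $t\in\Fthree$. Note that $u_a\neq0$ since $\tau\ne0$. If $u_a=t\in\Fthree^*$, then $tL_1(a)=-\tau$, so $L_1(a)\in\tau\Fthree$. Since $L_1$ is injective, Lemma~\ref{lem:L1-image} then forces $a\in\Fthree$, a contradiction. This last step, using the symmetric bilinear structure to transfer from $L_a$ to $L_1$, is the only point needing care; the rest is bookkeeping.
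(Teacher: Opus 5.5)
Your proof is correct and follows essentially the same route as the paper: the same bilinearity argument $L_a(t)=tL_1(a)$ gives $u_a\notin\Fthree$, and you find each partner by solving $L_a(y-x_0)=\mp\tau$. You show the partner is non-exceptional because each value in $D_aG(E_a)$ lies in $D_ag(\Fq\setminus E_a)$, which is equivalent to the paper's disjointness of the old and new exceptional value sets, and completeness of the list comes from Theorem~\ref{thm:diff-spectrum}. The only cosmetic difference is that you also compute the last two fibers directly, whereas the paper obtains them solely via the involution $\iota_a$.
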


\begin{proof}
Since $L_a$ is bijective, $u_a$ is well defined.  Also $u_a\ne0$, because $-\tau\ne0$.
If $u_a\in\Fthree^*$, then, using the $\Fthree$-bilinearity and symmetry of
$(a,x)\mapsto L_a(x)$,
\[
  -\tau=L_a(u_a)=u_aL_a(1)=u_aL_1(a).
\]
Hence $L_1(a)\in\tau\Fthree=L_1(\Fthree)$ by
Lemma~\ref{lem:L1-image}.  Since $L_1$ is injective, this contradicts
$a\notin\Fthree$.  Thus $u_a\notin\Fthree$.

By Corollary~\ref{cor:exceptional-values}, the two exceptional
derivative-value sets in \eqref{eq:exceptional-derivative-value-sets} are
disjoint.

The first two displayed fibers can be treated simultaneously.  For each
$e\in\{1,2\}$,
\[
  D_ag(e+u_a)
  =D_ag(e)+L_a(u_a)
  =D_ag(e)-\tau
  =D_aG(e).
\]
The value $D_aG(e)$ belongs to $D_aG(E_a)$ and therefore, by this
disjointness, does not belong to $D_ag(E_a)$.  Since $D_ag$ is a
permutation, its unique preimage $e+u_a$ lies outside $E_a$.  Consequently,
$D_aG(e+u_a)=D_ag(e+u_a)=D_aG(e)$, so $\{e,e+u_a\}$ is a two-element
fiber of $D_aG$ for each $e\in\{1,2\}$.

The remaining two displayed fibers follow from the derivative symmetry of
the even map $G$.  By Lemma~\ref{lem:even-derivative}, the involution
$\iota_a(x)=-x-a$ sends fibers of $D_aG$ to fibers
of $D_aG$.  Since
\[
  \iota_a(2)=1-a,
  \qquad
  \iota_a(2+u_a)=1-a-u_a,
\]
and
\[
  \iota_a(1)=2-a,
  \qquad
  \iota_a(1+u_a)=2-a-u_a,
\]
we obtain the remaining two-element fibers
\[
  \{1-a,1-a-u_a\},
  \qquad
  \{2-a,2-a-u_a\}.
\]
The corresponding four $D_aG$-values are the four distinct elements of
$D_aG(E_a)$.  Theorem~\ref{thm:diff-spectrum} shows that $D_aG$ has exactly
four two-element fibers, so the displayed list is complete.
\end{proof}

\subsection{The exact boomerang spectrum}\label{sec:boomerang}

We now determine the boomerang rows directly from the fiber structure of
$D_aG$ obtained above.

\begin{theorem}[Exact boomerang spectrum]\label{thm:bct-spectrum}
Let $G$ be as in Theorem~\ref{thm:diff-spectrum}.  Then the following hold.
\begin{enumerate}[label=\textup{(\roman*)}]
\item If $a\in\Fthree^*$, then the $a$-row of the boomerang connectivity
table is a zero row:
\[
  0^{q-1}.
\]
\item If $a\in\Fq^*\setminus\Fthree$, define
\[
  u_a:=L_a^{-1}(-\tau)
\]
and
\[
  b_{a,1}:=G(1+u_a)-G(1),
  \qquad
  b_{a,2}:=G(2+u_a)-G(2).
\]
Then the four elements
\[
  \pm b_{a,1},\qquad \pm b_{a,2}
\]
are nonzero and pairwise distinct.  The four nonzero entries in the
$a$-row occur precisely at these four values of $b$, and every one of
them is equal to $2$.  Hence the row has type
\[
  0^{q-5}2^4.
\]
\end{enumerate}
Thus the BCT row structure of $G$ consists of two zero rows and $q-3$ rows of
type $0^{q-5}2^4$.
Consequently, $G$ has boomerang uniformity $\beta_G=2$, and its boomerang
spectrum is
\[
  \mathcal S_{\BCT}(G)
  =
  \{\!\{\,
    0^{(q-1)^2-4(q-3)},\,
    2^{4(q-3)}
  \,\}\!\}.
\]
\end{theorem}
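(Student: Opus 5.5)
The plan is to compute each BCT entry through the fiber structure of $D_aG$. Rewrite the boomerang system $G(y+a)-G(x+a)=b$, $G(y)-G(x)=b$ as
\[
  D_aG(x)=D_aG(y),\qquad G(y)-G(x)=b .
\]
Then
\[
  \beta_G(a,b)=\#\{(x,y):x\ne y,\ D_aG(x)=D_aG(y),\ G(y)-G(x)=b\}
\]
for $b\ne0$, since the diagonal pairs $x=y$ only contribute to $b=0$. Thus $\beta_G(a,b)$ counts ordered pairs of distinct points lying in a common fiber of $D_aG$ whose $G$-difference equals $b$.

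For (i), Theorem~\ref{thm:diff-spectrum}(i) says that $D_aG$ is a permutation when $a\in\Fthree^*$. Every fiber is then a singleton, so the $a$-row is $0^{q-1}$.

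For (ii), Corollary~\ref{cor:derivative-two-element-fibers} lists the only nontrivial fibers: $\{1,1+u_a\}$, $\{2,2+u_a\}$, $\{1-a,1-a-u_a\}$ and $\{2-a,2-a-u_a\}$. Each gives two ordered pairs, with differences $\pm$ the $G$-difference across the fiber. The last part of Lemma~\ref{lem:even-derivative}, applied to the even map $G$, shows that $\iota_a$ preserves $G$-differences within a fiber: $G(\iota_a y)-G(\iota_a x)=G(y)-G(x)$. Since $\iota_a(2)=1-a$ and $\iota_a(2+u_a)=1-a-u_a$, the third fiber yields the same differences $\pm b_{a,2}$ as the second, and likewise the fourth yields $\pm b_{a,1}$. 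The eight ordered pairs therefore produce each of $\pm b_{a,1},\pm b_{a,2}$ twice. If these four values are nonzero and pairwise distinct, every nonzero entry equals $2$, the row type is $0^{q-5}2^4$, and summing over rows gives the stated spectrum and $\beta_G=2$.

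The main obstacle is distinctness and nonvanishing. Since $u_a\notin\Fthree$ and $1,2\in\Fthree$, the points $1+u_a$ and $2+u_a$ are not in $\Fthree$, so $G(e+u_a)=g(e+u_a)$ while $G(e)=-\tau$. Hence
\[
  b_{a,e}=g(e+u_a)+\tau \qquad (e\in\{1,2\}).
\]
\begin{itemize}
\item Nonvanishing: $b_{a,e}=0$ would require $g(e+u_a)=-\tau$, which Corollary~\ref{cor:special-preimages} forbids.
\item $b_{a,e}\ne-b_{a,e}$: this is automatic, since the characteristic is $3$ and $b_{a,e}\ne0$.
\item $b_{a,1}\ne b_{a,2}$: equality means $g(1+u_a)=g(u_a-1)$. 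By Lemma~\ref{lem:quadratic-identities} this says $L_1(u_a)=0$, contradicting injectivity of $L_1$ since $u_a\ne0$.
\item $b_{a,1}\ne-b_{a,2}$: equality means $g(u_a+1)+g(u_a-1)=-2\tau=\tau$. By Lemma~\ref{lem:quadratic-identities} the left side is $-g(u_a)-\tau$, so this becomes $g(u_a)=-2\tau=\tau$. Corollary~\ref{cor:special-preimages} would then force $u_a=\pm1$, contradicting $u_a\notin\Fthree$.
\end{itemize}
I would double-check this last case most carefully, since it is where the arithmetic in characteristic $3$ is easiest to get wrong.
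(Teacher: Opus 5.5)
Your proposal is correct and follows essentially the same route as the paper: you reduce the BCT to ordered pairs inside fibers of $D_aG$, use Corollary~\ref{cor:derivative-two-element-fibers} and the involution $\iota_a$ to obtain each of $\pm b_{a,1},\pm b_{a,2}$ exactly twice, and separate these values via $b_{a,1}-b_{a,2}=-L_1(u_a)$ and $b_{a,1}+b_{a,2}=\tau-g(u_a)$. Your characteristic-$3$ arithmetic in the last case is right: $b_{a,1}=-b_{a,2}$ is equivalent to $g(u_a)=\tau$, exactly as in the paper.
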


\begin{proof}
For any map $F:\Fq\to\Fq$ and any $a,b\in\Fq^*$, the boomerang equations are equivalent to
\begin{equation}\label{eq:bct-fiber-form}
  D_aF(x)=D_aF(y),
  \qquad
  F(y)-F(x)=b.
\end{equation}
Indeed, the two equations
\[
  F(y+a)-F(x+a)=b,
  \qquad
  F(y)-F(x)=b
\]
are equivalent to the second equation and
\[
  F(y+a)-F(y)=F(x+a)-F(x).
\]

If $a\in\Fthree^*$, then $D_aG$ is a permutation by Theorem~\ref{thm:diff-spectrum}.  Thus
\eqref{eq:bct-fiber-form} forces $x=y$, which is impossible because $b=G(y)-G(x)\ne0$.
Therefore
\[
  \beta_G(a,b)=0
  \qquad(a\in\Fthree^*,\ b\in\Fq^*).
\]
This proves the zero-row statement.

Now let $a\in\Fq^*\setminus\Fthree$, and put $u=u_a=L_a^{-1}(-\tau)$.  By Corollary~\ref{cor:derivative-two-element-fibers}, the only ordered pairs $(x,y)$ with $x\ne y$ and
$D_aG(x)=D_aG(y)$ come from the four two-element fibers of $D_aG$:
\[
  \{1,1+u\},\qquad
  \{2,2+u\},\qquad
  \{1-a,1-a-u\},\qquad
  \{2-a,2-a-u\}.
\]
Define
\[
  b_1:=G(1+u)-G(1),
  \qquad
  b_2:=G(2+u)-G(2).
\]
Since $u\notin\Fthree$, both $1+u$ and $2+u$ lie outside $\Fthree$, and hence
\[
  b_1=g(1+u)+\tau,
  \qquad
  b_2=g(2+u)+\tau=g(u-1)+\tau.
\]
By Corollary~\ref{cor:special-preimages}, neither $g(1+u)$ nor $g(2+u)$ is equal to
$-\tau$.  Hence
\begin{equation}\label{eq:b-nonzero-general}
  b_1,b_2\ne0.
\end{equation}

The first two-element fiber contributes the two ordered differences $b_1$ and
$-b_1$; the second two-element fiber
contributes $b_2$ and $-b_2$.  By Lemma~\ref{lem:even-derivative}, if
$D_aG(x)=D_aG(y)$, then
\[
  G(\iota_a(y))-G(\iota_a(x))=G(y)-G(x).
\]
Thus the involution
\[
  T(x,y):=(\iota_a(x),\iota_a(y))=(-x-a,-y-a)
\]
preserves the boomerang value on ordered pairs with equal $D_aG$-value.  The
remaining two-element fibers are obtained from the first two by this
involution.  Consequently the four two-element fibers contribute only the
four possible values
\[
  \pm b_1,
  \qquad
  \pm b_2,
\]
and each value will occur exactly twice once we know that these four values are distinct.

It remains to prove
\[
  b_1\ne\pm b_2.
\]
First,
\[
  b_1-b_2=g(u+1)-g(u-1)=-L_1(u)
\]
by Lemma~\ref{lem:quadratic-identities} with $(x,y)=(u,1)$.  Since $L_1$ is bijective and $u\ne0$, this is
nonzero.  Therefore $b_1\ne b_2$.

For the nontrivial cancellation, Lemma~\ref{lem:quadratic-identities} with $(x,y)=(u,1)$ gives
\[
\begin{aligned}
  b_1+b_2
   &=g(u+1)+g(u-1)+2\tau  \\
   &=(-g(u)-\tau)+2\tau  \\
   &=\tau-g(u).
\end{aligned}
\]
If $b_1+b_2=0$, then $g(u)=\tau$.  By Corollary~\ref{cor:special-preimages},
$g^{-1}(\tau)=\{\pm1\}$, so $u\in\Fthree$, contradicting Corollary~\ref{cor:derivative-two-element-fibers}.
Thus $b_1+b_2\ne0$, and hence $b_1\ne-b_2$.

Therefore the four nonzero values
\[
  b_1,
  \quad -b_1,
  \quad b_2,
  \quad -b_2
\]
are distinct.  Each occurs exactly twice in the $a$-row of the boomerang table,
and no other nonzero $b$ occurs.  Hence this row has type
\[
  0^{q-5}2^4.
\]
This proves the asserted row structure.  Summing the row types gives the
displayed boomerang spectrum and shows that $\beta_G=2$.
\end{proof}

\subsection{Comparison with power functions and Ness--Helleseth-type binomials}
\label{subsec:spectral-comparison}

We now compare the differential and boomerang spectra of $G$ with those of two broad classes
for which, in both the DDT and BCT, all rows indexed by $a\ne0$ have the same value
distribution.  The following terminology
excludes the degenerate monomial parameter $u=0$.

\begin{definition*}[Ness--Helleseth-type binomials]
Let $r$ be a positive integer and let $u\in\Fq^*$.  Let
\[
  \chi(x):=x^{(q-1)/2}
  \qquad(x\in\Fq)
\]
be the quadratic character of $\Fq$, with $\chi(0)=0$.  The binomial function
\[
  F_{r,u}(x):=x^r\bigl(1+u\chi(x)\bigr)
\]
is called a \emph{Ness--Helleseth-type binomial} on $\Fq$.
\end{definition*}

The original Ness--Helleseth family is obtained over $\Fq=\F_{3^n}$, with $n\ge3$
odd, by taking $r=q-2$.  As polynomial functions on $\Fq$,
\[
  F_{q-2,u}(x)=x^{q-2}+u x^{(q-3)/2}.
\]
Ness and Helleseth \cite[Thm.~1]{NH07} proved that if
$\chi(u-1)=\chi(u+1)=\chi(u)$, then $F_{q-2,u}$ is APN.  Xia et al.\
\cite[Thm.~4]{XiaEtAl24} later proved the converse.

For primes $p\ge7$ with $p\equiv3\pmod4$ and odd $m$, Zeng et al.
extended the construction to $\F_{p^m}$.  With $\chi$ denoting the
quadratic character of $\F_{p^m}$ and $u\in\F_{p^m}^*$, the function
$F_{p^m-2,u}$ is APN if either $\chi(u+1)=\chi(u-1)=-\chi(5u+3)$ or
$\chi(u+1)=\chi(u-1)=-\chi(5u-3)$; see \cite{ZengHuYangJiang07}.

The following row-reduction formulas are given in Mesnager and Wu
\cite[Lemma~11]{MW25}.

\begin{lemma}
\label{lem:NH-row-reduction}
Assume that $q=3^n$ with $n>1$ odd.  Let
\[
  F_{r,u}(x)=x^r\bigl(1+u\chi(x)\bigr),
  \qquad u\in\Fq^*.
\]
Then, for every $a\in\Fq^*$ and $b\in\Fq$,
\[
  \delta_{F_{r,u}}(a,b)=
  \begin{cases}
    \displaystyle \delta_{F_{r,u}}\left(1,\frac{b}{a^r}\right),
      & \chi(a)=1,\\[3mm]
    \displaystyle \delta_{F_{r,u}}\left(1,\frac{b}{(-1)^{r+1}a^r}\right),
      & \chi(a)=-1,
  \end{cases}
\]
and, for every $a,b\in\Fq^*$,
\[
  \beta_{F_{r,u}}(a,b)=
  \begin{cases}
    \displaystyle \beta_{F_{r,u}}\left(1,\frac{b}{a^r}\right),
      & \chi(a)=1,\\[3mm]
    \displaystyle \beta_{F_{r,u}}\left(1,\frac{b}{(-1)^r a^r}\right),
      & \chi(a)=-1.
  \end{cases}
\]
Consequently, all DDT rows indexed by $a\in\Fq^*$ have the same value distribution, and all
BCT rows indexed by $a\in\Fq^*$ have the same value distribution.
\end{lemma}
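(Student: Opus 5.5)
The plan is to prove both row reductions by a substitution $x=az$ in the defining equations and to track how $F_{r,u}$ scales under multiplication by $a$. The key identity is the following: for $a,z\in\Fq$, multiplicativity of $\chi$ gives
\[
  F_{r,u}(az)=a^r z^r\bigl(1+u\chi(a)\chi(z)\bigr).
\]
If $\chi(a)=1$, this is simply $F_{r,u}(az)=a^rF_{r,u}(z)$.

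If $\chi(a)=-1$, I will use $q\equiv3\pmod4$, so that $\chi(-1)=-1$ and hence $\chi(-z)=-\chi(z)$. Then
\[
  z^r\bigl(1-u\chi(z)\bigr)=(-1)^r(-z)^r\bigl(1+u\chi(-z)\bigr)=(-1)^rF_{r,u}(-z),
\]
so $F_{r,u}(az)=(-1)^ra^rF_{r,u}(-z)$. Writing $a=-a'$ with $\chi(a')=1$ reduces this case to the first one combined with the reflection $z\mapsto -z$.

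For the DDT, I substitute $x=az$ in $F(x+a)-F(x)=b$. When $\chi(a)=1$, the equation becomes $a^r\,D_1F(z)=b$, which gives the first formula. When $\chi(a)=-1$, it becomes
\[
  (-1)^ra^r\bigl(F(-z-1)-F(-z)\bigr)=b.
\]
Putting $w=-z-1$, the left side is $-(-1)^ra^r\bigl(F(w+1)-F(w)\bigr)$, so $D_1F(w)=b/\bigl((-1)^{r+1}a^r\bigr)$. Since $z\mapsto w$ is a bijection, the counts agree.

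For the BCT, I substitute $x=az$ and $y=aw$ in the system $F(y+a)-F(x+a)=b$, $F(y)-F(x)=b$. When $\chi(a)=1$, both equations scale by $a^r$. When $\chi(a)=-1$, both become $(-1)^ra^r$ times the corresponding equations in $(-z,-w)$ with shift $-1$. I will then check that $\beta_F(-1,c)=\beta_F(1,c)$ via $(x,y)\mapsto(x-1,y-1)$: the system for $-1$ at $(x,y)$ is exactly the system for $+1$ at $(x-1,y-1)$, since it only swaps which of the two equations is which. This removes the extra sign, so the BCT factor is $(-1)^ra^r$, in contrast to $(-1)^{r+1}a^r$ for the DDT.

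Finally, since $b\mapsto b/\lambda$ with $\lambda\ne0$ permutes $\Fq$ (respectively $\Fq^*$), each row indexed by $a\ne0$ is a permutation of row $1$, and so all such rows have the same value distribution. The main obstacle is the sign bookkeeping in the case $\chi(a)=-1$: one must use $\chi(-1)=-1$ correctly, and recognize that the BCT's invariance under $a\mapsto -a$ absorbs one sign that the DDT retains.
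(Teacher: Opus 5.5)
Your proof is correct. The scaling identity $F_{r,u}(az)=(-1)^r a^r F_{r,u}(-z)$ for $\chi(a)=-1$ needs only $\chi(-1)=-1$, i.e.\ $q\equiv 3\pmod 4$. Combined with the reflection $w=-z-1$ for the DDT and the general identity $\beta_F(-1,c)=\beta_F(1,c)$ for the BCT, it gives exactly the stated factors $(-1)^{r+1}a^r$ and $(-1)^r a^r$.

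The paper does not prove this lemma itself; it quotes it from Mesnager and Wu \cite[Lemma~11]{MW25}. Your argument is the natural self-contained proof, and it is the same scaling substitution the paper uses for power functions in the proof of Theorem~\ref{thm:spectral-separation}.
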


\begin{theorem}[CCZ-inequivalence of $G$ to power functions and
Ness--Helleseth-type binomials]
\label{thm:spectral-separation}
Assume that $q=3^n$ with $n>1$ odd, and let $G$ be as in
Theorem~\ref{thm:diff-spectrum}.  If $H:\Fq\to\Fq$ is a power function or a
Ness--Helleseth-type binomial, then
\[
  \mathcal S_{\DDT}(G)\ne\mathcal S_{\DDT}(H)
  \qquad\text{and}\qquad
  \mathcal S_{\BCT}(G)\ne\mathcal S_{\BCT}(H).
\]
Consequently, by the CCZ-invariance of the differential spectrum,
$G\not\sim_{\CCZ}H$.
\end{theorem}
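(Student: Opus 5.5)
The plan is a divisibility argument on entry counts. For $H$ a power function $x^d$ or a Ness--Helleseth-type binomial $F_{r,u}$, I first show that every count $N_i^{\DDT}(H)$ and every count $N_i^{\BCT}(H)$ is divisible by $q-1$. Then I compare with the counts of $G$, namely $N_0^{\DDT}(G)=N_2^{\BCT}(G)=4(q-3)$, which are not divisible by $q-1$.

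For the binomial case, Lemma~\ref{lem:NH-row-reduction} already states that all DDT rows, and likewise all BCT rows, indexed by $a\in\Fq^*$ have the same value distribution. For a power function $H(x)=x^d$, I use the standard scaling identity $D_aH(ax)=a^dD_1H(x)$. This gives $\delta_H(a,b)=\delta_H(1,b/a^d)$ for $a\ne0$. Similarly, substituting $(x,y)\mapsto(ax,ay)$ in the boomerang system gives $\beta_H(a,b)=\beta_H(1,b/a^d)$. When $a^d\ne0$, the map $b\mapsto b/a^d$ permutes $\Fq$ and also permutes $\Fq^*$. Hence every row is a reindexing of the $1$-row, and all rows share one distribution. (One can alternatively view power functions as $F_{d,u}$ with $u=0$, but the direct scaling avoids any issue with the definition excluding $u=0$.)

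It follows that $N_i^{\DDT}(H)=(q-1)\,m_i$ and $N_i^{\BCT}(H)=(q-1)\,m_i'$, where $m_i$ and $m_i'$ are the multiplicities of $i$ in the $1$-rows. For $G$, Theorems~\ref{thm:diff-spectrum} and~\ref{thm:bct-spectrum} give $N_0^{\DDT}(G)=4(q-3)$ and $N_2^{\BCT}(G)=4(q-3)$. Suppose $q-1$ divided $4(q-3)=4(q-1)-8$. Then $q-1$ would divide $8$, which forces $q\le 9$. But $q=3^n$ with $n>1$ odd gives $q\ge27$, a contradiction. So the counts differ and both spectra of $G$ differ from those of $H$. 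CCZ-inequivalence then follows from Lemma~\ref{lem:ccz-global-ddt-spectrum}.

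The only step needing care is bookkeeping the spectra as multisets. Equality of the multisets $\mathcal S_{\DDT}$ (respectively $\mathcal S_{\BCT}$) means equality of every count $N_i$, including the count for $i=0$. So it suffices to exhibit a single index $i$ whose count differs, and the divisibility argument handles that uniformly. I expect no real obstacle beyond confirming that the power-function scaling is valid in the BCT; this holds because $a\in\Fq^*$ and the substitution is a bijection on $\Fq^2$.
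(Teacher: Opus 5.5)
Your proposal is correct and follows essentially the same route as the paper. The paper also shows that all rows are uniform (by direct scaling for power functions, and by Lemma~\ref{lem:NH-row-reduction} for the binomials), so that $N_0^{\DDT}(H)$ and $N_2^{\BCT}(H)$ are multiples of $q-1$, whereas $4(q-3)=4(q-1)-8$ is not, since $q-1\ge26$; CCZ-inequivalence then follows from Lemma~\ref{lem:ccz-global-ddt-spectrum}.
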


\begin{proof}
For a power function $H(x)=x^d$, the substitutions $x=aX$ in the derivative equation and
$(x,y)=(aX,aY)$ in the boomerang system give
\[
  \delta_H(a,b)=\delta_H\left(1,\frac{b}{a^d}\right),
  \qquad
  \beta_H(a,b)=\beta_H\left(1,\frac{b}{a^d}\right).
\]
For a Ness--Helleseth-type binomial, the corresponding conclusions follow from
Lemma~\ref{lem:NH-row-reduction}.  Hence, in either case, every DDT row indexed by
$a\in\Fq^*$ has one common value distribution and every BCT row indexed by $a\in\Fq^*$
has one common value distribution.

For $i\ge0$, put
\[
  \omega_i(H):=\#\{b\in\Fq:\delta_H(1,b)=i\},
  \qquad
  \nu_i(H):=\#\{b\in\Fq^*:\beta_H(1,b)=i\}.
\]
Then
\[
  N_0^{\DDT}(H)=(q-1)\omega_0(H),
  \qquad
  N_2^{\BCT}(H)=(q-1)\nu_2(H),
\]
so both numbers are divisible by $q-1$.  On the other hand,
Theorems~\ref{thm:diff-spectrum} and \ref{thm:bct-spectrum} give, respectively,
\[
  N_0^{\DDT}(G)=4(q-3),
  \qquad
  N_2^{\BCT}(G)=4(q-3).
\]
But
\[
  4(q-3)=4(q-1)-8.
\]
If $q-1$ divided $4(q-3)$, then $q-1$ would divide $8$.  Since $q=3^n$ with
$n>1$ odd, one has $q-1\ge26$, which is impossible.  Thus both spectra are
different.  The differential-spectrum inequality and
Lemma~\ref{lem:ccz-global-ddt-spectrum} then give $G\not\sim_{\CCZ}H$.
\end{proof}

\subsubsection*{Selected APN monomial families and DDT-row distributions}

For reference, Table~\ref{tab:apn-monomial-row-distributions} summarizes selected APN power
functions over $\Fq$ and the known status of their DDT-row distributions.  The uppercase
labels $F_i$ are local to the table and are unrelated to the lower-case PN-family labels
$f_i$ in Subsection~\ref{sec:known-examples}.  With $\omega_i(F)$ as defined above,
$\bigl(\omega_0(F),\omega_1(F),\omega_2(F)\bigr)$ denotes the common value distribution of
the DDT rows indexed by $a\ne0$ for an APN monomial $F$.  In the $F_3$ row, $\chi$ denotes
the quadratic character of $\Fq$, and
$\lambda_{3,n}$ is specified by
\[
  \begin{gathered}
    \lambda_{3,n}:=2\sum_{x\in\Fq}\chi\bigl(x(x^2+x-1)\bigr),\\[-0.2ex]
    \lambda_{3,1}=\lambda_{3,2}=4,
    \qquad
    \lambda_{3,n}=-2\lambda_{3,n-1}-3\lambda_{3,n-2}\quad(n\ge3).
  \end{gathered}
\]

\begin{table}[!htbp]
\centering
\begingroup
\scriptsize
\setlength{\tabcolsep}{2.6pt}
\renewcommand{\arraystretch}{1.13}
\begin{tabular}{@{}>{\raggedright\arraybackslash}p{0.065\textwidth}
                    >{\raggedright\arraybackslash}p{0.41\textwidth}
                    >{\raggedright\arraybackslash}p{0.345\textwidth}
                    >{\centering\arraybackslash}p{0.11\textwidth}@{}}
\toprule
Family & Monomial $F_i(x)=x^{d_i}$ over $\F_{3^n}$ & DDT-row distribution & \shortstack{Source/\\status} \\
\midrule
$F_1$
& $\displaystyle d_1=\frac{3^{n+1}-1}{4}$
& $\displaystyle \omega_0=\omega_2=\frac{q-3}{2},\quad \omega_1=3$
& \cite{CHNC13} \\
\addlinespace[0.8mm]
$F_2$
& $\displaystyle d_2=q-3=3^n-3$
& $\displaystyle \omega_0=\omega_2=\frac{q-3}{2},\quad \omega_1=3$
& \cite{XZLH20,YXLHL22} \\
\addlinespace[0.8mm]
\shortstack[l]{$F_3$\\$(n\ge5)$}
& $\displaystyle d_3=\frac{q-3}{2}=\frac{3^n-3}{2}$
& $\displaystyle
  \begin{aligned}
    \omega_0&=\omega_2=\frac{q+\lambda_{3,n}-7}{4},\\[-0.2ex]
    \omega_1&=\frac{q-\lambda_{3,n}+7}{2}
  \end{aligned}$
& \cite{YMT24} \\
\addlinespace[0.8mm]
\shortstack[l]{$F_4$\\$(n\ge5)$}
& $\displaystyle d_4=\begin{cases}
\dfrac{3^{(n+1)/2}-1}{2}, & n\equiv3\pmod4,\\[2mm]
\dfrac{3^{(n+1)/2}-1}{2}+\dfrac{q-1}{2}, & n\equiv1\pmod4
\end{cases}$
& \emph{Unknown in general}
& \cite{DMM03,Leducq12} \\
\addlinespace[0.8mm]
\shortstack[l]{$F_5$\\$(n\ge5)$}
& $\displaystyle d_5=\begin{cases}
\dfrac{3^{n+1}-1}{8}, & n\equiv3\pmod4,\\[2mm]
\dfrac{3^{n+1}-1}{8}+\dfrac{q-1}{2}, & n\equiv1\pmod4
\end{cases}$
& \emph{Unknown in general}
& \cite{DMM03,Leducq12} \\
\addlinespace[0.8mm]
$F_6$
& $\displaystyle d_6=\frac{3^{n+1}-1}{3^{(n+1)/2^\ell}+1},\quad
   \ell\ge2,\quad n\equiv-1\pmod{2^\ell}$
& $\displaystyle \omega_0=\omega_2=\frac{q-3}{2},\quad \omega_1=3$
& \cite{KKKB26,Leducq12,ZW10} \\
\addlinespace[0.8mm]
$F_7$
& {$\displaystyle \text{even }d_7\ \text{with}\ d_7(3^m+1)\equiv2\pmod{3^n-1},$

  \hspace{1.1em}$\displaystyle 1\le m\le\frac{n-1}{2},\quad \gcd(m,n)=1$}
& $\displaystyle \omega_0=\omega_2=\frac{q-3}{2},\quad \omega_1=3$
& \cite{KKKB26,ZW10} \\
\bottomrule
\end{tabular}
\endgroup
\caption{Selected APN monomial families over $\F_{3^n}$, where $n\ge3$ is odd, and the status of their DDT-row distributions.}
\label{tab:apn-monomial-row-distributions}
\end{table}
\FloatBarrier

\paragraph{Brief notes on Table~\ref{tab:apn-monomial-row-distributions}.}
\begin{enumerate}[label=\textup{(\roman*)},leftmargin=*,nosep]
\item The DDT-row distributions of $F_4$ and $F_5$ have been computationally verified to
equal that of $F_3$ for every odd $5\le n\le11$.

\item For each admissible $m$, the congruence in the $F_7$ row uniquely determines the even
residue $d_7$ modulo $3^n-1$; see \cite[Theorem~4.1]{ZW10} for the APN criterion and
\cite[Section~3]{KKKB26} for an explicit parametrization and the DDT-row distribution.

\item The family $F_6$ is contained in $F_7$ via $m=(n+1)/2^\ell$, and $F_6=F_1$ when
$n+1=2^\ell$, equivalently $m=1$.  Since $\gcd(d_i,3^n-1)=2$ for every listed exponent,
Dempwolff's criterion \cite[Theorem~1.1]{Dempwolff18}, with the correction
\cite{Dempwolff22}, shows that $F_i\sim_{\CCZ}F_j$ if and only if
$d_j\equiv3^a d_i\pmod{3^n-1}$ for some $0\le a<n$.  Equivalently, two listed monomials
are CCZ-inequivalent precisely when their exponents lie in distinct $3$-cyclotomic cosets
modulo $3^n-1$.

\item Theorem~\ref{thm:spectral-separation} already proves, without computation, that the
boomerang spectrum of $G$ differs from that of every listed monomial.  Computations further give
$\beta_{F_i}\ge4$ for \mbox{$1\le i\le6$} and every odd \mbox{$5\le n\le11$} for which $F_i$ is defined;
at $n=11$, $(\beta_{F_1},\ldots,\beta_{F_6})=(8,8,4,8,6,10)$, whereas $\beta_G=2$ by
Subsection~\ref{sec:boomerang}.

\item For the original Ness--Helleseth APN family
$F_{3^n-2,u}(x)=x^{3^n-2}\bigl(1+u\chi(x)\bigr)$, where
\mbox{$\chi(u-1)=\chi(u+1)=\chi(u)$}, exhaustive computations over $\F_{3^n}$ for
$n\in\{9,11,13\}$ give
\mbox{$\beta_{F_{3^n-2,u}}=4$} for every admissible $u\in\F_{3^n}^*$, again exceeding
$\beta_G=2$.
\end{enumerate}
\smallskip

\begin{remark}
\label{rem:apn-beta-two}
Let $p$ be an odd prime and let $n$ be a positive integer.
Pal and St\u{a}nic\u{a}~\cite[Theorem~2.1 and Remark~2.3]{PalStanica25}
show that the boomerang uniformity of an odd APN permutation equals its
$(-1)$-differential uniformity.  Combining this bridge with their calculation for
the inverse permutation~\cite[Theorem~2.5]{PalStanica25} gives
\[
  I(x):=x^{p^n-2},
  \qquad
  \delta_I=\beta_I=2,
\]
whenever
\[
  \chi(-3)=-1
  \qquad\text{and}\qquad
  \chi(5)\ne1.
\]
These conditions yield an infinite family.  For example, they hold over $\F_{5^n}$
for every odd $n$: in this case $-3=2$ is a nonsquare in $\F_{5^n}$, whereas
$\chi(5)=0$.

The inverse family does not give an APN example in characteristic $3$, because then
$-3=0$ and hence $\chi(-3)=0$.  In contrast,
Theorems~\ref{thm:diff-spectrum} and~\ref{thm:bct-spectrum} give
$\delta_G=\beta_G=2$ over $\F_{3^n}$ for every odd $n>1$ and every
DO PN input $g$.
Thus, to the best of our knowledge, the present construction is the first infinite family
in characteristic $3$ whose differential and boomerang uniformities are both equal to $2$.
More broadly, prior to the present work, the inverse family appears to have been the only
established infinite family of APN functions over finite fields of odd characteristic
with boomerang uniformity two.
\end{remark}

\subsection{A lower bound on the boomerang uniformity of APN functions}
\label{sec:apn-beta-lower-bound}

Among the small-field APN binomials recorded by Budaghyan and Pal
\cite[Table~5]{BudaghyanPal25}, exhaustive computation reveals isolated
examples with boomerang uniformity one.  For instance,
\[
  \delta_{\,x^7+2x^2}=2,
  \qquad
  \beta_{\,x^7+2x^2}=1
  \qquad\text{over }\F_{11}.
\]
However, Budaghyan and Pal's conjecture that this binomial class contains an
infinite APN subfamily was subsequently disproved in
\cite{BartoliStanica26,MW25}.  Thus these isolated examples do not provide a general
construction or an infinite family.  We first show that boomerang uniformity
one is the minimum possible for an APN function.

\begin{proposition}\label{prop:beta-zero-pn}
Let $q$ be an odd prime power and let $F:\Fq\to\Fq$.  If
$\delta_F\le2$ and $\beta_F=0$, then $\delta_F=1$, that is, $F$ is
perfect nonlinear.  Conversely, if $F$ is perfect nonlinear, then
$\beta_F=0$.
\end{proposition}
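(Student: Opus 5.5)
We use the fiber form \eqref{eq:bct-fiber-form} of the boomerang equations from the proof of Theorem~\ref{thm:bct-spectrum}. For $a,b\in\Fq^*$, $\beta_F(a,b)$ counts ordered pairs $(x,y)$ with $D_aF(x)=D_aF(y)$ and $F(y)-F(x)=b$. Since $b\ne0$, every such pair has $x\ne y$. Both directions reduce to comparing two-element fibers of derivatives with the values $F(y)-F(x)$.

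\emph{Converse.} Suppose $F$ is PN. Then each $D_aF$ with $a\ne0$ is a permutation, so $D_aF(x)=D_aF(y)$ forces $x=y$. This gives $F(y)-F(x)=0\ne b$, so every entry $\beta_F(a,b)$ vanishes and $\beta_F=0$.

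\emph{Forward direction.} Assume $\delta_F\le2$ and $\beta_F=0$, and suppose for contradiction that $\delta_F=2$. Then there are $a\ne0$ and $x\ne y$ with $D_aF(x)=D_aF(y)$. Put $b:=F(y)-F(x)$.
\begin{itemize}
\item If $b\ne0$, the pair $(x,y)$ is counted in $\beta_F(a,b)$, so $\beta_F(a,b)\ge1$, a contradiction.
\item If $b=0$, then $F(x)=F(y)$, and the equal derivatives give $F(x+a)=F(y+a)$ as well. Set $d:=y-x\ne0$. Then $D_dF(x)=0=D_dF(x+a)$.
\end{itemize}
In the second case we pass to the direction $d$. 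Because $\delta_F\le2$, we have $\delta_F(d,0)\le2$, so the fiber of $D_dF$ over $0$ is exactly $\{x,x+a\}$; these two points are distinct since $a\ne0$. Now apply the same fiber form with direction $d$ and the pair $(x,x+a)$. Its boomerang value is
\[
  F(x+a)-F(x)=D_aF(x)=:c.
\]
If $c\ne0$, then $\beta_F(d,c)\ge1$, a contradiction. So it remains to handle $c=0$. Then $F(x)=F(x+a)=F(y)=F(y+a)$. In particular $D_aF(x)=D_aF(y)=0$, so the fiber of $D_aF$ over $0$ contains $x$ and $y$, and hence equals $\{x,y\}$ by $\delta_F\le2$.

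This last subcase is the main obstacle. The four points $x,\;x+a,\;x+d,\;x+a+d$ all carry the same $F$-value, and we use odd characteristic to produce a forbidden third element in some fiber. Since $D_aF(x+a)=F(x+2a)-F(x+a)$, two options arise.
\begin{itemize}
\item If $F(x+2a)\ne F(x)$, then $x+a$ is not in the fiber $\{x,y\}$ of $D_aF$ over $0$. By $\delta_F(a,D_aF(x+a))\le2$ this fiber contains at most one other point, and I would follow this chain to extract a pair with nonzero boomerang value.
\item If $F(x+2a)=F(x)$, then $D_aF(x+a)=0$, so $x+a\in\{x,y\}$. This forces $y=x+a$, i.e.\ $d=a$.
\end{itemize}
In the case $d=a$ the points $x,x+a,x+2a$ are distinct because $p$ is odd. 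We then have $D_aF(x)=D_aF(x+a)=0$ and $F(x+2a)=F(x+a)$, so $D_aF(x+a)=0$ and $x+a$ also lies in the zero fiber of $D_aF$. Iterating around the cyclic group $\langle a\rangle$ shows that $F$ is constant on $x+\Fp a$. Then the zero fiber of $D_aF$ contains all $p\ge3$ points $x+ka$, which contradicts $\delta_F\le2$. The general case $d\ne a$ needs care in choosing which direction to iterate, and I expect that case analysis to be the main technical step.
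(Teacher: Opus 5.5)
Your converse is correct. The first part of your forward direction is also sound: it shows that any collision $D_aF(x)=D_aF(y)$ with $x\ne y$ must occur at the value $0$, i.e.\ every nonzero value of each $D_aF$ is attained at most once. This is exactly the key claim in the paper's proof. The detour through $b=0$ is unnecessary: with $d=y-x$, the identity $D_dF(x)=D_dF(x+a)$ is equivalent to $D_aF(x)=D_aF(y)$, so the pair $(x,x+a)$ gives $\beta_F(d,c)\ge1$ whenever $c=D_aF(x)\ne0$.

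\textbf{The gap.} What remains is the case of a two-point zero fiber $\{x,y\}$ of some $D_aF$, and this is the real content of the forward direction. You do not resolve it.
\begin{itemize}
\item The branch $F(x+2a)\ne F(x)$ and the general case $d\ne a$ are left as ``I would follow this chain'', so no argument is given.
\item In the branch $d=a$, the ``iteration around $\langle a\rangle$'' does not work. For $p=3$ you do get a third zero $x+2a=x-a$ of $D_aF$. For $p\ge5$, however, nothing forces $F(x+3a)=F(x+2a)$. The facts you have established are consistent with $D_aF(x+2a)\ne0$, since the zero fiber is already $\{x,x+a\}$ and $x+2a$ lies outside it.
\item Local chasing is unlikely to help. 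For generic $d$, the configuration you reach ($F$ constant on $x,\,x+a,\,x+d,\,x+a+d$) places at most two points in each zero fiber it determines, in the directions $\pm a,\pm d,\pm(a\pm d)$. So no contradiction is visible locally; the natural obstruction is global.
\end{itemize}

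\textbf{The missing idea} is a counting-and-sum argument over the whole row $D_aF$. Once each nonzero value occurs at most once, the value $0$ must occur; otherwise $q$ inputs would map injectively into the $q-1$ nonzero values. Suppose $0$ occurred twice. Then the other $q-2$ inputs take $q-2$ distinct nonzero values, namely $\Fq^*\setminus\{\mu\}$ for some $\mu\ne0$. Since $\sum_{t\in\Fq^*}t=0$ for $q>2$, this gives
\[
  \sum_{z\in\Fq}D_aF(z)=\sum_{t\in\Fq^*}t-\mu=-\mu\ne0,
\]
which contradicts translation invariance, $\sum_{z}D_aF(z)=\sum_{z}F(z+a)-\sum_{z}F(z)=0$. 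Hence $0$ occurs exactly once and $D_aF$ is a permutation. This is how the paper closes the argument, and it replaces your entire case analysis after the point where collisions are forced to lie at $0$.
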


\begin{proof}
Assume first that $\delta_F\le2$ and $\beta_F=0$, and fix
$h\in\Fq^*$.  We claim that every nonzero value of $D_hF$ occurs at
most once.  Indeed, suppose that
\[
  D_hF(r)=D_hF(s)=b\ne0
\]
for distinct $r,s\in\Fq$.  Put
\[
  a=s-r\ne0,
  \qquad
  (x,y)=(r,r+h).
\]
Then
\[
  F(y)-F(x)=D_hF(r)=b
\]
and
\[
  F(y+a)-F(x+a)=D_hF(s)=b.
\]
Thus $(x,y)$ is counted by $\beta_F(a,b)$, contradicting
$\beta_F=0$.  This proves the claim.

Put $m=\delta_F(h,0)$.  Since $\delta_F\le2$, one has
$m\in\{0,1,2\}$.  If $m=0$, then $D_hF$ would map its $q$ inputs
injectively into the $q-1$ nonzero field elements, which is impossible.
Suppose that $m=2$.  By the claim, the remaining $q-2$ inputs give
$q-2$ distinct nonzero derivative values, and hence these values are
$\Fq^*\setminus\{\mu\}$ for some $\mu\in\Fq^*$.  Since $q>2$, one has
\[
  \sum_{t\in\Fq^*}t=0,
\]
and consequently
\[
  \sum_{x\in\Fq}D_hF(x)
  =\sum_{t\in\Fq^*\setminus\{\mu\}}t
  =-\mu\ne0.
\]
On the other hand, translation invariance gives
\[
  \sum_{x\in\Fq}D_hF(x)
  =\sum_{x\in\Fq}F(x+h)-\sum_{x\in\Fq}F(x)
  =0,
\]
a contradiction.  Hence $m=1$.  The claim now shows that $D_hF$
takes $0$ exactly once and every nonzero field element exactly once.
Thus $D_hF$ is a permutation of $\Fq$.  Since $h\ne0$ was arbitrary,
$F$ is perfect nonlinear and $\delta_F=1$.

Conversely, suppose that $F$ is perfect nonlinear and that
$\beta_F(a,b)\ge1$ for some $a,b\in\Fq^*$.  Then there exist
$x,y\in\Fq$ such that
\[
  F(y+a)-F(x+a)=b=F(y)-F(x).
\]
It follows that $D_aF(y)=D_aF(x)$.  Since $D_aF$ is a permutation,
one has $x=y$, contradicting $b\ne0$.  Hence $\beta_F=0$.
\end{proof}

\begin{corollary}\label{cor:apn-beta-lower-bound}
Every APN function $F:\Fq\to\Fq$ over a finite field of odd
characteristic satisfies $\beta_F\ge1$.  Equivalently, an APN function
cannot have boomerang uniformity zero.
\end{corollary}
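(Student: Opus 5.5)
This follows directly from Proposition~\ref{prop:beta-zero-pn}; the plan is a one-step contradiction argument. Let $F:\Fq\to\Fq$ be APN, so that $\delta_F=2$, and in particular $\delta_F\le2$. Suppose, for contradiction, that $\beta_F=0$. The first part of Proposition~\ref{prop:beta-zero-pn} then forces $\delta_F=1$, that is, $F$ is perfect nonlinear. This contradicts $\delta_F=2$. Hence $\beta_F\ne0$.

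Since $\beta_F$ is a maximum of counting numbers, it is a nonnegative integer. From $\beta_F\ne0$ we therefore get $\beta_F\ge1$. The stated equivalence is just a rephrasing of the same fact: $\beta_F\ge1$ holds exactly when $\beta_F\ne0$.

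No step here is genuinely hard. All the real work lies in Proposition~\ref{prop:beta-zero-pn}, which rules out $\delta_F(h,0)\in\{0,2\}$ by counting and by the vanishing of $\sum_{x}D_hF(x)$. The only thing to check is that the hypotheses match. The proposition assumes $q$ is an odd prime power, which is exactly the odd-characteristic assumption in the corollary. It also assumes $\delta_F\le2$, which is implied by the APN condition $\delta_F=2$.
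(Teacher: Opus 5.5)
Your proposal is correct and is exactly the paper's argument: assume $\beta_F=0$, apply Proposition~\ref{prop:beta-zero-pn} (its hypothesis $\delta_F\le2$ holds for an APN $F$) to get $\delta_F=1$, and this contradicts $\delta_F=2$. Your added observation that $\beta_F$ is a nonnegative integer, so $\beta_F\ne0$ gives $\beta_F\ge1$, just makes explicit a step the paper leaves implicit.
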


\begin{proof}
If an APN function had $\beta_F=0$, Proposition~\ref{prop:beta-zero-pn}
would imply that it is perfect nonlinear and hence has differential
uniformity $1$, contradicting the APN hypothesis.
\end{proof}

In Section~\ref{sec:affine-perturbations}, we attain this optimal value over
\(\F_{3^n}\), with \(n>1\) odd, by constructing large classes of affine
perturbations that remain APN.  To the best of our knowledge, the present work
provides the first general construction yielding infinite families of APN
functions with boomerang uniformity one.

\section[APN functions with boomerang uniformity one via affine perturbations]
{APN functions with boomerang uniformity one\\via affine perturbations}
\label{sec:affine-perturbations}

This section develops the affine-perturbation construction and applies it to
known DO PN families.

\subsection{The affine-perturbation construction}
\label{sec:affine-construction}

Starting from the sign-switched APN map $G$, we determine exactly which affine perturbations
reduce its boomerang uniformity from $2$ to the optimal value $1$, and then
isolate a natural translated family whose admissible parameters can be
counted exactly.  Let $A:\Fq\to\Fq$ be an arbitrary $\Fthree$-affine map,
equivalently, a polynomial function of algebraic degree at most one.  Write
uniquely
\[
  A(x)=M(x)+d,
  \qquad
  d=A(0),
  \qquad
  M(x)=A(x)-A(0)=\sum_{i=0}^{n-1}m_i x^{3^i},
\]
where $m_i\in\Fq$ and $M$ is the $\Fthree$-linear part of $A$, and put
\[
  G_A(x):=G(x)+A(x).
\]
The constant $d$ cancels from both derivatives and output differences.  Since
$G_A\sim_{\EA}G$, every $G_A$ is APN.  The next theorem gives an exact
affine criterion and the complete boomerang spectrum whenever the criterion
holds.

\begin{theorem}[Exact affine criterion and boomerang spectrum]
\label{thm:affine-criterion-spectrum}
For every $\Fthree$-affine map $A$ with linear part $M$,
\begin{equation}\label{eq:affine-avoidance}
  \beta_{G_A}=1
  \quad\Longleftrightarrow\quad
  M(u)\notin
  \{0,\,\pm L_1(u),\,\pm(\tau-g(u))\}
  \quad\text{for every }u\in\Fq\setminus\Fthree.
\end{equation}
If the avoidance condition on the right of
\eqref{eq:affine-avoidance} fails, then $\beta_{G_A}=2$.

Assume that the avoidance condition in \eqref{eq:affine-avoidance} holds and,
for $u\in\Fq\setminus\Fthree$, define
\[
  b_1(u):=G(1+u)-G(1)=g(u+1)+\tau,
  \qquad
  b_2(u):=G(2+u)-G(2)=g(u-1)+\tau.
\]
Define
\[
  \mathcal Z_M:=
  \{u\in\Fq\setminus\Fthree:
       M(u)\in\{\pm b_1(u),\,\pm b_2(u)\}\},
  \qquad
  z_M:=|\mathcal Z_M|.
\]
Then the exact BCT row distribution of $G_A$ is
\[
\begin{array}{c|c}
  \text{number of rows}&\text{BCT row type}\\ \hline
  2&0^{q-1}\\
  z_M&0^{q-7}1^6\\
  q-3-z_M&0^{q-9}1^8.
\end{array}
\]
Consequently, when $\beta_{G_A}=1$, the boomerang spectrum of $G_A$ is
\[
  \mathcal S_{\BCT}(G_A)
  =\{\!\{\,
      0^{(q-5)^2+2z_M},
      1^{8(q-3)-2z_M}
    \,\}\!\}.
\]
\end{theorem}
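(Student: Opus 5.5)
The approach is to reduce everything to the fiber structure of $D_aG$ already computed in Corollary~\ref{cor:derivative-two-element-fibers}. Since $M$ is $\Fthree$-linear, $D_aG_A(x)=D_aG(x)+M(a)$, so $D_aG_A$ and $D_aG$ have exactly the same fibers. By the fiber form \eqref{eq:bct-fiber-form}, $\beta_{G_A}(a,b)$ counts ordered pairs $(x,y)$ with $x\ne y$ in a common fiber of $D_aG$ and
\[
  G_A(y)-G_A(x)=G(y)-G(x)+M(y-x)=b.
\]

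For $a\in\Fthree^*$, $D_aG$ is a permutation by Theorem~\ref{thm:diff-spectrum}, so this row is $0^{q-1}$; this accounts for the two zero rows.

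Now fix $a\in\Fq^*\setminus\Fthree$ and put $u=u_a$. The four two-element fibers listed in Corollary~\ref{cor:derivative-two-element-fibers} produce eight ordered pairs.
\begin{itemize}
\item The pairs $\{1,1+u\}$ and $\{2,2+u\}$ have $y-x=\pm u$. They give the values $\pm(b_1+M(u))$ and $\pm(b_2+M(u))$.
\item The involution $T$ sends these pairs to the last two fibers and preserves $G(y)-G(x)$ by Lemma~\ref{lem:even-derivative}, but it reverses the sign of $y-x$. Since $\iota_a(2)=1-a$ and $\iota_a(2+u)=1-a-u$, the fiber $\{1-a,1-a-u\}$ gives $\pm(b_2-M(u))$. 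Likewise $\{2-a,2-a-u\}$ gives $\pm(b_1-M(u))$.
\end{itemize}
So the $a$-row is governed by the eight values $\pm(b_i\pm M(u))$ with $i=1,2$, where each occurrence of a nonzero value contributes $1$.

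The next step is to check when two of these eight values coincide. A value of the form $w$ versus $-w$ is a genuine collision only if $w=0$, and then it falls outside the table since $b\ne0$. The remaining collisions reduce, using $b_1,b_2\ne0$, $b_1\ne\pm b_2$, $b_1-b_2=-L_1(u)$ and $b_1+b_2=\tau-g(u)$ (all from the proof of Theorem~\ref{thm:bct-spectrum}), as follows:
\begin{itemize}
\item $b_i+M=\pm(b_i-M)$ forces $M(u)=0$, since $b_i\ne0$.
\item $b_1\pm M=\pm(b_2\pm M)$ forces either $b_1=\pm b_2$, which is impossible, or $M(u)=\pm L_1(u)$, or $M(u)=\pm(\tau-g(u))$, using $2=-1$ in characteristic $3$.
\end{itemize}
I will then verify, case by case, that whenever such a collision occurs the common value is nonzero. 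For instance, with $M=b_1+b_2$ the colliding value is $\pm(2b_1+b_2)=\pm(b_2-b_1)\ne0$. Hence each failure of avoidance produces a genuine entry equal to $2$, and $\beta_{G_A}=2$ by Theorem~\ref{thm:affine-criterion-spectrum}'s upper bound of $2$ (four fibers, each giving distinct ordered differences).

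To pass from a fixed $a$ to all $u$, I need that $a\mapsto u_a$ is a bijection of $\Fq\setminus\Fthree$. The relation $L_a(u)=-\tau$ is symmetric in $a$ and $u$, and $u\notin\Fthree$ iff $a\notin\Fthree$ by the argument of Corollary~\ref{cor:derivative-two-element-fibers}. This gives the equivalence \eqref{eq:affine-avoidance}.

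Under avoidance, the eight values are distinct. The value $0$ appears exactly when $M(u)\in\{\pm b_1,\pm b_2\}$, and then it appears as one $\pm$ pair. Distinctness rules out two pairs vanishing at once. This gives row type $0^{q-7}1^6$ for $u\in\mathcal Z_M$ and $0^{q-9}1^8$ otherwise, and the spectrum follows by summing the row types. The main obstacle I expect is the bookkeeping in the collision analysis, namely ensuring that every failing condition really yields a nonzero doubled value; I would organize it by the sign pattern of $y-x$.
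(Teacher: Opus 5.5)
Your proposal follows the paper's proof closely. Both use the observation that $D_aG_A=D_aG+M(a)$ has the same fibers as $D_aG$. Both use the four representatives $b_1+t,\ b_2+t,\ b_2-t,\ b_1-t$ with $t=M(u)$, obtained through the involution $\iota_a$, and the same list of pairwise collisions. Both use the involution $a\mapsto u_a$ to turn parameters $u$ into row indices. Your treatment of the avoiding case, with at most one vanishing representative, also matches the paper.

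There is one gap: the upper bound $\beta_{G_A}\le2$ when avoidance fails, and with it the exact row types $0^{q-5}2^4$ and $0^{q-7}1^42^2$. You justify it by citing the theorem you are proving, and by ``four fibers, each giving distinct ordered differences''. That observation only shows a given $b$ occurs at most once per fiber, so it bounds entries by $4$, not $2$.

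The missing step is to exclude triple coincidences. A nonzero value of multiplicity $\ge3$ needs three distinct indices, since $d_i=-d_i$ forces $d_i=0$. So two of your collision relations would have to share an index. The only $t$ at which two relations hold together is $t=0$, namely $d_1=d_4$ and $d_2=d_3$, and these involve disjoint indices. Any other two relations sharing an index would force $t$ to equal two different elements of $\{0,\pm L_1(u),\pm(\tau-g(u))\}$.

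These five values are pairwise distinct. Indeed $L_1(u)\ne0$ and $\tau-g(u)\ne0$. Moreover, $b_1-b_2=-L_1(u)$ and $b_1+b_2=\tau-g(u)$, so $L_1(u)=\pm(\tau-g(u))$ would force $b_1=0$ or $b_2=0$. The paper's proof states this distinctness explicitly, and adding it completes your argument. The equivalence \eqref{eq:affine-avoidance} itself only needs an entry $\ge2$ on failure, which you do establish. The gap therefore affects only the claim $\beta_{G_A}=2$ and the exact row types in the failing case.
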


\begin{proof}
For every $a\ne0$,
\[
  D_aG_A(x)=D_aG(x)+M(a).
\]
Thus $D_aG_A$ and $D_aG$ have the same fibers.  If $a\in\Fthree^*$, then
$D_aG$, and hence $D_aG_A$, is a permutation by
Theorem~\ref{thm:diff-spectrum}\textup{(i)}; therefore the $a$-row of the BCT
of $G_A$ is zero
by \eqref{eq:bct-fiber-form}.

Let $a\in\Fq^*\setminus\Fthree$, and put
\[
  u:=u_a=L_a^{-1}(-\tau),
  \qquad
  t:=M(u).
\]
Corollary~\ref{cor:derivative-two-element-fibers} gives $u\notin\Fthree$, and the
identity above shows that the four nonsingleton fibers of $D_aG_A$ are
\[
  \{1,1+u\},\qquad \{2,2+u\},\qquad
  \{1-a,1-a-u\},\qquad \{2-a,2-a-u\}.
\]
Abbreviate $b_i:=b_i(u)$ for $i=1,2$.  Equation~\eqref{eq:b-nonzero-general}
gives $b_1,b_2\ne0$, and
Lemma~\ref{lem:quadratic-identities} gives
\[
  b_1-b_2=-L_1(u),
  \qquad
  b_1+b_2=\tau-g(u).
\]
Since $g$ is PN, $L_1$ is injective, so $L_1(u)\ne0$.  Also
$\tau-g(u)\ne0$ by Corollary~\ref{cor:special-preimages}.  Hence the four
elements $\pm b_1,\pm b_2$ are distinct.

Orient each displayed fiber from its first listed point to its second.  By the
last assertion of Lemma~\ref{lem:even-derivative}, the corresponding
$G$-differences are $b_1,b_2,b_2,b_1$, respectively.  Their increments are
$u,u,-u,-u$, so the affine perturbation contributes $t,t,-t,-t$,
respectively.  Thus define the four representative $G_A$-differences by
\[
\begin{aligned}
  d_1&:=G_A(1+u)-G_A(1)=b_1+t,\\
  d_2&:=G_A(2+u)-G_A(2)=b_2+t,\\
  d_3&:=G_A(1-a-u)-G_A(1-a)=b_2-t,\\
  d_4&:=G_A(2-a-u)-G_A(2-a)=b_1-t.
\end{aligned}
\]
Their negatives arise from the opposite orientations.  By
\eqref{eq:bct-fiber-form}, the nonzero BCT columns in the $a$-row are therefore
obtained from the signed multiset
\[
  \{\pm d_1,\pm d_2,\pm d_3,\pm d_4\},
\]
with every occurrence of $0$ omitted.

Comparing the four representatives up to sign gives exactly the collisions
listed below.  The four nonzero values of $t$ for which a collision occurs, namely
$\pm L_1(u)$ and $\pm(\tau-g(u))$, are pairwise distinct.  Indeed, using
$b_1-b_2=-L_1(u)$ and $b_1+b_2=\tau-g(u)$, an equality
$L_1(u)=\pm(\tau-g(u))$ would force $b_1=0$ or $b_2=0$.
Direct substitution in $d_1,d_2,d_3,d_4$ gives the exact row types:
\[
\begin{array}{c|c|c}
  \text{condition on }t&\text{collision}&\text{BCT row type}\\ \hline
  t=0&d_1=d_4,\quad d_2=d_3&0^{q-5}2^4\\
  t=-L_1(u)&d_1=d_3&0^{q-7}1^4 2^2\\
  t=L_1(u)&d_4=d_2&0^{q-7}1^4 2^2\\
  t=\tau-g(u)&d_1=-d_2&0^{q-7}1^4 2^2\\
  t=-(\tau-g(u))&d_4=-d_3&0^{q-7}1^4 2^2.
\end{array}
\]
For each of the four nonzero values of $t$ in the table, the six nonzero BCT
columns can be identified explicitly.  If $t=\pm L_1(u)$, then
\[
  \{b\in\Fq^*: \beta_{G_A}(a,b)=2\}
  =\{\pm(\tau-g(u))\},
\]
whereas if $t=\pm(\tau-g(u))$, then
\[
  \{b\in\Fq^*: \beta_{G_A}(a,b)=2\}
  =\{\pm L_1(u)\}.
\]
In either case,
\[
  \{b\in\Fq^*: \beta_{G_A}(a,b)=1\}
  =\{\pm b_1,\pm b_2\}.
\]
For example, when $t=L_1(u)=b_2-b_1$,
\[
  (d_1,d_2,d_3,d_4)
  =\bigl(b_2,-(\tau-g(u)),b_1,-(\tau-g(u))\bigr).
\]
All six column indices are nonzero and pairwise distinct.  No other collision
occurs, so no BCT entry can exceed $2$.

\Needspace{7\baselineskip}
It remains to translate the parameter $u=u_a$ back to the BCT row index $a$.
The identity $L_{u_a}(a)=L_a(u_a)=-\tau$, together with the bijectivity of
$L_{u_a}$, gives $u_{u_a}=a$.  Hence $a\mapsto u_a$ is an involution of
$\Fq\setminus\Fthree$.  Consequently, every $u$ violating the avoidance
condition in \eqref{eq:affine-avoidance} corresponds to the unique row index
$a=u_u$, and conversely.  The row types above therefore prove
\eqref{eq:affine-avoidance}, including the assertion that $\beta_{G_A}=2$ when
the avoidance condition fails.

Suppose now that the avoidance condition in \eqref{eq:affine-avoidance}
holds.  From the definitions of the four representatives,
\[
\begin{alignedat}{2}
  d_1=0&\quad\Longleftrightarrow\quad t=-b_1,
  &\qquad d_2=0&\quad\Longleftrightarrow\quad t=-b_2,\\
  d_3=0&\quad\Longleftrightarrow\quad t=b_2,
  &d_4=0&\quad\Longleftrightarrow\quad t=b_1.
\end{alignedat}
\]
Because the four values $\pm b_1,\pm b_2$ are distinct, exactly one representative vanishes
when $t\in\{\pm b_1,\pm b_2\}$, and none vanishes otherwise.  Thus the row
has type $0^{q-7}1^6$ in the former case and $0^{q-9}1^8$ in the latter.
The involution $a\mapsto u_a$ shows
that exactly $z_M$ directions $a\in\Fq^*\setminus\Fthree$ have row type
$0^{q-7}1^6$, and the remaining $q-3-z_M$ such directions have row type
$0^{q-9}1^8$.  Together with the two zero rows indexed by
$a\in\Fthree^*$, this proves the asserted row distribution.  Finally,
\[
\begin{aligned}
  N_1^{\BCT}(G_A)
    &=6z_M+8(q-3-z_M)=8(q-3)-2z_M,\\
  N_0^{\BCT}(G_A)
    &=(q-1)^2-N_1^{\BCT}(G_A)=(q-5)^2+2z_M,
\end{aligned}
\]
which completes the proof.
\end{proof}

\begin{corollary}
\label{cor:plus-minus-x-specializations}
For each of the two choices of $g$ below, one has $\tau=g(1)=1$.
Let $\sigma\in\{\pm1\}$, put $A_\sigma(x):=\sigma x$, and write
\[
  G_\sigma:=G_{A_\sigma};
  \qquad
  G_\sigma(x)=g(x)+\epsilon(x)+\sigma x.
\]
Then the following statements hold over $\Fq$.
\begin{enumerate}[label=\textup{(\alph*)}]
\item If $g(x)=x^4$, the Gold function~\cite{CM97,DO68}, then
$G_\sigma$ is APN with boomerang uniformity one, and its boomerang spectrum is
\[
  \mathcal S_{\BCT}(G_\sigma)
  =\{\!\{0^{(q-5)^2},1^{8(q-3)}\}\!\}.
\]
\item If $g(x)=x^{10}+x^6-x^2$ is the Ding--Yuan function with
$\lambda=-1$~\cite{DY06}, then $G_\sigma$ is APN
with boomerang uniformity one.  Let $z_n=0,6,10,16$ according as
$\gcd(n,15)=1,3,5,15$, respectively.  Then its boomerang spectrum is
\[
  \mathcal S_{\BCT}(G_\sigma)
  =\{\!\{
      0^{(q-5)^2+2z_n},
      1^{8(q-3)-2z_n}
    \}\!\}.
\]
\end{enumerate}
\end{corollary}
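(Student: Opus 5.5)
The plan is to specialize Theorem~\ref{thm:affine-criterion-spectrum} to $A=A_\sigma$, so $M(u)=\sigma u$. Both the avoidance condition and the count $z_M$ then become statements about roots, in $\Fq\setminus\Fthree$, of finitely many explicit polynomials with coefficients in $\Fthree$. These polynomials do not depend on $n$. A root of such a polynomial lies in $\F_{3^n}$ if and only if it is a root of an irreducible factor over $\Fthree$ whose degree divides $n$. Since $n$ is odd, only odd-degree irreducible factors matter. So everything reduces to factoring a fixed finite list of polynomials over $\Fthree$.

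First I verify $\tau=1$ in both cases; this is immediate, e.g.\ $1+1-1=1$ for Ding--Yuan. Next I compute $L_1(u)=g(u+1)-g(u)-1$ explicitly.
\begin{itemize}[nosep]
\item For $g=x^4$, one has $(u+1)^4=u^4+u^3+u+1$ in characteristic $3$, so $L_1(u)=u^3+u$.
\item For $g=x^{10}+x^6-x^2$, I expand with $(u+1)^{10}=(u^9+1)(u+1)$ and $(u+1)^6=(u^3+1)^2$. This gives a linearized polynomial of the form $u^9+\dots$.
\end{itemize}

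The avoidance condition in \eqref{eq:affine-avoidance} excludes three kinds of values; I take them in turn.
\begin{itemize}[nosep]
\item $\sigma u\ne0$ holds trivially.
\item $\sigma u=\pm L_1(u)$ is a linearized equation. For $x^4$ it reads $u^3=0$ or $u^3=u$, forcing $u\in\Fthree$. For Ding--Yuan I reduce $L_1(u)\mp u=0$ to a linearized polynomial and show that its kernel in $\F_{3^n}$ is $\Fthree$. To do this I compare the associated conventional polynomial with $\gcd(x^n-1,\cdot)$ over $\Fthree$, which again only involves odd-degree factors.
\item $\sigma u=\pm(1-g(u))$ gives the polynomials $g(u)\pm u-1$. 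For $x^4$ these are $u^4\pm u-1$. They have no root in $\Fthree$ (check $u=0,\pm1$), hence no linear factor, hence no cubic factor either. Their roots therefore lie in $\F_9$ or $\F_{81}$, which meet $\F_{3^n}$ only in $\Fthree$ for odd $n$. For Ding--Yuan the analogous degree-$10$ polynomials must be shown to have, after removing $\Fthree$-roots, only even-degree irreducible factors.
\end{itemize}
With the avoidance condition established, Theorem~\ref{thm:affine-criterion-spectrum} gives $\beta_{G_\sigma}=1$. APN-ness follows from $G_\sigma\sim_{\EA}G$ and Theorem~\ref{thm:diff-spectrum}.

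Then I compute $z_M$. It counts the $u\in\Fq\setminus\Fthree$ satisfying $\sigma u=\pm b_1(u)$ or $\sigma u=\pm b_2(u)$, where $b_1(u)=g(u+1)+1$ and $b_2(u)=g(u-1)+1$. Since $\pm b_1,\pm b_2$ are pairwise distinct for $u\notin\Fthree$ (proof of Theorem~\ref{thm:bct-spectrum}), the four equations have disjoint solution sets. So $z_M$ is the sum of their numbers of roots in $\F_{3^n}\setminus\Fthree$. These roots are simple as long as the polynomials are squarefree, which I check along with the factorizations.
\begin{itemize}[nosep]
\item For $x^4$ there are four quartics, e.g.\ $(u\pm1)^4+1\mp\sigma u$. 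I divide out any $\Fthree$-roots and factor what remains; the claim is that every surviving factor has even degree, giving $z=0$.
\item For Ding--Yuan the four polynomials have degree $10$. I expect their factorizations over $\Fthree$, after removing $\Fthree$-linear factors, to contain exactly two cubic factors and two quintic factors in total, plus even-degree factors. Then the number of roots in $\F_{3^n}\setminus\Fthree$ is $6\cdot[3\mid n]+10\cdot[5\mid n]$, which gives $z_n=0,6,10,16$ according as $\gcd(n,15)=1,3,5,15$. The same factorization must be checked for both $\sigma=\pm1$; the substitution $u\mapsto -u$, which swaps $b_1$ and $b_2$, should relate the two signs and halve the work.
\end{itemize}
Plugging $z_M$ into the spectrum formula of Theorem~\ref{thm:affine-criterion-spectrum} finishes both parts.

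The main obstacle is the Ding--Yuan factorization bookkeeping. This means expanding $g(u\pm1)$ correctly in characteristic $3$, and factoring several degree-$10$ polynomials over $\Fthree$ to certify that the only odd-degree factors beyond linear ones are exactly the asserted cubics and quintics. I would do this by explicit expansion followed by a finite factorization, via Berlekamp or by computing $\gcd\bigl(P,x^{3^k}-x\bigr)$ for $k\le5$. The $x^4$ case is short by comparison.
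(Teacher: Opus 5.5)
Your proposal is correct and is essentially the paper's proof: you specialize Theorem~\ref{thm:affine-criterion-spectrum} to $M(u)=\sigma u$ and reduce both the avoidance condition and $z_M$ to fixed factorizations over $\Fthree$. For Ding--Yuan the paper records these as $g(X)-1-sX=(X^4-sX^3+sX+1)(X^6+sX^5+X^4-X^2-1)$ for $s\in\{\pm1\}$, and, with $b_{+1}(X):=g(X+1)+1$ and $b_{-1}(X)=b_{+1}(-X)$, as $b_{+1}(X)-X=(X^4-X-1)(X^6+X^5+X^3+X+1)$ and $b_{+1}(X)+X=(X^2-1)(X^3-X^2+1)(X^5-X^4+1)$, which confirms your predicted two cubics and two quintics; no separate check for the two values of $\sigma$ is needed, since the forbidden set and $\{\pm b_1(u),\pm b_2(u)\}$ are closed under negation. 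Your minor variations are all valid: the test $\gcd(x^n-1,\cdot)$ on the linearized associates for $\sigma u=\pm L_1(u)$, the no-$\Fthree$-root shortcut for the quartics, and deducing disjointness of the four solution sets from the distinctness of $\pm b_1,\pm b_2$ rather than from the paper's gcd computation.
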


\begin{proof}
For $A_\sigma(x)=\sigma x$, one has $d=0$ and $M(u)=\sigma u$.  Since, for
each $u$, both the forbidden set in \eqref{eq:affine-avoidance} and the set
$\{\pm b_1(u),\pm b_2(u)\}$ defining $\mathcal Z_M$ are closed under
negation, replacing $M$ by $-M$ changes neither the avoidance condition in
\eqref{eq:affine-avoidance} nor $z_M$.  It therefore suffices to take
$M=\operatorname{id}$.

For either choice of $g$, put
\[
  b_s(X):=g(X+s)+1
  \qquad (s\in\{\pm1\}).
\]
Since both choices of $g$ are even, one has
$b_{-1}(X)=b_{+1}(-X)$.

First let $g(x)=x^4$.  Then $L_1(u)=u^3+u$.  The equalities
$u=\pm L_1(u)$ have only solutions in $\Fthree$, whereas
$u=\pm(1-u^4)$ gives $u^4+u-1=0$ or $u^4-u-1=0$.
Both quartics are irreducible over $\Fthree$, and hence have no roots in the
odd-degree extension $\Fq$.  Thus the avoidance condition in
\eqref{eq:affine-avoidance} holds.  To compute $z_M$, one has
\[
  b_{+1}(X)-X=X^4+X^3-1,
  \qquad
  b_{+1}(X)+X=(X-1)^3(X+1).
\]
The quartic $X^4+X^3-1$ is irreducible over $\Fthree$, while
$(X-1)^3(X+1)$ has only the roots $\pm1$.  The same conclusions hold after
replacing $X$ by $-X$, so $z_M=0$.  Part~\textup{(a)} now follows from
Theorem~\ref{thm:affine-criterion-spectrum}.

Now let $g(x)=x^{10}+x^6-x^2$.  Then
$L_1(X)=X^9-X^3-X$, and
\[
\begin{aligned}
  L_1(X)+X&=X^9-X^3=(X^3-X)^3,\\
  L_1(X)-X&=X(X^2+1)(X^6-X^4+X^2+1).
\end{aligned}
\]
The first polynomial has only the roots in $\Fthree$, while the quadratic and
sextic factors in the second line are irreducible over $\Fthree$.  The two
remaining equalities arising from the forbidden set in
\eqref{eq:affine-avoidance} are $u=1-g(u)$ and $u=-(1-g(u))$, which are
equivalent to $g(u)=1-u$ and $g(u)=1+u$, respectively.  For
$s\in\{\pm1\}$,
\[
  g(X)-1-sX
  =(X^4-sX^3+sX+1)(X^6+sX^5+X^4-X^2-1),
\]
where both factors are irreducible over $\Fthree$.  Thus every root of either
of the two equations $g(X)=1\pm X$ has even degree over $\Fthree$, so neither
equation has a root in $\Fq$.  Thus the avoidance condition in
\eqref{eq:affine-avoidance} holds.

It remains to compute $z_M$.  Direct factorization gives
\[
\begin{aligned}
  b_{+1}(X)-X
    &=(X^4-X-1)(X^6+X^5+X^3+X+1),\\
  b_{+1}(X)+X
    &=(X^2-1)(X^3-X^2+1)(X^5-X^4+1).
\end{aligned}
\]
The quartic, sextic, cubic, and quintic factors displayed here are
irreducible over $\Fthree$.  The irreducible quartic and sextic factors have
no roots in $\Fq$, while the roots $\pm1$ of $X^2-1$ are excluded from
$\mathcal Z_M$.  Moreover,
\[
  \gcd\bigl(b_{+1}(X)+X,b_{+1}(-X)-X\bigr)=X^2-1,
\]
so the remaining root sets obtained from $b_{+1}$ and $b_{-1}$ are disjoint.  Since
$q=3^n$, the two cubic factors contribute six roots precisely when
$3\mid n$, and the two quintic factors contribute ten roots precisely when
$5\mid n$.
Consequently, $z_M=z_n$.
Substitution into Theorem~\ref{thm:affine-criterion-spectrum} proves
part~\textup{(b)}.
\end{proof}

Corollary~\ref{cor:plus-minus-x-specializations} gives two examples for which the fixed
perturbations $A(x)=\pm x$ yield boomerang uniformity one.  Such perturbations
are not universal: for a general DO PN function $g$, the admissibility of $A$
depends on $g$ through the criterion in Theorem~\ref{thm:affine-criterion-spectrum}.
A natural $g$-dependent family is provided by the derivatives
\[
D_cg(x)=L_c(x)+g(c).
\]
The next proposition and
Corollary~\ref{cor:natural-derivative-perturbations} show that, for every DO PN
function $g$ under our standing assumptions, exactly $(q-3)/2$ parameters
$c\in\Fq$ give $\beta_{G+D_cg}=1$.

\begin{proposition}
\label{prop:admissible-lc}
Put
\[
  \mathcal C_g:=
  \{c\in\Fq\setminus\Fthree:
       g(c)+\tau\notin g(\Fq)\}.
\]
Then $|\mathcal C_g|=(q-3)/2$.  Moreover, for every $c,d\in\Fq$,
\[
  \beta_{G+L_c+d}=1
  \quad\Longleftrightarrow\quad
  c\in\mathcal C_g.
\]
In particular, taking $d=g(c)$, one obtains
\[
  G(x)+D_cg(x)=g(x+c)+\tau\epsilon(x).
\]
Consequently, the function $g(x+c)+\tau\epsilon(x)$ has boomerang
uniformity one if and only if $c\in\mathcal C_g$.

For each $c\in\Fq\setminus\Fthree$, define
\begin{equation}\label{eq:kappa-c}
\begin{aligned}
  \kappa_c
  &:=\#\{\sigma\in\{\pm1\}:
       g(c+\sigma)+\tau\in g(\Fq)\}\\
  &=\#\{\sigma\in\{\pm1\}:c+\sigma\notin\mathcal C_g\}
  \in\{0,1,2\}.
\end{aligned}
\end{equation}
For $c\in\mathcal C_g$, one has
\[
  z_{L_c}=4\kappa_c\in\{0,4,8\}.
\]
\end{proposition}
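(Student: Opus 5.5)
We apply Theorem~\ref{thm:affine-criterion-spectrum} with $M=L_c$. The constant $d$ plays no role, so $\beta_{G+L_c+d}$ does not depend on $d$. The plan has three parts: reduce the avoidance condition in \eqref{eq:affine-avoidance} to the membership $g(c)+\tau\in g(\Fq)$, count $\mathcal C_g$ with a character sum, and then compute $z_{L_c}$ in the same way.

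\emph{Step 1 (avoidance condition).} The three kinds of forbidden value are handled separately.
\begin{itemize}
\item Since $g$ is PN, $L_c(u)=0$ with $u\ne0$ forces $c=0$.
\item By bilinearity, $L_c(u)=\pm L_1(u)$ means $L_{c\mp1}(u)=0$, which forces $c=\pm1$. So every $c\in\Fthree$ violates the condition, while for $c\notin\Fthree$ these two kinds never occur.
\item For $c\notin\Fthree$ we use $g(u\pm c)=g(u)+g(c)\pm L_c(u)$. Then $L_c(u)=\tau-g(u)$ is equivalent to $g(u+c)=g(c)+\tau$, and $L_c(u)=-(\tau-g(u))$ is equivalent to $g(u-c)=g(c)+\tau$.
\end{itemize}
The restriction $u\notin\Fthree$ turns out to be vacuous. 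For $u=0$ the equation would give $\tau=0$. For $u=\pm1$ it would give $L_c(\pm1)=\pm L_1(c)=0$, which is impossible since $L_1$ is injective and $c\ne0$. Hence the condition fails exactly when some $w=u\pm c$ satisfies $g(w)=g(c)+\tau$. Theorem~\ref{thm:affine-criterion-spectrum} then gives $\beta=1$ if and only if $c\in\mathcal C_g$. Setting $d=g(c)$ yields $G+D_cg=g(x+c)+\tau\epsilon(x)$.

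\emph{Step 2 (counting $\mathcal C_g$).} Let $N:=\#\{(x,y):g(x)-g(y)=\tau\}$. Orthogonality gives
\[
  N=q^{-1}\sum_b |W_g(0,b)|^2\zeta^{-\Tr(b\tau)}.
\]
For $b\ne0$, Lemma~\ref{lem:FL-walsh} gives $W_g(0,b)^2=-q$, and $\overline{W_g(0,b)}=W_g(0,-b)$ is $\pm$ the same, so $|W_g(0,b)|^2=q$. Hence $N=q+\sum_{b\ne0}\zeta^{-\Tr(b\tau)}=q-1$.

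We know $g(y)+\tau\ne0$, because $-\tau\notin g(\Fq)$ by Corollary~\ref{cor:special-preimages}. So each $y$ with $g(y)+\tau\in g(\Fq)$ contributes exactly two values of $x$, by Theorem~\ref{thm:CM-two-to-one}. Therefore exactly $(q-1)/2$ values of $y$ have this property. Among $y\in\Fthree$:
\begin{itemize}
\item $y=0$ qualifies, since $\tau\in g(\Fq)$;
\item $y=\pm1$ do not, since $2\tau=-\tau\notin g(\Fq)$.
\end{itemize}
So $(q-3)/2$ elements of $\Fq\setminus\Fthree$ have $g(y)+\tau\in g(\Fq)$, leaving $|\mathcal C_g|=(q-3)/2$. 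The second expression for $\kappa_c$ in \eqref{eq:kappa-c} is immediate: $c+\sigma\notin\Fthree$ whenever $c\notin\Fthree$.

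\emph{Step 3 ($z_{L_c}=4\kappa_c$).} This is the main obstacle. Fix $c\in\mathcal C_g$. We must solve $L_c(u)=\varepsilon b_s(u)$ with $\varepsilon,s\in\{\pm1\}$, where $b_{+1}=b_1$ and $b_{-1}=b_2$. Since $\pm b_1(u),\pm b_2(u)$ are distinct, each $u$ satisfies at most one of these four equations. So $z_{L_c}$ is the sum of the four solution counts.

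I would expand $b_s(u)=g(u)+\tau+sL_1(u)+\tau$ and rewrite each equation in the form $g(w)=g(c+\sigma)+\tau$, using $L_c\pm L_1=L_{c\pm1}$ and $g(u+c')=g(u)+g(c')+L_{c'}(u)$. Here $\sigma=-\varepsilon s$ (or a similar sign rule) and $w=u\pm(c+\sigma)$ is a translate of $u$. I expect the $2\tau$ term to combine with $-g(c)$ and $\mp L_1(c)$ into $-g(c+\sigma)-\tau$, modulo sign bookkeeping that must be checked.

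Each such equation then has two solutions $w$ if $c+\sigma\notin\mathcal C_g$, and none otherwise. We must also exclude solutions with $u\in\Fthree$. For these, I would check directly that $u\in\Fthree$ forces $g(c+\sigma)+\tau\in g(c\pm\Fthree)$, and rule this out because $c\in\mathcal C_g$.

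Each $\sigma$ arises from two pairs $(\varepsilon,s)$. So each $\sigma$ counted by $\kappa_c$ contributes $2\cdot2=4$ solutions, giving $z_{L_c}=4\kappa_c$.
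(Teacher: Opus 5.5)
Your proposal follows the paper's proof in all three parts. You reduce the avoidance condition to $g(c\pm u)=g(c)+\tau$. You count solutions of $g(x)-g(y)=\tau$ and divide using the two-to-one property. You rewrite the four equations $L_c(u)=\pm b_1(u),\pm b_2(u)$ as $g(u\pm(c+\sigma))=g(c+\sigma)+\tau$, with disjointness coming from the distinctness of $\pm b_1(u),\pm b_2(u)$.

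Your Walsh computation of $N=q-1$ is correct but heavier than needed. Since $\tau\ne0$, the solution pairs are indexed by $a=x-y\in\Fq^*$, and $D_ag(y)=\tau$ has exactly one solution for each such $a$. The sign bookkeeping you left unchecked does work out:
\begin{itemize}
\item for $\varepsilon=-1$, the equation $L_c(u)=-b_s(u)$ becomes $L_{c+s}(u)=\tau-g(u)$, i.e.\ $g(u+(c+s))=g(c+s)+\tau$;
\item for $\varepsilon=+1$, the equation $L_c(u)=b_s(u)$ becomes $L_{c-s}(u)=g(u)-\tau$, i.e.\ $g(u-(c-s))=g(c-s)+\tau$.
\end{itemize}
So $\sigma=-\varepsilon s$, as you guessed. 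The ``two solutions'' claim also needs $g(c+\sigma)+\tau\neq0$, which holds because $g^{-1}(-\tau)=\varnothing$.

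The one step that fails as written is the exclusion of solutions $u\in\Fthree$ in Step~3. You propose to rule these out ``because $c\in\mathcal C_g$''. That hypothesis only concerns $g(c)+\tau$. It says nothing about whether $g(c+\sigma)+\tau$ lies in $g(c+\Fthree)$, so it cannot do this job.

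The correct argument, which is the paper's, uses only $e:=c+\sigma\neq0$:
\begin{itemize}
\item $u=0$ would give $\tau=0$;
\item $u=\pm1$ gives $g(u)=\tau$, so $g(u\pm e)=g(e)+\tau$ reduces to $L_e(u)=0$, contradicting the bijectivity of $L_e$.
\end{itemize}
Since this uses only $c\notin\Fthree$, the paper can reuse the same count for $c\notin\Fthree\cup\mathcal C_g$ in Corollary~\ref{cor:lc-bct-spectrum}(b). With this replacement, your argument is complete.
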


\begin{proof}
We first count $\mathcal C_g$.  Put $D:=g(\Fq^*)$.  We claim that exactly
$(q-3)/4$ elements $z\in D$ satisfy $z+\tau\in D$.  Since $g$ is PN,
there are exactly $q-1$ pairs $(w,c)\in\Fq^2$ satisfying
$g(w)-g(c)=\tau$: for each $a=w-c\in\Fq^*$, the equation
$D_ag(c)=\tau$ has a unique solution.  Exactly two of these pairs have
$c=0$, namely $(w,c)=(1,0),(-1,0)$, and none has $w=0$, since
$g^{-1}(-\tau)=\varnothing$ by Corollary~\ref{cor:special-preimages}.  Hence
there are $q-3$ such pairs in $(\Fq^*)^2$.  Since every element of $D$ has
exactly two preimages by Theorem~\ref{thm:CM-two-to-one}, each ordered pair
$(z+\tau,z)\in D^2$ has four ordered preimages under $g\times g$, which
proves the claim.
Since $-\tau\notin D$, one has $z+\tau\in g(\Fq)$ if and only if
$z+\tau\in D$.  Therefore the number
of nonzero $c$ satisfying $g(c)+\tau\notin g(\Fq)$ is
\[
  2\left(\frac{q-1}{2}-\frac{q-3}{4}\right)=\frac{q+1}{2}.
\]
The two elements $c=\pm1$ are among them, whereas $c=0$ is not.  Removing
$\Fthree$ gives $|\mathcal C_g|=(q-3)/2$.

We next prove the asserted equivalence.  For $M=L_c$, the choices $c=0,1,-1$
give $M(u)=0,L_1(u),-L_1(u)$, respectively, so each violates the avoidance
condition in \eqref{eq:affine-avoidance}.  Now suppose
$c\notin\Fthree$.  Since $c,c-1,c+1\ne0$, the PN property of $g$ implies
that $L_c,L_{c-1}$, and $L_{c+1}$ are bijective.  Therefore, for every
$u\ne0$, one has
\[
  L_c(u)\notin\{0,\pm L_1(u)\}.
\]
For every $u\in\Fq$, the two remaining equalities arising from the forbidden
set in \eqref{eq:affine-avoidance} are equivalent to
\[
\begin{aligned}
  L_c(u)=\tau-g(u)
    &\quad\Longleftrightarrow\quad
      g(c+u)=g(c)+\tau,\\
  L_c(u)=-(\tau-g(u))
    &\quad\Longleftrightarrow\quad
      g(c-u)=g(c)+\tau.
\end{aligned}
\]
Consequently, if $c\in\mathcal C_g$, then $L_c(u)$ does not lie in the
forbidden set in \eqref{eq:affine-avoidance} for any
$u\in\Fq\setminus\Fthree$, so the avoidance condition in
\eqref{eq:affine-avoidance} holds.  Conversely, still assuming
$c\notin\Fthree$,
suppose that $c\notin\mathcal C_g$.  Choose $w\in\Fq$ such that
$g(w)=g(c)+\tau$, and put $u:=w-c$.  Since $\tau\ne0$, one has $u\ne0$, and
the first displayed equivalence gives $L_c(u)=\tau-g(u)$.  If $u=\pm1$, then
$g(u)=\tau$, so $L_c(u)=0$, contradicting the bijectivity
of $L_c$.  Therefore $u\in\Fq\setminus\Fthree$, and the equality above
violates the avoidance condition on the right-hand side of
\eqref{eq:affine-avoidance} for $M=L_c$.  Hence
$\beta_{G+L_c+d}=2$.  This proves the asserted equivalence.  The translated
identity in the statement follows directly from the definition of $L_c$.

It remains to determine $z_{L_c}$.  Let $c\in\mathcal C_g$ and
$u\in\Fq\setminus\Fthree$.  Since $M=L_c$ satisfies the avoidance condition
of Theorem~\ref{thm:affine-criterion-spectrum}, the last part of its proof
shows that at most one of $d_1,d_2,d_3,d_4$ can vanish for a fixed $u$, and
that
\[
  u\in\mathcal Z_{L_c}
  \quad\Longleftrightarrow\quad
  \text{one of }d_1,d_2,d_3,d_4\text{ is zero}.
\]
Substituting $t=L_c(u)$ into the displayed formulas for $d_1,d_2,d_3,d_4$
in that proof, the bilinearity of $(a,x)\mapsto L_a(x)$ gives
\[
\begin{aligned}
  d_1&=g(c+1+u)-g(c+1)-\tau,\\
  d_2&=g(c-1+u)-g(c-1)-\tau,\\
  d_3&=g(c+1-u)-g(c+1)-\tau,\\
  d_4&=g(c-1-u)-g(c-1)-\tau.
\end{aligned}
\]

Consider first $e=c+1$.  If $g(e)+\tau\notin g(\Fq)$, neither $d_1$ nor
$d_3$ vanishes.  Suppose that $g(e)+\tau\in g(\Fq)$.  This image value is
nonzero, because $g^{-1}(-\tau)=\varnothing$ by
Corollary~\ref{cor:special-preimages}.  Since $g$ is a DO PN function,
Theorem~\ref{thm:CM-two-to-one} shows that
$d_1=0$ has exactly two solutions $u$ and $d_3=0$ has exactly two solutions
$u$.  None of these solutions lies in $\Fthree$: $u=0$ is impossible because
$\tau\ne0$, while $u=\pm1$ would give
$g(e+u)=g(e)+\tau=g(e)+g(u)$ and hence $L_e(u)=0$,
contradicting the bijectivity of $L_e$.  By the at-most-one property just
noted, the four solutions obtained from $d_1=0$ and $d_3=0$ are distinct
elements of $\Fq\setminus\Fthree$.

The same argument with $e=c-1$ shows that $d_2=0$ and $d_4=0$ together
give exactly four elements of $\Fq\setminus\Fthree$ if
$g(c-1)+\tau\in g(\Fq)$, and none otherwise.  These four elements are
disjoint from those obtained from $d_1=0$ and $d_3=0$, again because at most
one of $d_1,d_2,d_3,d_4$ can vanish for a fixed $u$.  Therefore each
$e\in\{c-1,c+1\}$ satisfying $g(e)+\tau\in g(\Fq)$ contributes exactly four
elements of $\mathcal Z_{L_c}$.  By \eqref{eq:kappa-c},
$z_{L_c}=4\kappa_c$.
\end{proof}

The preceding proposition now yields the following natural translated family.

\begin{corollary}[Main construction: APN functions with boomerang uniformity one]
\label{cor:natural-derivative-perturbations}
For each $c\in\Fq$, define
\[
  \widetilde G_c(x)
  :=G(x)+D_cg(x)
  =g(x+c)+\tau\epsilon(x).
\]
Then the differential and boomerang uniformities of $\widetilde G_c$ are
\[
  \delta_{\widetilde G_c}=2,
  \qquad
  \beta_{\widetilde G_c}=
  \begin{cases}
    1, & c\in\mathcal C_g,\\
    2, & c\in\Fq\setminus\mathcal C_g.
  \end{cases}
\]
Consequently, among the $q$ parameter values $c\in\Fq$, exactly
$(q-3)/2$ give boomerang uniformity $1$, and exactly $(q+3)/2$ give
boomerang uniformity $2$.
\end{corollary}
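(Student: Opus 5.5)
The corollary follows almost immediately by combining Proposition~\ref{prop:admissible-lc} with Theorem~\ref{thm:affine-criterion-spectrum} and the EA-invariance of the differential uniformity. I would proceed in three short steps.

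\emph{Identification.} First I would verify the identity $\widetilde G_c=G+D_cg$ and cast it in the form $G_A$. Since $g(0)=0$, we have $D_cg(x)=g(x+c)-g(x)=L_c(x)+g(c)$. Hence $G(x)+D_cg(x)=g(x)+\tau\epsilon(x)+g(x+c)-g(x)=g(x+c)+\tau\epsilon(x)$. This exhibits $\widetilde G_c$ as $G_A$ with the $\Fthree$-affine map $A=L_c+g(c)$, whose linear part is $M=L_c$ and whose constant is $d=g(c)$. This works uniformly for every $c\in\Fq$, including $c\in\Fthree$, where $L_0=0$ and $L_{\pm1}=\pm L_1$.

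\emph{Differential uniformity.} Next I would show $\delta_{\widetilde G_c}=2$. Adding an affine map does not change differential uniformity, since $D_a\widetilde G_c=D_aG+L_c(a)$ is just a translate of $D_aG$. Each DDT row of $\widetilde G_c$ is therefore a shifted copy of the corresponding row of $G$, so Theorem~\ref{thm:diff-spectrum} gives $\delta_{\widetilde G_c}=\delta_G=2$. In fact the differential spectrum is the same as that of $G$.

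\emph{Boomerang uniformity.} Finally I would apply Proposition~\ref{prop:admissible-lc} with $d=g(c)$: it gives $\beta_{\widetilde G_c}=1$ if and only if $c\in\mathcal C_g$. For $c\notin\mathcal C_g$, Theorem~\ref{thm:affine-criterion-spectrum} says every $G_A$ has $\beta_{G_A}\in\{1,2\}$, and equals $2$ exactly when the avoidance condition fails. So $\beta_{\widetilde G_c}=2$ for those $c$. The proof of the proposition already checks that the condition fails both for $c\in\Fthree$ and for $c\notin\Fthree$ with $c\notin\mathcal C_g$. The count $|\mathcal C_g|=(q-3)/2$ is also part of that proposition, and subtracting it from $q$ gives $(q+3)/2$.

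There is no real obstacle here. The only point needing care is to invoke the dichotomy $\beta_{G_A}\in\{1,2\}$ explicitly, so that a failure of the criterion yields the value $2$ rather than merely $\beta\ne1$.
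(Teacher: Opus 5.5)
Your proposal is correct and follows essentially the same route as the paper: write $\widetilde G_c=G_A$ with $A=L_c+g(c)$, get $\delta_{\widetilde G_c}=\delta_G=2$ because each DDT row of $\widetilde G_c$ is a translate of the corresponding row of $G$ (the paper's appeal to $\widetilde G_c\sim_{\EA}G$), and read off $\beta_{\widetilde G_c}\in\{1,2\}$ and the count $(q-3)/2$ from Proposition~\ref{prop:admissible-lc} with $d=g(c)$ together with Theorem~\ref{thm:affine-criterion-spectrum}. Your explicit use of the dichotomy to turn a failed avoidance condition into $\beta_{\widetilde G_c}=2$ is also exactly what the paper's proof relies on.
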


\begin{proof}
The displayed identity follows from
$D_cg(x)=g(x+c)-g(x)$ and \eqref{eq:sign-switch}.
Since $\widetilde G_c\sim_{\EA}G$, one has
$\delta_{\widetilde G_c}=\delta_G=2$.  Taking $d=g(c)$ in
Proposition~\ref{prop:admissible-lc} and applying
Theorem~\ref{thm:affine-criterion-spectrum} gives the stated values of
$\beta_{\widetilde G_c}$.  Finally,
$\lvert\Fq\setminus\mathcal C_g\rvert=q-(q-3)/2=(q+3)/2$.
\end{proof}

We next complete the main result by determining the boomerang spectrum for every
parameter, including the nonadmissible parameters and the three special
parameters $c\in\Fthree$.

\begin{corollary}[Complete boomerang spectra of the main family]
\label{cor:lc-bct-spectrum}
For each $c\in\Fq$, let $\widetilde G_c$ be as in
Corollary~\ref{cor:natural-derivative-perturbations}.  For
$c\in\Fq\setminus\Fthree$, let $\kappa_c$ be defined by \eqref{eq:kappa-c}.
\begin{enumerate}[label=\textup{(\alph*)}]
\item If $c\in\mathcal C_g$, the exact BCT row distribution of
$\widetilde G_c$ is
\[
\begin{array}{c|c}
  \text{number of rows}&\text{BCT row type}\\ \hline
  2&0^{q-1}\\
  4\kappa_c&0^{q-7}1^6\\
  q-3-4\kappa_c&0^{q-9}1^8.
\end{array}
\]
Consequently, the boomerang spectrum of $\widetilde G_c$ is
\[
  \mathcal S_{\BCT}(\widetilde G_c)
  =\{\!\{\,
      0^{(q-5)^2+8\kappa_c},
      1^{8(q-3)-8\kappa_c}
    \,\}\!\}.
\]
\item If $c\in\Fq\setminus\bigl(\Fthree\cup\mathcal C_g\bigr)$, the exact
BCT row distribution of $\widetilde G_c$ is
\[
\begin{array}{c|c}
  \text{number of rows}&\text{BCT row type}\\ \hline
  2&0^{q-1}\\
  4&0^{q-7}1^4 2^2\\
  4\kappa_c&0^{q-7}1^6\\
  q-7-4\kappa_c&0^{q-9}1^8.
\end{array}
\]
Consequently, the boomerang spectrum of $\widetilde G_c$ is
\[
  \mathcal S_{\BCT}(\widetilde G_c)
  =\{\!\{\,
      0^{(q-5)^2+8(\kappa_c+1)},
      1^{8(q-5-\kappa_c)},
      2^8
    \,\}\!\}.
\]

\item If $c=0$, the exact BCT row distribution of $\widetilde G_c$ is
\[
\begin{array}{c|c}
  \text{number of rows}&\text{BCT row type}\\ \hline
  2&0^{q-1}\\
  q-3&0^{q-5}2^4.
\end{array}
\]
Consequently, the boomerang spectrum of $\widetilde G_0$ is
\[
  \mathcal S_{\BCT}(\widetilde G_0)
  =\{\!\{\,
      0^{q^2-6q+13},
      2^{4(q-3)}
    \,\}\!\}.
\]

\item If $c\in\Fthree^*$, the exact BCT row distribution of
$\widetilde G_c$ is
\[
\begin{array}{c|c}
  \text{number of rows}&\text{BCT row type}\\ \hline
  2&0^{q-1}\\
  q-3&0^{q-7}1^4 2^2.
\end{array}
\]
Consequently, the boomerang spectrum of $\widetilde G_c$ is
\[
  \mathcal S_{\BCT}(\widetilde G_c)
  =\{\!\{\,
      0^{q^2-8q+19},
      1^{4(q-3)},
      2^{2(q-3)}
    \,\}\!\}.
\]
\end{enumerate}
\end{corollary}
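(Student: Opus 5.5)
Every part follows from the row-by-row analysis in the proof of Theorem~\ref{thm:affine-criterion-spectrum}, applied with $M=L_c$ and $t=L_c(u)$, where $u=u_a$ runs over $\Fq\setminus\Fthree$ via the involution $a\mapsto u_a$. The two rows indexed by $a\in\Fthree^*$ are zero rows in all cases. For the other $q-3$ rows, the row type is determined by two things: which collision $t\in\{0,\pm L_1(u),\pm(\tau-g(u))\}$ occurs (if any), and whether some $d_i$ vanishes. In each row we then simply count entries.

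\textbf{Part (a).} This is immediate from Theorem~\ref{thm:affine-criterion-spectrum} together with $z_{L_c}=4\kappa_c$ from Proposition~\ref{prop:admissible-lc}. Substituting gives the counts $(q-5)^2+8\kappa_c$ and $8(q-3)-8\kappa_c$.

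\textbf{Part (c).} Here $L_0=0$, so $t=0$ for every $u$ and every row has type $0^{q-5}2^4$. Equivalently, $\widetilde G_0=G$ up to adding the constant $0$, which is Theorem~\ref{thm:bct-spectrum}. The zero count is $(q-1)^2-4(q-3)=q^2-6q+13$.

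\textbf{Part (d).} For $c=\pm1$ we have $L_c(u)=\pm L_1(u)$, so every row has collision type $0^{q-7}1^42^2$. We must check that no $d_i$ vanishes, so that all six columns are nonzero. The proof of Theorem~\ref{thm:affine-criterion-spectrum} already identifies the six columns as $\{\pm b_1,\pm b_2,\pm(\tau-g(u))\}$, and all of these are nonzero. The counts are $4(q-3)$ ones, $2(q-3)$ twos, and $(q-1)^2-6(q-3)=q^2-8q+19$ zeros.

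\textbf{Part (b), the main step.} Let $c\notin\Fthree\cup\mathcal C_g$. By Proposition~\ref{prop:admissible-lc}, $L_c(u)\notin\{0,\pm L_1(u)\}$ for all $u\ne0$. The equations $L_c(u)=\pm(\tau-g(u))$ are equivalent to $g(c\pm u)=g(c)+\tau$. Since $g(c)+\tau$ is a nonzero value in $g(\Fq)$, Theorem~\ref{thm:CM-two-to-one} gives exactly two $w$ with $g(w)=g(c)+\tau$. So each sign contributes exactly two $u$'s, namely $u=\pm(w-c)$ up to the appropriate sign, and all of them lie outside $\Fthree$ by the argument in Proposition~\ref{prop:admissible-lc}. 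These four $u$ are distinct, because the values $\tau-g(u)$ and $-(\tau-g(u))$ differ. This gives exactly four collision rows, each of type $0^{q-7}1^42^2$; they have no zero column by the explicit column sets listed in the proof of the theorem.

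For the remaining $q-7$ rows, the avoidance condition holds row-wise, so the row is $0^{q-7}1^6$ or $0^{q-9}1^8$ according as some $d_i$ vanishes. We then repeat the counting argument of Proposition~\ref{prop:admissible-lc}. The formulas $d_1,\dots,d_4$ in terms of $g(c\pm1\pm u)$ are unchanged, and each $e\in\{c\pm1\}$ with $g(e)+\tau\in g(\Fq)$ yields four distinct $u\notin\Fthree$ with some $d_i=0$.

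\textbf{Expected obstacle.} The delicate point is showing these $4\kappa_c$ values of $u$ avoid the four collision $u$'s, since at-most-one-vanishing was proved there only under avoidance. My plan is the following. At a collision $u$, the six nonzero columns exclude $0$, and the theorem's table shows no $d_i$ is zero there. Hence the solutions of $d_i=0$ automatically lie in non-collision rows, and within those rows at most one $d_i$ vanishes. This gives $4\kappa_c$ rows of type $1^6$ and $q-7-4\kappa_c$ rows of type $1^8$.

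Summing: the ones number $16+24\kappa_c+8(q-7-4\kappa_c)=8(q-5-\kappa_c)$, the twos number $8$, and the zeros number the complement, $(q-5)^2+8(\kappa_c+1)$.
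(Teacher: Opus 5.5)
Your proposal is correct and follows the paper's proof essentially step by step. Parts (a), (c) and (d) are read off from Theorem~\ref{thm:affine-criterion-spectrum}, Theorem~\ref{thm:bct-spectrum} and the collision table. Part (b) counts the four collision parameters via the two-to-one property and then reruns the $\kappa_c$ count of Proposition~\ref{prop:admissible-lc}. The only cosmetic difference is how you get that at most one $d_i$ vanishes: you argue that no $d_i$ vanishes at the four collision parameters. The paper instead observes that this property holds for every $u$ directly, since $\pm b_1(u),\pm b_2(u)$ are distinct regardless of avoidance. Both arguments work.
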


\begin{proof}
Apply Theorem~\ref{thm:affine-criterion-spectrum} to
$\widetilde G_c=G_A$, where $A=D_cg=L_c+g(c)$ has linear part $L_c$.
Part \textup{(a)} follows by substituting
$z_{L_c}=4\kappa_c$ from Proposition~\ref{prop:admissible-lc} into
Theorem~\ref{thm:affine-criterion-spectrum}.

For part \textup{(b)}, suppose that
$c\in\Fq\setminus(\Fthree\cup\mathcal C_g)$.  The proof of
Proposition~\ref{prop:admissible-lc} shows that the avoidance condition in
\eqref{eq:affine-avoidance} can fail precisely when
\[
  g(c+u)=g(c)+\tau
  \qquad\text{or}\qquad
  g(c-u)=g(c)+\tau.
\]
Since $c\notin\mathcal C_g$, the common right-hand side is a nonzero element
of $g(\Fq)$ by Corollary~\ref{cor:special-preimages}.  By
Theorem~\ref{thm:CM-two-to-one}, each equation has exactly
two solutions.  As in the proof of Proposition~\ref{prop:admissible-lc}, none
lies in $\Fthree$; moreover, by the same equivalences the two pairs are
disjoint, since a common solution would give $g(u)=\tau$, and hence
$u=\pm1$.  Thus precisely four
parameter values $u\in\Fq\setminus\Fthree$ violate
\eqref{eq:affine-avoidance}.  For these four values,
$t=L_c(u)=\pm(\tau-g(u))$, so the corresponding cases in the collision table
in the proof of Theorem~\ref{thm:affine-criterion-spectrum} give type
$0^{q-7}1^4 2^2$ and no vanishing $d_i$.

For every $u\in\Fq\setminus\Fthree$, the four values
$\pm b_1(u),\pm b_2(u)$ are distinct, independently of the avoidance condition
in \eqref{eq:affine-avoidance},
as shown in the proof of Theorem~\ref{thm:affine-criterion-spectrum}.
Hence at most one of $d_1,d_2,d_3,d_4$ can vanish for a fixed $u$.
This allows us to repeat the counting argument from
Proposition~\ref{prop:admissible-lc} even when $c\notin\mathcal C_g$.
The same calculation in the final two paragraphs of the proof of
Proposition~\ref{prop:admissible-lc}, which uses only $c\notin\Fthree$,
shows that $d_1=0$ and $d_3=0$ have
altogether four distinct solutions $u$ precisely when
$g(c+1)+\tau\in g(\Fq)$, while $d_2=0$ and $d_4=0$ have altogether four
distinct solutions $u$ precisely when $g(c-1)+\tau\in g(\Fq)$.  Hence, by
\eqref{eq:kappa-c}, one of $d_1,d_2,d_3,d_4$ vanishes for exactly
$4\kappa_c$ values of $u\in\Fq\setminus\Fthree$.  Each such value gives
type $0^{q-7}1^6$.  The remaining $q-7-4\kappa_c$ parameter values give
type $0^{q-9}1^8$.  The involution
$a\mapsto u_a$ from the proof of
Theorem~\ref{thm:affine-criterion-spectrum} shows that these parameter
counts are exactly the corresponding BCT row counts.  Summing the row types gives the asserted
boomerang spectrum.

If $c=0$, then $\widetilde G_0=G$, so part \textup{(c)} follows from
Theorem~\ref{thm:bct-spectrum}.

Finally, let $c\in\Fthree^*$.  For $c=1$ and $c=-1$, respectively, the
third and second rows of the collision and BCT row-type table in the proof of
Theorem~\ref{thm:affine-criterion-spectrum} apply, since
$t=L_c(u)=cL_1(u)$.  Hence every row indexed by a direction
$a\in\Fq^*\setminus\Fthree$ has type
$0^{q-7}1^4 2^2$, which proves part \textup{(d)}.
\end{proof}

\begin{remark}[A Gold example realizing all eight spectrum types]
\label{rem:gold-all-kappa}
Let $g(x)=x^{10}=x^{3^2+1}$ on $\Fq$, where $q=3^n$ and $n\ge9$ is
odd.  This is a Gold DO PN function~\cite{CM97,DO68}.
Put
\[
  P_0(X):=X^{10}+1,
  \qquad
  P_{\pm1}(X):=(X\pm1)^{10}+1.
\]
The derivatives $P_0'=X^9$ and $P_{\pm1}'=(X\pm1)^9$ have their only roots at
$0$ and $\mp1$, respectively, where the corresponding polynomial equals
$1$; hence $P_0,P_{+1}$, and $P_{-1}$ are squarefree.  Using
$P_{\pm1}=X^{10}\pm X^9\pm X-1$, a direct calculation also gives
\[
  \gcd(P_0,P_{+1})=\gcd(P_0,P_{-1})=\gcd(P_{+1},P_{-1})=1.
\]
Consequently, every nonempty product of the polynomials $P_0,P_{+1},P_{-1}$ is
squarefree.

For
$\boldsymbol\varepsilon=(\varepsilon_0,\varepsilon_{+1},\varepsilon_{-1})
\in\{\pm1\}^3$, let
\[
  N_{\boldsymbol\varepsilon}
  :=\#\{c\in\Fq:\chi(P_j(c))=\varepsilon_j
                  \text{ for }j\in\{0,+1,-1\}\}.
\]
Put $J:=\{0,+1,-1\}$.  Since $-1$ is a nonsquare in $\Fq$, one has
$P_j(c)\ne0$ for every $c\in\Fq$ and $j\in J$.
Hence the indicator of the condition $\chi(P_j(c))=\varepsilon_j$ is
\[
  \mathbf 1_{\{\chi(P_j(c))=\varepsilon_j\}}
  =\frac{1+\varepsilon_j\chi(P_j(c))}{2}.
\]
Consequently,
\[
\begin{aligned}
  N_{\boldsymbol\varepsilon}
  &=\frac18\sum_{c\in\Fq}
    \prod_{j\in J}\bigl(1+\varepsilon_j\chi(P_j(c))\bigr)\\
  &=\frac18\Bigg(
    q+\sum_{j\in J}\varepsilon_j\sum_{c\in\Fq}\chi(P_j(c))\\
  &\qquad
    +\sum_{\substack{\{j,k\}\subseteq J\\ j\ne k}}
       \varepsilon_j\varepsilon_k
       \sum_{c\in\Fq}\chi(P_j(c)P_k(c))\\
  &\qquad
    +\varepsilon_0\varepsilon_{+1}\varepsilon_{-1}
       \sum_{c\in\Fq}\chi(P_0(c)P_{+1}(c)P_{-1}(c))
    \Bigg).
\end{aligned}
\]
The products involving one, two, or three of the $P_j$ have degrees
$10$, $20$, or $30$, respectively.  Applying the Weil bound for
quadratic-character sums~\cite[Lemma~1]{LangeWinterhof02} to these seven sums
therefore gives
\[
  N_{\boldsymbol\varepsilon}
  \ge \frac{q-(3\cdot9+3\cdot19+29)\sqrt q}{8}
  =\frac{q-113\sqrt q}{8}.
\]
Since $q\ge3^9$, the last lower bound is
greater than $3$.  Thus every sign pattern occurs for some
$c\in\Fq\setminus\Fthree$.

Since $g(\Fq^*)=(\Fq^*)^2$, the condition $\chi(P_0(c))=-1$ is equivalent
to $c\in\mathcal C_g$ for $c\in\Fq\setminus\Fthree$.  Consequently, each
of the three patterns
\[
  (-1,-1,-1),\qquad
  (-1,+1,-1),\qquad
  (-1,+1,+1)
\]
occurs for some parameter $c\in\mathcal C_g$.  Moreover,
by~\eqref{eq:kappa-c},
\[
  \kappa_c
  =\#\{j\in\{+1,-1\}:\chi(P_j(c))=+1\}.
\]
Therefore
\[
  (-1,-1,-1)\Longrightarrow\kappa_c=0,\qquad
  (-1,+1,-1)\Longrightarrow\kappa_c=1,\qquad
  (-1,+1,+1)\Longrightarrow\kappa_c=2.
\]
Hence
\[
  \{\kappa_c:c\in\mathcal C_g\}=\{0,1,2\}.
\]
These parameters realize all three spectra in
Corollary~\ref{cor:lc-bct-spectrum}\textup{(a)}.
Likewise, each of the patterns
\[
  (+1,-1,-1),\qquad
  (+1,+1,-1),\qquad
  (+1,+1,+1)
\]
occurs for some $c\in\Fq\setminus(\Fthree\cup\mathcal C_g)$ and gives
$\kappa_c=0,1,2$, respectively.  Thus all three spectra in
Corollary~\ref{cor:lc-bct-spectrum}\textup{(b)} also occur.
The parameters $c=0$ and $c=\pm1$ give the two remaining spectra in
parts \textup{(c)} and \textup{(d)}, respectively.
Consequently, for every odd $n\ge9$, the APN family
$\{\widetilde G_c:c\in\Fq\}$ arising from $g(x)=x^{10}$ realizes all
eight boomerang spectra in Corollary~\ref{cor:lc-bct-spectrum}.
Subsection~\ref{sec:known-examples} presents further examples from known DO PN
families, verified by exhaustive computation.
\end{remark}

\subsection{Applications to known DO PN families}
\label{sec:known-examples}

The preceding affine-perturbation results
apply to every DO PN function over $\F_q$, where $q=3^n$ and
$n>1$ is odd.  For orientation, Table~\ref{tab:known-pn-q3n} records selected known
PN families over $\F_q$ that remain valid in this setting.  The rows are arranged in the
chronological order of the cited constructions and are assigned the local labels
$f_1,\ldots,f_5$.  The Coulter--Matthews monomial
$f_2$ is included for context but is not represented by a DO polynomial in
general and is not used below.

\begin{table}[!htbp]
\centering
\small
\renewcommand{\arraystretch}{1.22}
\begin{tabularx}{\textwidth}{@{}cY Y c@{}}
\toprule
Family & PN function over $\F_q$, $q=3^n$ & Conditions for validity & Source \\
\midrule
$f_1$ & $x^{3^k+1}$ &
$k\ge0$.  The general condition $2\nmid n/\gcd(n,k)$ is automatic because $n$ is odd. &
\cite{CM97,DO68} \\
\addlinespace
$f_2$ & $x^{(3^k+1)/2}$ &
$k$ odd and $\gcd(n,k)=1$.  This is the Coulter--Matthews family; it is not
represented by a DO polynomial in general. &
\cite{CM97} \\
\addlinespace
$f_3$ & $x^{10}-\lambda x^6-\lambda^2x^2$ &
$\lambda\in\F_q^*$.  The original odd-degree condition is automatic because $n$ is odd. &
\cite{DY06} \\
\addlinespace
$f_4$ &
$\displaystyle x^{3^k+1}-\omega^{3^s-1}x^{3^s+3^{2s+k}}$ &
Here $n=3s$, $3\nmid s$, $\operatorname{ord}(\omega)=3^n-1$, and
$s\equiv k\pmod3$.  The further condition
$s/\gcd(s,k)$ odd is automatic here because $n=3s$ is odd. &
\cite[Thm.~1]{ZKW09} \\
\addlinespace
$f_5$ &
$\displaystyle x^{3^t+1}-\mu x^{3^{2s}+3^{s+t}}$ &
Here $n=3s$, $\operatorname{ord}(\mu)=3^{2s}+3^s+1$, and
$\dfrac{s+t}{\gcd(s,t)}\equiv0\pmod3$.  The further condition
$\dfrac{s}{\gcd(s,t)}$ odd is automatic here because $n=3s$ is odd. &
\cite[Thm.~4]{Bier10} \\
\bottomrule
\end{tabularx}
\caption{Selected known PN families over $\F_q$ for $q=3^n$ with $n$ odd.}
\label{tab:known-pn-q3n}
\end{table}
\FloatBarrier

\paragraph{Relation between $f_4$ and $f_5$.}
For a member of the Zha--Kyureghyan--Wang $f_4$ family, put $t=s+k$ and
$\mu=\omega^{1-3^s}$.  The corresponding $f_5$ satisfies Bierbrauer's conditions and
\[
  f_4\!\left(x^{3^{2s}}\right)=-\omega^{3^s-1}f_5(x).
\]
Hence $f_4$ and the corresponding $f_5$ are linearly equivalent, and therefore
EA- and CCZ-equivalent; this is the
relation underlying Bierbrauer's description of Theorem~4 as a slight generalization of the
Zha--Kyureghyan--Wang family~\cite{Bier10}.

For $i\in\{1,3,4,5\}$, let $g_i=f_i$ be any valid member of the
corresponding DO PN family in Table~\ref{tab:known-pn-q3n}, and set
\[
\begin{aligned}
  \tau_i&:=g_i(1),\qquad
  G_i(x):=g_i(x)+\tau_i\epsilon(x),\\
  \mathcal C_i
    &:=\{c\in\Fq\setminus\Fthree:
          g_i(c)+\tau_i\notin g_i(\Fq)\},\\
  \widetilde G_{i,c}(x)
    &:=G_i(x)+D_cg_i(x)
      =g_i(x+c)+\tau_i\epsilon(x)
      \qquad(c\in\Fq).
\end{aligned}
\]
Since $g_i$ is DO PN, Theorem~\ref{thm:CM-two-to-one} gives
$\tau_i\ne0$.  For $c\in\Fq\setminus\Fthree$, let
$\kappa_{i,c}$ denote the quantity in \eqref{eq:kappa-c} with
$g=g_i$ and $\tau=\tau_i$.

Applying Corollary~\ref{cor:natural-derivative-perturbations} to $g_i$
gives
\[
  |\mathcal C_i|=\frac{q-3}{2},
  \qquad
  \delta_{\widetilde G_{i,c}}=2\quad(c\in\Fq),
  \qquad
  \beta_{\widetilde G_{i,c}}=1
  \ \Longleftrightarrow\ c\in\mathcal C_i.
\]
Moreover, $D_cg_i$ is affine, so
$\widetilde G_{i,c}\sim_{\EA}G_i$.  Hence
Lemma~\ref{lem:ccz-global-ddt-spectrum} and
Theorem~\ref{thm:diff-spectrum} give its differential spectrum, whereas
Corollary~\ref{cor:lc-bct-spectrum}\textup{(a)} directly gives its boomerang
spectrum.  Consequently, for every $c\in\mathcal C_i$,
\[
\begin{aligned}
  \mathcal S_{\DDT}(\widetilde G_{i,c})
    &=\{\!\{\,0^{4(q-3)},\,
          1^{\,2q+(q-3)(q-8)},\,
          2^{4(q-3)}\,\}\!\},\\
  \mathcal S_{\BCT}(\widetilde G_{i,c})
    &=\{\!\{\,0^{(q-5)^2+8\kappa_{i,c}},\,
          1^{\,8(q-3)-8\kappa_{i,c}}\,\}\!\}.
\end{aligned}
\]

\paragraph{Computational evidence for all eight spectrum types.}
Table~\ref{tab:kappa-evidence} reports computations for selected members of
the $f_1$, $f_3$, and $f_5$ families over $\F_{3^7}$, $\F_{3^7}$, and
$\F_{3^{15}}$, respectively.  Each row contains two triples, whose entries
correspond, in order, to $\kappa_{i,c}=0,1,2$.  The first triple counts the
parameters $c\in\mathcal C_i$, and the second counts those
$c\in\Fq\setminus(\Fthree\cup\mathcal C_i)$.
These distributions were obtained by exhaustive computation for the functions
and parameters listed in the table.

For the $f_5$ row, $n=15$ is the smallest admissible extension degree in the
present odd-degree setting for which the family yields a genuine binomial.  At
the smaller admissible degrees $n=3$ and $n=9$, the two monomial terms have
congruent exponents modulo $3^n-1$, so every admissible $f_5$ member reduces to
a nonzero scalar multiple of a Gold monomial.  For this example, take
$\F_{3^{15}}=\Fthree[X]/(X^{15}+2X^2+1)$,
$\alpha=X\pmod{X^{15}+2X^2+1}$, and
$f_5(x)=x^4-\alpha^{242}x^{3^{10}+3^6}$.  The defining polynomial is
primitive, and hence
$\operatorname{ord}(\alpha^{242})=(3^{15}-1)/(3^5-1)=3^{10}+3^5+1$.
Together with $s=5$ and $t=1$, this verifies the defining conditions of the
Bierbrauer $f_5$ family, so this function is an admissible member of that
family and is a genuine binomial.

\begin{table}[!htbp]
\centering
\footnotesize
\renewcommand{\arraystretch}{1.18}
\setlength{\tabcolsep}{3.5pt}
\begin{tabularx}{\textwidth}{@{}c c Y c c@{}}
\toprule
Family & $n$ & Parameters
  & $c\in\mathcal C_i$
  & $c\in\Fq\setminus(\Fthree\cup\mathcal C_i)$ \\
\midrule
$f_1$ & $7$ & $k=2$
  & $(252,\,588,\,252)$
  & $(294,\,504,\,294)$ \\
$f_3$ & $7$ & $\lambda=1$
  & $(294,\,504,\,294)$
  & $(252,\,588,\,252)$ \\
$f_5$ & $15$ & $s=5$, $t=1$, $\mu=\alpha^{242}$
  & $(1\,785\,492,\,3\,600\,988,\,1\,787\,972)$
  & $(1\,800\,494,\,3\,575\,944,\,1\,798\,014)$ \\
\bottomrule
\end{tabularx}
\caption{Computed distributions of $\kappa_{i,c}$ for selected members of
the $f_1$, $f_3$, and $f_5$ families.  Each triple lists, in order, the
numbers of parameters with $\kappa_{i,c}=0,1,2$.}
\label{tab:kappa-evidence}
\end{table}
\FloatBarrier
\Needspace{8\baselineskip}

All six counts in each row of Table~\ref{tab:kappa-evidence} are positive.
Thus, for each displayed PN function, the computations realize all three
boomerang-uniformity-one spectrum types in
Corollary~\ref{cor:lc-bct-spectrum}\textup{(a)} and all three
boomerang-uniformity-two spectrum types in
Corollary~\ref{cor:lc-bct-spectrum}\textup{(b)}.  The two remaining
boomerang-uniformity-two spectrum types, corresponding to $c=0$ and
$c\in\Fthree^*$, are given by
Corollary~\ref{cor:lc-bct-spectrum}\textup{(c)} and \textup{(d)}, respectively.
Consequently, the computations, together with these two special cases,
provide numerical evidence that all eight spectrum types described in
Corollary~\ref{cor:lc-bct-spectrum} are realized for each of the three
displayed PN functions.

\begin{remark}\label{rem:known-families-degree}
For every $c\in\Fq$, the function $\widetilde G_{i,c}$ has algebraic degree $2n$.
Indeed, $\tau_i\ne0$, so $\tau_i\epsilon$ has a nonzero $x^{q-1}$ term of
base-$3$ weight $2n$, whereas $g_i(x+c)$ has algebraic degree exactly $2$,
since $g_i$ is a DO PN function and translation preserves algebraic degree.
For functions of algebraic degree at least $2$, algebraic degree is invariant
under EA equivalence; hence this common degree cannot distinguish the
EA-equivalence classes of these functions.
\end{remark}

For arbitrary DO PN functions $g,h$, let $G,H$ be their
sign-switches and let $\widetilde G,\widetilde H$ be associated functions
with boomerang uniformity one.  Corollary~\ref{cor:ccz-inequivalent-associated-functions}
proves that
\[
  g\not\sim_{\CCZ}h
  \quad\Longrightarrow\quad
  G\not\sim_{\CCZ}H
  \quad\text{and}\quad
  \widetilde G\not\sim_{\CCZ}\widetilde H.
\]
Since each $\widetilde G_{i,c}$ is EA-equivalent to $G_i$, it follows that if
two selected PN functions $g_i,g_j$ are CCZ-inequivalent, then every pair
$\widetilde G_{i,c},\widetilde G_{j,c'}$, with $c\in\mathcal C_i$ and
$c'\in\mathcal C_j$, is CCZ-inequivalent.
Theorem~\ref{thm:three-beta-one-classes} uses the orders of the nuclei of the
associated presemifields to show that,
for infinitely many odd extension degrees $n$, the Gold $f_1$, Ding--Yuan
$f_3$, and Bierbrauer $f_5$ PN functions over $\F_{3^n}$ are pairwise
CCZ-inequivalent; consequently, their associated APN functions with boomerang
uniformity one are also pairwise CCZ-inequivalent.

\section{Switch-rigidity and CCZ-inequivalence}
\label{sec:switch-rigidity-inequivalence}

This section establishes the rigidity results needed to prove that
CCZ-inequivalent PN inputs yield CCZ-inequivalent associated functions.

\subsection{Uniqueness of the vertical subspace and CCZ-to-EA collapse}\label{sec:unique-vertical-subspace}

We now turn to CCZ equivalence, using the graph notation and equivalence
conventions introduced in Section~\ref{sec:preliminaries}.
Let $\pi(a,b):=a$ be the projection to the first coordinate, and put
$\mathcal K:=\ker\pi=\{0\}\times\Fq$.  For every function $F:\Fq\to\Fq$ and every
$v=(0,b)\in\mathcal K\setminus\{(0,0)\}$, one has $\delta_F(v)=0$ because
$b\ne0$.

For orientation, recall the standard fact that, for arbitrary PN functions over finite
fields of odd characteristic, CCZ equivalence coincides with EA equivalence.  This fact
does not apply directly here, because the sign-switched functions considered in this paper
are APN and therefore are not PN.  The point of Lemma~\ref{lem:ccz-to-ea} below is to prove the
corresponding collapse for the present switched maps, using the uniqueness of
the vertical subspace established in Lemma~\ref{lem:unique-vertical-subspace}.

\begin{lemma}\label{lem:unique-vertical-subspace}
Let $G$ be the sign-switch of a DO PN function $g$, as in
\eqref{eq:sign-switch}.  Then $\mathcal K=\ker\pi$ is the unique
$n$-dimensional $\Fthree$-subspace $U\le \Fq\times\Fq$ such that
\begin{equation}\label{eq:kernel-zero-subspace}
  U\setminus\{(0,0)\}
  \subseteq
  \{v\in\Fq\times\Fq:\delta_G(v)=0\}.
\end{equation}
\end{lemma}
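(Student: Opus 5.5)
The plan is to first describe the zero locus of $\delta_G$ completely, and then show by a counting argument and a short char-$3$ identity that no $n$-dimensional subspace other than $\mathcal K$ fits inside it. By Theorem~\ref{thm:diff-spectrum} and Corollary~\ref{cor:exceptional-values}, a nonzero $v=(a,b)$ has $\delta_G(v)=0$ exactly in the following cases:
\begin{itemize}[nosep]
\item $a=0$ (and then $b\ne0$);
\item $a\in\Fq^*\setminus\Fthree$ and $b\in\{\pm A_a,\pm B_a\}$, where $A_a=g(a+1)-\tau$ and $B_a=g(a-1)-\tau$.
\end{itemize}
Rows with $a\in\Fthree^*$ are permutation rows and contribute no zeros. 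In particular $\mathcal K\setminus\{(0,0)\}$ lies in the zero set, so $\mathcal K$ satisfies \eqref{eq:kernel-zero-subspace}. It remains to prove uniqueness.

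Let $U$ be an $n$-dimensional subspace satisfying \eqref{eq:kernel-zero-subspace}, and put $k:=\dim\pi(U)$. Then $U\cap\mathcal K=\{0\}\times W$ with $\dim W=n-k$. If $k=0$, then $U\subseteq\mathcal K$, and equality of dimensions gives $U=\mathcal K$. So assume $k\ge1$ and fix a nonzero $a\in\pi(U)$. The fiber $U\cap\pi^{-1}(a)$ is a coset $(a,b_0)+(\{0\}\times W)$ with $3^{n-k}$ elements. Every element of this fiber lies in the zero set, which forces:
\begin{itemize}[nosep]
\item $a\notin\Fthree$, because rows with $a\in\Fthree^*$ have no zeros;
\item $b_0+W\subseteq\{\pm A_a,\pm B_a\}$, so $3^{n-k}\le4$ and hence $k\ge n-1$.
\end{itemize}
The case $k=n$ is impossible: then $\pi(U)=\Fq$ contains $1\in\Fthree^*$, contradicting the first point. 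Hence $k=n-1$ and $W=\Fthree w$ for some $w\ne0$.

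The step I expect to need care is excluding $k=n-1$, since a hyperplane $\pi(U)$ avoiding $\Fthree^*$ certainly exists and counting alone does not suffice. For $a\in\pi(U)\setminus\{0\}$, the three distinct elements $b_0,\,b_0+w,\,b_0-w$ all lie in $\{\pm A_a,\pm B_a\}$. By Corollary~\ref{cor:exceptional-values}, $A_a,B_a,A_a\pm B_a\notin\tau\Fthree$, so this set has exactly four distinct nonzero elements. Any $3$-element subset of it consists of one pair $\{x,-x\}$ together with a third element $y\in\{\pm A_a,\pm B_a\}$, so its sum is $y\ne0$. In characteristic $3$, however,
\[
b_0+(b_0+w)+(b_0-w)=3b_0=0 .
\]
This contradiction rules out $k=n-1$, so $k=0$ and $U=\mathcal K$, which proves uniqueness.
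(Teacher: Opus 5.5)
Your proof is correct and follows essentially the same route as the paper. Both arguments confine each nonzero fiber of $\pi|_U$ to $\{\pm A,\pm B\}$, which forces $\dim(U\cap\mathcal K)\le 1$; they exclude the one-dimensional case because a coset of $\Fthree w$ sums to zero in characteristic $3$ while any three of $\pm A,\pm B$ have nonzero sum; and they exclude the surjective case because $\pi(U)$ would then contain $1\in\Fthreestar$. The only difference is bookkeeping: you parametrize by $k=\dim\pi(U)$, whereas the paper uses $r=\dim W$.
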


\begin{proof}
Let $U\le\Fq\times\Fq$ be an $n$-dimensional $\Fthree$-subspace satisfying
\eqref{eq:kernel-zero-subspace}.

If $\pi(U)=\{0\}$, then $U\subseteq\mathcal K$.  Since both $U$ and
$\mathcal K$ have $\Fthree$-dimension $n$, this gives $U=\mathcal K$.

Assume, for contradiction, that $\pi(U)\ne\{0\}$.  Put
\[
  W:=\{b\in\Fq:(0,b)\in U\},
  \qquad
  U_0:=\{0\}\times W,
  \qquad
  r:=\dim_{\Fthree}W.
\]
For each $a\in\pi(U)$, define the fiber of the restricted projection by
\[
  U_a:=(\pi|_U)^{-1}(a)
      =U\cap\bigl(\{a\}\times\Fq\bigr).
\]
Then one has the disjoint union
\[
  U=\bigsqcup_{a\in\pi(U)}U_a.
\]
If $(a,b)\in U_a$, then
\[
  U_a=(a,b)+U_0=\{a\}\times(b+W),
  \qquad
  |U_a|=3^r.
\]

Choose $a\in\pi(U)\setminus\{0\}$ and $(a,b)\in U_a$.  For every
$c\in b+W$, the vector $(a,c)$ is a nonzero element of $U$, and hence
\[
  \delta_G(a,c)=0.
\]
We cannot have $a\in\Fthreestar$, because $D_aG$ is then a permutation by
Theorem~\ref{thm:diff-spectrum}.  Thus $a\in\Fqstar\setminus\Fthree$.
Corollary~\ref{cor:exceptional-values} and
\eqref{eq:exceptional-derivative-value-sets} give
\[
  b+W\subseteq\{\pm A,\pm B\},
\]
where the four elements are distinct and nonzero.  Hence
$3^r=|b+W|\le4$, so $r\in\{0,1\}$.

Suppose that $r=1$.  Choose $0\ne t\in W$.  Then
\[
  W=\{0,t,-t\},
  \qquad
  b+W=\{b,b+t,b-t\},
\]
whose elements sum to zero.  On the other hand, every three-element subset
of $\{\pm A,\pm B\}$ has sum equal to the negative of its omitted element,
and hence has nonzero sum.  This is impossible.  Therefore $r=0$.

It follows that every $U_a$ is a singleton.  Since the preceding disjoint
union has $|U|=3^n=q$, one has $|\pi(U)|=q$, and hence $\pi(U)=\Fq$.
Choose $b\in\Fq$ such that $(1,b)\in U$.  Then
\eqref{eq:kernel-zero-subspace} gives $\delta_G(1,b)=0$, whereas $D_1G$ is
a permutation and therefore $\delta_G(1,b)=1$, a contradiction.

Hence $\pi(U)=\{0\}$, and therefore $U=\mathcal K$.
\end{proof}

\begin{lemma}[CCZ collapses to EA for switched maps]\label{lem:ccz-to-ea}
Let $G$ and $H$ be sign-switches of DO PN functions. If
\[
  G\sim_{\CCZ}H,
\]
then the linear part of every affine CCZ map sending $\mathcal G_G$ to
$\mathcal G_H$ preserves the kernel $\mathcal K=\ker\pi$ of the projection to
the first coordinate. Consequently, $G$ and $H$ are EA-equivalent.
\end{lemma}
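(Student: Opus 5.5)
Let $\mathcal A(P)=\mathcal M(P)+\mathbf c$ be an affine automorphism of $\Fq\times\Fq$ with $\mathcal A(\mathcal G_G)=\mathcal G_H$. The plan is to show that $\mathcal M(\mathcal K)=\mathcal K$ by applying Lemma~\ref{lem:unique-vertical-subspace} to $H$, and then to read off an EA equivalence from the block structure of $\mathcal M$.

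First, every nonzero $v\in\mathcal K$ has $\delta_G(v)=0$. Lemma~\ref{lem:ddt-entry-transport} then gives $\delta_H(\mathcal Mv)=0$. Since $\mathcal M$ is an $\Fthree$-linear automorphism, $U:=\mathcal M(\mathcal K)$ is an $n$-dimensional $\Fthree$-subspace of $\Fq\times\Fq$. Moreover, every nonzero element of $U$ is a zero of $\delta_H$, so $U$ satisfies \eqref{eq:kernel-zero-subspace} with $G$ replaced by $H$. Lemma~\ref{lem:unique-vertical-subspace}, applied to the sign-switch $H$ of its DO PN function, forces $U=\mathcal K$. This proves the first assertion, and it holds for every affine CCZ map sending $\mathcal G_G$ to $\mathcal G_H$.

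Next I would write $\mathcal M$ in block form:
\[
  \mathcal M(x,y)=\bigl(M_{11}x+M_{12}y,\;M_{21}x+M_{22}y\bigr),
\]
with $\Fthree$-linear maps $M_{ij}:\Fq\to\Fq$. The condition $\mathcal M(\{0\}\times\Fq)=\{0\}\times\Fq$ forces $M_{12}=0$. Invertibility of $\mathcal M$ then forces both $M_{11}$ and $M_{22}$ to be invertible: $\mathcal M$ is block lower triangular, so its determinant is $\det M_{11}\det M_{22}$.

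Writing $\mathbf c=(a_0,b_0)$, the graph condition says that for each $x$ the point
\[
  \bigl(M_{11}x+a_0,\;M_{21}x+M_{22}G(x)+b_0\bigr)
\]
lies on $\mathcal G_H$. Hence
\[
  H(M_{11}x+a_0)=M_{22}G(x)+M_{21}x+b_0
  \qquad(x\in\Fq).
\]
This is exactly the EA-equivalence definition, with $M_1=M_{11}$, $M_2=M_{22}$ and $B=M_{21}$.

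The only substantive step is the transport of the zero-DDT subspace, which is already carried by the two cited lemmas. So I expect no real obstacle beyond checking that Lemma~\ref{lem:unique-vertical-subspace} is stated for an arbitrary sign-switch, hence applies to $H$, and that invertibility of $\mathcal M$ gives $\dim U=n$.
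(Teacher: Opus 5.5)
Your proposal is correct and follows essentially the same route as the paper: transport the zero-DDT vertical subspace via Lemma~\ref{lem:ddt-entry-transport}, apply Lemma~\ref{lem:unique-vertical-subspace} to $H$ to get $\mathcal M(\mathcal K)=\mathcal K$, and read off the EA normal form from the resulting block lower-triangular shape of $\mathcal M$. Your determinant argument for the invertibility of $M_{11}$ and $M_{22}$ fills in a detail that the paper's proof leaves implicit.
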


\begin{proof}
Let
\[
  \mathcal A(P)=\mathcal M(P)+\mathbf c
\]
be an affine automorphism of $\Fq\times\Fq$ satisfying
\[
  \mathcal A(\mathcal G_G)=\mathcal G_H.
\]
By Lemma~\ref{lem:ddt-entry-transport}, if $\delta_G(v)=0$, then
$\delta_H(\mathcal Mv)=0$.  Since every
$v\in\mathcal K\setminus\{(0,0)\}$ satisfies $\delta_G(v)=0$, every
$u\in\mathcal M(\mathcal K)\setminus\{(0,0)\}$ satisfies $\delta_H(u)=0$.
Thus $\mathcal M(\mathcal K)$ is an $n$-dimensional subspace satisfying
\eqref{eq:kernel-zero-subspace}, with $H$ in place of $G$.  By
Lemma~\ref{lem:unique-vertical-subspace},
\[
  \mathcal M(\mathcal K)=\mathcal K.
\]

For $x,y\in\Fq$, the $\Fthree$-linearity of $\mathcal M$ gives $\mathcal M(x,y)=\mathcal M(x,0)+\mathcal M(0,y)$.
Since $(0,y)\in\mathcal K$ and
$\mathcal M(\mathcal K)=\mathcal K$, we have
$\mathcal M(0,y)\in\mathcal K$.  Hence the first coordinate of
$\mathcal M(x,y)$ depends only on $x$.  Thus there exist invertible
$\Fthree$-linear maps
\[
  M_1,M_2:\Fq\longrightarrow\Fq
\]
and an $\Fthree$-linear map
\[
  B:\Fq\longrightarrow\Fq
\]
such that, writing $\mathbf c=(a_0,b_0)$ and using
$\mathcal A(\mathcal G_G)=\mathcal G_H$,
\begin{equation}\label{eq:ccz-ea-normal-form}
\begin{aligned}
  \mathcal M(x,y)&=(M_1x,M_2y+B(x)),\\
  \mathcal A(x,y)&=\mathcal M(x,y)+\mathbf c
  =\bigl(M_1x+a_0,\,M_2y+B(x)+b_0\bigr),\\
  H(M_1x+a_0)&=M_2G(x)+B(x)+b_0
  \qquad(x\in\Fq).
\end{aligned}
\end{equation}
This is precisely EA equivalence.
\end{proof}

\begin{lemma}\label{lem:input-map-preserves-fthree}
In the notation of Lemma~\ref{lem:ccz-to-ea}, the input linear map $M_1$ satisfies
\begin{equation}\label{eq:input-map-preserves-fthree}
  M_1(\Fthreestar)=\Fthreestar.
\end{equation}
\end{lemma}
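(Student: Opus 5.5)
The plan is to use the DDT-entry transport of Lemma~\ref{lem:ddt-entry-transport} together with the EA normal form \eqref{eq:ccz-ea-normal-form} from Lemma~\ref{lem:ccz-to-ea}. The point is that the linear part $\mathcal M(x,y)=(M_1x,\,M_2y+B(x))$ sends whole DDT rows of $G$ to whole DDT rows of $H$. Combined with the row classification of Theorem~\ref{thm:diff-spectrum}, applied separately to $G$ and to $H$, this pins down where $M_1$ sends $\Fthreestar$.

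\textbf{Step 1: rows go to rows.} Fix $a\in\Fqstar$. For $b\in\Fq$, Lemma~\ref{lem:ddt-entry-transport} gives
\[
  \delta_G(a,b)=\delta_H\bigl(M_1a,\;M_2b+B(a)\bigr).
\]
Since $M_2$ is invertible, the map $b\mapsto M_2b+B(a)$ is a bijection of $\Fq$. Hence the $a$-row of the DDT of $G$ and the $M_1a$-row of the DDT of $H$ have the same value distribution. Also $M_1a\neq0$, because $M_1$ is invertible.

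\textbf{Step 2: identify the permutation rows.} By Theorem~\ref{thm:diff-spectrum}, applied to $G$ and also to $H$ (which is again the sign-switch of a DO PN function), a row indexed by a nonzero direction has type $1^q$ exactly when the direction lies in $\Fthreestar$. Every other nonzero direction gives a row of type $0^41^{q-8}2^4$, and these two types differ because $q>8$.

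\textbf{Step 3: conclude.} Take $a\in\Fthreestar$. The $a$-row of $G$ has type $1^q$, so by Step~1 the $M_1a$-row of $H$ also has type $1^q$. By Step~2 this forces $M_1a\in\Fthreestar$, so $M_1(\Fthreestar)\subseteq\Fthreestar$. Since $M_1$ is injective and both sets have two elements, equality \eqref{eq:input-map-preserves-fthree} follows.

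I do not expect a real obstacle. The only point needing care is Step~1: one must use the explicit triangular shape of $\mathcal M$ from Lemma~\ref{lem:ccz-to-ea}, not merely CCZ-invariance of the global spectrum. This shape is what guarantees that the entire $a$-row maps onto a single row of $H$, indexed by $M_1a$.
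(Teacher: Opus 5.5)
Your proof is correct and is essentially the paper's argument. The paper differentiates the EA identity to get $D_{M_1a}H(M_1x+a_0)=M_2D_aG(x)+B(a)$, which is the same row correspondence you obtain from Lemma~\ref{lem:ddt-entry-transport} with the triangular $\mathcal M$; both arguments then use Theorem~\ref{thm:diff-spectrum} to identify the permutation rows exactly with the directions in $\Fthreestar$.
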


\begin{proof}
Taking the derivative of the last identity in
\eqref{eq:ccz-ea-normal-form} in direction $a$, we obtain
\[
  D_{M_1a}H(M_1x+a_0)=M_2D_aG(x)+B(a).
\]
Therefore $D_{M_1a}H$ is a permutation if and only if $D_aG$ is a permutation.
By Theorem~\ref{thm:diff-spectrum}, for either $F=G$ or $F=H$, the derivative
$D_aF$ is a permutation exactly when $a\in\Fthreestar$.  Hence
\eqref{eq:input-map-preserves-fthree} follows.
\end{proof}

\subsection{Uniqueness of the underlying PN function and switch-rigidity}\label{sec:underlying-pn-uniqueness}

The following lemma identifies the original PN function as the unique PN
function that agrees with its sign-switch outside a two-point translate of
$\Fthreestar$.

\begin{lemma}\label{lem:underlying-pn-uniqueness}
Let $h:\Fq\to\Fq$ be a DO PN function, put
$\sigma:=h(1)$, and let $H(x):=h(x)+\sigma\epsilon(x)$ be its sign-switch,
with $\epsilon$ as in \eqref{eq:sign-switch}.
Suppose $f:\Fq\to\Fq$ is PN and agrees with $H$ outside a two-point set
$C=c+\Fthreestar=\{c+1,c-1\}$.  Then $c=0$ and $f=h$.
\end{lemma}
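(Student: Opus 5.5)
The plan is to prove first that $f=h$; the conclusion $c=0$ then follows at once. Put $d:=f-h$. Since $f=H$ outside $C$ and $H=h+\sigma\epsilon$, the function $d$ is supported on
\[
  T:=\{1,-1\}\cup C,\qquad |T|\le 4 .
\]
Suppose, for contradiction, that $d\ne0$, and fix $x_0\in T$ with $d(x_0)\ne0$. I will find a direction $a\in\Fq^*$ for which $D_af$ cannot be a permutation.

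\textbf{Step 1: choice of direction.} Call $a$ \emph{admissible} if $a\ne0$ and $T\cap(T-a)=\varnothing$. This excludes $a=0$ and the at most $12$ nonzero differences of elements of $T$. For admissible $a$, put $E_a:=T\cup(T-a)$, a set of at most $8$ points. Then $D_af=D_ah$ on $\Fq\setminus E_a$. At $x_0$ we have $x_0+a\notin T$, so
\[
  D_af(x_0)=D_ah(x_0)-d(x_0).
\]
Both $D_af$ and $D_ah$ are permutations and they agree off $E_a$. Hence $D_af(E_a)=D_ah(E_a)$ as sets. In particular, $D_ah(x_0)-d(x_0)=D_ah(y)$ for some $y\in E_a\setminus\{x_0\}$.

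\textbf{Step 2: bounding the bad directions.} For each fixed $y$, I will show that the equation $D_ah(y)-D_ah(x_0)=d(x_0)$ has few solutions $a$.

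If $y\in T$, the left side equals $L_{y-x_0}(a)$, by bilinearity and $D_ah=L_a+h(a)$. Since $y\ne x_0$ and $h$ is PN, the map $L_{y-x_0}$ is bijective, so there is at most one solution $a$. This case covers at most $3$ choices of $y$.

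If $y=t-a$ with $t\in T$, the equation reads $h(t)-h(t-a)-h(x_0+a)+h(x_0)=d(x_0)$. Using evenness and Lemma~\ref{lem:quadratic-identities}, $h(t-a)+h(x_0+a)$ equals $-h(a)$ plus an $\Fthree$-affine function of $a$. So the equation has the form $h(a)+L_w(a)=\mathrm{const}$ for a suitable $w$, that is, $h(a+w)=\mathrm{const}'$. By the two-to-one property (Theorem~\ref{thm:CM-two-to-one}), this has at most $2$ solutions $a$. This case covers at most $4$ choices of $y$.

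Altogether at most $3+8=11$ directions are bad.

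\textbf{Step 3: counting.} Since $n>1$ is odd, $q\ge27$. The excluded directions number at most $1+12+11=24<q$, so an admissible, non-bad $a$ exists. For this $a$ the value $D_af(x_0)$ lies outside $D_ah(E_a)$, contradicting Step 1. Hence $d=0$, that is, $f=h$.

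\textbf{Step 4: conclusion.} Now $h=f=H$ outside $C$. But $H-h=\sigma\epsilon$ is nonzero exactly on $\{\pm1\}$, because $\sigma\ne0$ by Corollary~\ref{cor:special-preimages}. Therefore $\{1,-1\}\subseteq C=\{c+1,c-1\}$. Both sets have two elements, so they are equal, and comparing sums gives $2c=0$, hence $c=0$. This also shows $f=h$ as required.

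\textbf{Main obstacle.} I expect Step 2 to be the delicate part. The case $y=t-a$ must be checked carefully to confirm that it really reduces to a translate of $h$ taking a fixed value, with no degenerate situation in which the quadratic part cancels. It does not cancel, because the two quadratic terms combine to $-2h(a)=h(a)$ in characteristic $3$.

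A second point to verify is the bound $24<q$ in the smallest case $q=27$. If that count turns out to be too tight, I would fall back on treating $C\cap\{\pm1\}\ne\varnothing$ separately, where $|T|\le3$ and the bound improves.
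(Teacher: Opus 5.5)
Your proof is correct and takes a genuinely different route from the paper.

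\textbf{The paper's route.} The paper compares $f$ with the APN map $H$. It uses the explicit list of the four two-element fibers of $D_aH$ from Corollary~\ref{cor:derivative-two-element-fibers}, and notes that $C\cup(C-a)$ must meet each of these fibers exactly once. It then splits into three cases on $c$:
\begin{enumerate}
\item[(i)] $c\notin\Fthree$: take the direction $a=c$, force $u_c=c$, and hence $h(c)=\sigma$, a contradiction.
\item[(ii)] $c\in\Fthreestar$: a sum comparison gives $2c=0$, a contradiction.
\item[(iii)] $c=0$: use $D_1f(\Fthree)=D_1h(\Fthree)$ and the exclusions $A,B,A\pm B\notin\sigma\Fthree$ to get $f(1)=f(-1)\in\{\pm\sigma\}$, then invoke $\delta_H=2$ to rule out $-\sigma$.
\end{enumerate}

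\textbf{Your route.} You compare $f$ with the PN map $h$, so $d=f-h$ is supported on at most four points. A direction count, $1+12+3+8=24<27\le q$, then produces some $a$ for which $D_af$ is not injective. Your two reductions, to $L_{y-x_0}(a)=d(x_0)$ and to $h(a+w)=\mathrm{const}$, are right. The quadratic terms do combine to $-2h(a)=h(a)$ rather than cancel. The margin $24<27$ suffices, so your fallback for $C\cap\{\pm1\}\neq\varnothing$ is not needed.

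\textbf{What each buys.} Your argument is more economical and more general. It shows that no PN function can differ from a DO PN function on a nonempty set of at most four points when $q\ge27$. It never uses the fiber structure of $D_aH$ or the APN property of $H$, and it gets $c=0$ afterwards in one line. The paper's argument is structural rather than numerical and reuses machinery it has already built. It is therefore insensitive to the size of the counting margin, which in your version is fairly tight and would fail for larger supports without larger $q$.
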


\begin{proof}
We proceed in three steps.

For this proof, write
$L_a(x):=h(x+a)-h(x)-h(a)$ and, for
$a\in\Fqstar\setminus\Fthree$, put $u_a:=L_a^{-1}(-\sigma)$.
For every $a\in\Fqstar\setminus\Fthree$,
Corollary~\ref{cor:derivative-two-element-fibers}, applied with
$(g,G,\tau)$ replaced by $(h,H,\sigma)$, shows that the only
nonsingleton fibers of $D_aH$ are
\begin{equation}\label{eq:underlying-pn-fibers}
\begin{gathered}
  \{1,1+u_a\},\qquad \{-1,-1+u_a\},\\
  \{1-a,1-a-u_a\},\qquad \{-1-a,-1-a-u_a\}.
\end{gathered}
\end{equation}
For every $a\in\Fqstar\setminus\Fthree$, since $f=H$ on
$\Fq\setminus C$, one has $D_af(x)=D_aH(x)$ for every
$x\in\Fq\setminus\bigl(C\cup(C-a)\bigr)$.
Whenever $|C\cup(C-a)|=4$, the set $C\cup(C-a)$ meets each of the four
two-element fibers in \eqref{eq:underlying-pn-fibers} exactly once.  Indeed,
if one of these fibers were disjoint from $C\cup(C-a)$, its two points
would still have the same image under $D_af$, contradicting that $D_af$
is a permutation.  Since the four fibers are pairwise disjoint and
$C\cup(C-a)$ has four elements, each fiber is met exactly once.

\Needspace{6\baselineskip}
\medskip
\noindent\textbf{Step 1: $c\in\Fqstar\setminus\Fthree$ is impossible.}
\par\smallskip
Suppose that $c\in\Fqstar\setminus\Fthree$ and consider the derivative in
direction $c$.  Since $f$ differs from $H$ only on
$C=c+\Fthreestar$, the derivatives $D_cf$ and $D_cH$ agree outside the
four-element set
\[
  C\cup(C-c)=\{1,-1,1+c,-1+c\}.
\]
The preceding observation applies.  The points $1$ and $-1$ meet the first
two fibers in \eqref{eq:underlying-pn-fibers}, so $1+c$ and $-1+c$ meet the
last two.  Neither is one of their first listed points $1-c$ and $-1-c$:
any such equality would give
$2c\in\Fthree$, contrary to $c\notin\Fthree$.  Hence
\[
  \{1+c,-1+c\}=\{1-c-u_c,-1-c-u_c\}.
\]
Comparing sums gives $u_c=c$.  Therefore
\[
  -\sigma=L_c(u_c)=L_c(c)=h(2c)-2h(c)=-h(c),
\]
because $h(2c)=h(-c)=h(c)$.  Hence
$h(c)=\sigma=h(1)$, which is impossible by
Corollary~\ref{cor:special-preimages} because $c\notin\Fthree$.

\Needspace{6\baselineskip}
\medskip
\noindent\textbf{Step 2: $c\in\Fthreestar$ is impossible.}
\par\smallskip
Let $c\in\Fthreestar$ and choose any
$a\in\Fqstar\setminus\Fthree$.  Since
$C=c+\Fthreestar=\{0,-c\}$, the derivatives $D_af$ and $D_aH$ agree
outside the four-element set
\[
  C\cup(C-a)=\{0,-c,-a,-c-a\}.
\]
The preceding observation applies.  The points $-c$ and $-c-a$ are the
first listed points of two fibers in \eqref{eq:underlying-pn-fibers}.
The remaining points $0$ and $-a$ therefore meet the other two fibers.
Their first listed points are $c$ and $c-a$, and neither $0$ nor $-a$
equals either of them, because $c\ne0$ and $a\notin\Fthree$.  Thus
\[
  \{0,-a\}=\{c+u_a,c-a-u_a\}.
\]
Comparing sums gives $-a=2c-a$, and hence $2c=0$, contrary to
$c\in\Fthreestar$.

\Needspace{6\baselineskip}
\medskip
\noindent\textbf{Step 3: $c=0$, and it remains to show $f=h$.}
\par\smallskip
By Steps~1 and~2, the only remaining possibility is $c=0$.  Now
$C=\Fthreestar=\{1,-1\}$.  Since $f=H=h$ outside $C$, the permutations
$D_1f$ and $D_1h$ agree outside $C\cup(C-1)=\Fthree$.  Their image sets on $\Fthree$ are
therefore equal, each being the complement of their common image of
$\Fq\setminus\Fthree$.  Evaluating at $0,1,-1$ gives
\[
  \{f(1),f(-1)-f(1),-f(-1)\}
  =D_1f(\Fthree)=D_1h(\Fthree)
  =
  \{\sigma,0,-\sigma\}.
\]
Choose any $a\in\Fqstar\setminus\Fthree$ and put
\[
  A:=D_ah(1)=h(a+1)-\sigma,
  \qquad
  B:=D_ah(-1)=h(a-1)-\sigma.
\]
Since $f=h$ outside $C=\Fthreestar$, equation
\eqref{eq:exceptional-input-set} shows that $D_af$ and $D_ah$
agree outside $E_a$.  Both derivatives are permutations, so their images of
$E_a$ coincide.  Therefore \eqref{eq:exceptional-derivative-value-sets} and
\eqref{eq:four-exclusions-general}, applied with $(g,G,\tau)$ replaced by
$(h,H,\sigma)$, give
\[
  D_af(E_a)=D_ah(E_a)=\{\pm A,\pm B\},
  \qquad
  A,B,A+B,A-B\notin\sigma\Fthree.
\]
In particular, $D_af(\pm1)\in\{\pm A,\pm B\}$.  Suppose first that
$f(1)=0$.  Since $a+1\notin C$, one has $f(a+1)=h(a+1)$, and hence
\[
  D_af(1)=f(a+1)-f(1)=h(a+1)=A+\sigma\in\{\pm A,\pm B\}.
\]
Equality with one of $\pm A,\pm B$ would force $\sigma=0$ or one of
$A,A-B,A+B$ to lie in $\sigma\Fthree$, a contradiction.  Thus
$f(1)\ne0$.  If $f(-1)=0$, then $a-1\notin C$ similarly gives
$D_af(-1)=f(a-1)-f(-1)=h(a-1)=B+\sigma\in\{\pm A,\pm B\}$.
The same check, with $B$ in place of $A$, gives a contradiction.  Thus
$f(-1)\ne0$.

Consequently, the equality
$\{f(1),f(-1)-f(1),-f(-1)\}=\{\sigma,0,-\sigma\}$ forces
$f(-1)-f(1)=0$, so $f(1)=f(-1)\in\{\pm\sigma\}$.  If
$f(1)=f(-1)=-\sigma$, then $f=H$, because $H(\pm1)=-\sigma$;
this contradicts
$\delta_f=1$ and $\delta_H=2$ from Theorem~\ref{thm:diff-spectrum}.
Therefore $f(1)=f(-1)=\sigma=h(1)=h(-1)$.  Together with $f=h$
outside $C$, this proves $f=h$.
\end{proof}

\begin{remark}
\label{rem:pn-uniqueness-scope}
In Lemma~\ref{lem:underlying-pn-uniqueness}, $f$ is assumed only to be PN;
its quadraticity and representation by a DO polynomial follow from the
conclusion $f=h$.
\end{remark}

\begin{theorem}[Switch-rigidity]\label{thm:switch-rigidity}
Let $g,h:\Fq\to\Fq$ be DO PN functions, and let $G,H$ be
their sign-switches.  If $G\sim_{\CCZ}H$, then $g\sim_{\EA}h$.
\end{theorem}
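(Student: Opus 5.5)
By Lemma~\ref{lem:ccz-to-ea}, the hypothesis $G\sim_{\CCZ}H$ yields invertible $\Fthree$-linear maps $M_1,M_2$, an $\Fthree$-linear map $B$, and constants $a_0,b_0$ such that
\[
  H(M_1x+a_0)=M_2G(x)+B(x)+b_0\qquad(x\in\Fq).
\]
The plan is to strip the switching terms off both sides and land in the situation of Lemma~\ref{lem:underlying-pn-uniqueness}. We write $G=g+\tau\epsilon$ and $H=h+\sigma\epsilon$ with $\sigma=h(1)$. Define
\[
  f(y):=M_2g\bigl(M_1^{-1}(y-a_0)\bigr)+B\bigl(M_1^{-1}(y-a_0)\bigr)+b_0 .
\]
This $f$ is EA-equivalent to $g$, so $f$ is PN, since affine equivalence together with the addition of affine maps preserves planarity. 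It is also quadratic. By construction, $H(y)=f(y)+\tau M_2\epsilon\bigl(M_1^{-1}(y-a_0)\bigr)$.

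The switching term on the right is supported on $M_1(\Fthreestar)+a_0$. By Lemma~\ref{lem:input-map-preserves-fthree}, $M_1(\Fthreestar)=\Fthreestar$, so this support is the two-point set $C=a_0+\Fthreestar$. Hence $f$ agrees with $H$ outside $C$, and Lemma~\ref{lem:underlying-pn-uniqueness} applies with $c=a_0$. It gives $a_0=0$ and $f=h$. Unwinding the definition of $f$, we get
\[
  h(M_1x)=M_2g(x)+B(x)+b_0\qquad(x\in\Fq),
\]
which is exactly $g\sim_{\EA}h$.

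Most steps are bookkeeping once the earlier lemmas are in place. The substantive work, namely identifying the vertical subspace and showing that the PN function is uniquely determined by its switch, was already done in Lemmas~\ref{lem:unique-vertical-subspace} and~\ref{lem:underlying-pn-uniqueness}.

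The one point that needs care is verifying that the auxiliary $f$ is PN, which is the hypothesis Lemma~\ref{lem:underlying-pn-uniqueness} requires. I would check this directly on derivatives. For $a\ne0$,
\[
  D_{M_1a}f(M_1x+a_0)=M_2D_ag(x)+B(a).
\]
The right-hand side is a bijection in $x$, and $x\mapsto M_1x+a_0$ is a bijection, so $D_{M_1a}f$ is a permutation. As $a$ ranges over $\Fqstar$, $M_1a$ ranges over $\Fqstar$, so every nonzero derivative of $f$ is a permutation and $f$ is PN.

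A second point to confirm is that the switching term really has the form the lemma expects. The difference $H-f=\tau M_2\epsilon(\cdot)$ is nonzero exactly on $C$, but Lemma~\ref{lem:underlying-pn-uniqueness} only needs agreement outside $C$, so nothing more about its values is required.
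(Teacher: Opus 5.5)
Your proposal is correct and is essentially the paper's proof. The paper likewise takes the EA normal form from Lemma~\ref{lem:ccz-to-ea} and $M_1(\Fthreestar)=\Fthreestar$ from Lemma~\ref{lem:input-map-preserves-fthree}, notes that $\mathcal A(\mathcal G_g)$ is the graph of a PN function agreeing with $H$ outside $a_0+\Fthreestar$, and concludes via Lemma~\ref{lem:underlying-pn-uniqueness}; your explicit formula for $f$ and derivative check of planarity make that graph-level step concrete, and the only slip is notational (since $M_2$ is merely $\Fthree$-linear, the switching term is $\epsilon\bigl(M_1^{-1}(y-a_0)\bigr)\,M_2(\tau)$ rather than $\tau M_2\epsilon(\cdot)$), which changes nothing in the argument.
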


\begin{proof}
Assume $G\sim_{\CCZ}H$.  By Lemma~\ref{lem:ccz-to-ea}, the CCZ
equivalence is induced by an EA map $\mathcal A$ of the normal form
\eqref{eq:ccz-ea-normal-form}, and \eqref{eq:input-map-preserves-fthree} gives
$M_1(\Fthreestar)=\Fthreestar$.  By \eqref{eq:sign-switch}, the graph
$\mathcal G_g$ differs from $\mathcal G_G$ exactly at the two inputs
$\Fthreestar=\{1,2\}$.  Since the first coordinate of $\mathcal A$ is
$M_1x+a_0$, the graph $\mathcal A(\mathcal G_g)$ differs from
$\mathcal A(\mathcal G_G)=\mathcal G_H$ exactly at
$a_0+M_1(\Fthreestar)=a_0+\Fthreestar$.

Because $g$ is PN and $\mathcal A$ is an EA map,
$\mathcal A(\mathcal G_g)$ is the graph $\mathcal G_f$ of a PN function
$f$.  Thus $f$ agrees with $H$ outside the two-point set
$a_0+\Fthreestar$.  Lemma~\ref{lem:underlying-pn-uniqueness}, applied to $H$,
gives $a_0=0$ and $f=h$.  Therefore
$\mathcal A(\mathcal G_g)=\mathcal G_h$, so $g\sim_{\EA}h$.
\end{proof}

\begin{corollary}
\label{cor:linear-equivalence-sign-switches}
The EA equivalence between $g$ and $h$ in
Theorem~\ref{thm:switch-rigidity} is in fact a linear equivalence;
that is, there exist invertible
$\Fthree$-linear maps $M_1,M_2$ such that
\[
  H(M_1x)=M_2G(x),
  \qquad
  h(M_1x)=M_2g(x)
  \qquad(x\in\Fq).
\]
More precisely, every affine CCZ map $\mathcal A$ sending $\mathcal G_G$
to $\mathcal G_H$ has the form
\[
  \mathcal A(x,y)=(M_1x,M_2y)
\]
and also satisfies
\[
  \mathcal A(\mathcal G_g)=\mathcal G_h.
\]
\end{corollary}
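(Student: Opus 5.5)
We aim to show that the translation part of $\mathcal A$ vanishes and that the off-diagonal linear map $B$ is zero. Start from the normal form \eqref{eq:ccz-ea-normal-form} provided by Lemma~\ref{lem:ccz-to-ea}, namely $\mathcal A(x,y)=(M_1x+a_0,\,M_2y+B(x)+b_0)$. The proof of Theorem~\ref{thm:switch-rigidity} already gives two facts: $a_0=0$, and $\mathcal A(\mathcal G_g)=\mathcal G_h$. The latter means
\[
  h(M_1x)=M_2g(x)+B(x)+b_0\qquad(x\in\Fq).
\]
It remains to show $b_0=0$ and $B=0$.

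\textbf{Step 1: $b_0=0$.} Evaluate the identity at $x=0$. Since $g(0)=h(0)=0$ and $M_1(0)=0$, this immediately gives $b_0=0$.

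\textbf{Step 2: $B=0$.} The functions $h\circ M_1$ and $M_2\circ g$ are both even, because $g$ and $h$ are DO and $M_1,M_2$ are linear. Replacing $x$ by $-x$ therefore leaves the left side unchanged and leaves $M_2g(x)$ unchanged, while $B(-x)=-B(x)$. Subtracting the two resulting identities gives $2B(x)=0$, hence $B(x)=0$ for all $x$, since $2$ is invertible in characteristic $3$. Alternatively, a degree argument works: $h\circ M_1$ and $M_2\circ g$ are purely quadratic (DO), and DO and linear parts separate.

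\textbf{Step 3: the switched identity.} With $a_0=b_0=0$ and $B=0$, we now have $\mathcal A(x,y)=(M_1x,M_2y)$ and $h(M_1x)=M_2g(x)$. From $\mathcal A(\mathcal G_G)=\mathcal G_H$ we get $H(M_1x)=M_2G(x)$ directly. Because this holds for every affine CCZ map $\mathcal A$ between the graphs of $G$ and $H$ (each such map passes through Lemma~\ref{lem:ccz-to-ea} and the rigidity argument), the "more precisely" assertion follows.

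The only delicate point is checking that $a_0=0$ is genuinely forced for every such $\mathcal A$, and not merely for one chosen map. This is the case: Theorem~\ref{thm:switch-rigidity} applied Lemma~\ref{lem:underlying-pn-uniqueness} to an arbitrary $\mathcal A$, so its conclusions hold for each one.
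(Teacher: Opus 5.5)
Your proposal is correct and follows the paper's proof. You take the normal form from Lemma~\ref{lem:ccz-to-ea}, import $a_0=0$ and $\mathcal A(\mathcal G_g)=\mathcal G_h$ from the switch-rigidity argument applied to the given map, then remove $b_0$ by evaluating at $x=0$ and $B$ by evenness. The only cosmetic difference is that you run these last two steps on $h(M_1x)=M_2g(x)+B(x)+b_0$ rather than on the paper's $H(M_1x)=M_2G(x)+B(x)+b_0$, which works equally well because $g$ and $h$ are also even and vanish at $0$.
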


\begin{proof}
Let $\mathcal A$ be any affine CCZ map sending $\mathcal G_G$ to
$\mathcal G_H$.  By Lemma~\ref{lem:ccz-to-ea}, it has the normal form
\eqref{eq:ccz-ea-normal-form}.  The proof of
Theorem~\ref{thm:switch-rigidity}, applied to this map, gives $a_0=0$ and
$\mathcal A(\mathcal G_g)=\mathcal G_h$.
Evaluating \eqref{eq:ccz-ea-normal-form} at $x=0$ gives $b_0=0$, since
$G(0)=H(0)=0$.  Since $G$ and $H$ are even, comparing the identity
\[
  H(M_1x)=M_2G(x)+B(x),
\]
obtained from \eqref{eq:ccz-ea-normal-form}, at $x$ and $-x$ gives
$2B(x)=0$, and hence $B=0$.  Thus $\mathcal A(x,y)=(M_1x,M_2y)$.
The identities $H(M_1x)=M_2G(x)$ and $h(M_1x)=M_2g(x)$ now follow from
the corresponding graph equalities.
\end{proof}

\subsection{CCZ-inequivalence arising from inequivalent PN inputs}
\label{sec:ccz-inequivalence-families}

Throughout this subsection, $g$ denotes a DO PN function, $G$
denotes its sign-switch as in \eqref{eq:sign-switch}, and $\widetilde G$
denotes any one of the functions $\widetilde G_c$ in
\eqref{eq:main-family}, with $c\in\mathcal C_g$.  We suppress $c$ because
its choice plays no role below.  By
Corollary~\ref{cor:natural-derivative-perturbations},
\begin{equation}\label{eq:sec8-standing}
  \delta_G=\delta_{\widetilde G}=2,
  \qquad
  \beta_G=2,
  \qquad
  \beta_{\widetilde G}=1,
  \qquad
  \widetilde G\sim_{\EA}G.
\end{equation}
For a second DO PN function $h$, the notation
$H,\widetilde H$ is used analogously, and the corresponding relations in
\eqref{eq:sec8-standing} hold.

\begin{corollary}
\label{cor:ccz-inequivalent-associated-functions}
Let $g,h:\Fq\to\Fq$ be DO PN functions, with the notation
above.  If
\[
  g\not\sim_{\CCZ}h,
\]
then both
\[
  G\not\sim_{\CCZ}H
  \qquad\text{and}\qquad
  \widetilde G\not\sim_{\CCZ}\widetilde H.
\]
\end{corollary}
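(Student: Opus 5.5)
We argue by contraposition, reducing both conclusions to switch-rigidity (Theorem~\ref{thm:switch-rigidity}).

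\emph{First conclusion.} Suppose $G\sim_{\CCZ}H$. Theorem~\ref{thm:switch-rigidity} gives $g\sim_{\EA}h$. EA equivalence is a special case of CCZ equivalence: the map $(x,y)\mapsto(M_1x+a_0,\,M_2y+B(x)+b_0)$ is an affine automorphism of $\Fq\times\Fq$ sending $\mathcal G_g$ onto $\mathcal G_h$. Hence $g\sim_{\CCZ}h$, contradicting the hypothesis, and so $G\not\sim_{\CCZ}H$.

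\emph{Second conclusion.} Suppose $\widetilde G\sim_{\CCZ}\widetilde H$. By \eqref{eq:sec8-standing}, $\widetilde G\sim_{\EA}G$ and $\widetilde H\sim_{\EA}H$. Concretely, $\widetilde G=G+D_cg$ with $D_cg$ affine, so the graph map $(x,y)\mapsto(x,y+D_cg(x))$ is an affine automorphism sending $\mathcal G_G$ to $\mathcal G_{\widetilde G}$; the same holds for $H$ and $\widetilde H$. EA equivalence is a special case of CCZ equivalence, and CCZ equivalence is an equivalence relation, since affine automorphisms compose and invert. Chaining $G\sim\widetilde G\sim\widetilde H\sim H$ therefore gives $G\sim_{\CCZ}H$, which contradicts the first conclusion.

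There is no real obstacle here. The only point to state explicitly is transitivity together with the inclusion of EA in CCZ, both immediate from the graph formulation. All the substantive content lies in Theorem~\ref{thm:switch-rigidity}, which we invoke directly.
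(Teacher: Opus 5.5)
Your proof is correct and follows the paper's route: switch-rigidity (Theorem~\ref{thm:switch-rigidity}) combined with EA $\Rightarrow$ CCZ handles $G,H$, and the EA-equivalences $\widetilde G\sim_{\EA}G$, $\widetilde H\sim_{\EA}H$ plus transitivity of CCZ handle $\widetilde G,\widetilde H$. You also spell out the explicit graph maps $(x,y)\mapsto(M_1x+a_0,\,M_2y+B(x)+b_0)$ and $(x,y)\mapsto(x,\,y+D_cg(x))$ and the transitivity step; the paper leaves these implicit.
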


\begin{proof}
Indeed, if $G\sim_{\CCZ}H$, then Theorem~\ref{thm:switch-rigidity} gives
$g\sim_{\EA}h$, and hence $g\sim_{\CCZ}h$, contrary to the hypothesis.
Thus $G\not\sim_{\CCZ}H$.  Moreover, since EA equivalence implies CCZ
equivalence, \eqref{eq:sec8-standing} yields
$\widetilde G\sim_{\CCZ}\widetilde H$ if and only if
$G\sim_{\CCZ}H$; hence $\widetilde G\not\sim_{\CCZ}\widetilde H$.
\end{proof}

\begin{theorem}[Three CCZ-inequivalent classes with boomerang uniformity one]
\label{thm:three-beta-one-classes}
Let $q=3^n=3^{3s}$, where $s>1$ is odd,
choose $1\le t<3s$, put $\ell:=\gcd(s,t)$, and assume
\[
  \ell>1,
  \qquad
  s\nmid t,
  \qquad
  \frac{s+t}{\ell}\equiv0\pmod3.
\]
Let $\omega$ be a primitive element of $\F_{3^{3s}}$, and define
\[
\begin{aligned}
  g_1(x)&:=x^{3^s+1},\\
  g_3(x)&:=x^{10}-x^6-x^2,\\
  g_5(x)&:=x^{3^t+1}
     -\omega^{3^s-1}x^{3^{2s}+3^{s+t}}.
\end{aligned}
\]
For $i\in\{1,3,5\}$, let $\widetilde G_i$ be any one of the functions in
\eqref{eq:main-family}, with $g$ replaced by $g_i$ and with its parameter
in $\mathcal C_{g_i}$.  Then
$\widetilde G_1,\widetilde G_3$, and $\widetilde G_5$ are pairwise
CCZ-inequivalent APN functions, and
\[
  \beta_{\widetilde G_1}
  =\beta_{\widetilde G_3}
  =\beta_{\widetilde G_5}
  =1.
\]
Taking $s=5\ell$ and $t=\ell$ for any odd $\ell>1$ yields infinitely many odd
extension degrees $n=15\ell$ for which such a triple exists; the first is
$n=45$.
\end{theorem}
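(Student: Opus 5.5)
The proof reduces, via Corollary~\ref{cor:ccz-inequivalent-associated-functions}, to showing that $g_1,g_3,g_5$ are valid DO PN functions over $\F_{3^{3s}}$ that are pairwise CCZ-inequivalent. The remaining assertions then follow from Corollary~\ref{cor:natural-derivative-perturbations}: each $\widetilde G_i$ is APN with $\beta_{\widetilde G_i}=1$ because its parameter lies in $\mathcal C_{g_i}$, and $\mathcal C_{g_i}\ne\varnothing$ since $|\mathcal C_{g_i}|=(q-3)/2>0$.

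\textbf{Step 1: validity of the three functions.}
\begin{itemize}
\item $g_1$ is a Gold function with $k=s$, so it is PN because $n$ is odd.
\item $g_3$ is the Ding--Yuan function $f_3$ with $\lambda=1$, since $x^{10}-\lambda x^6-\lambda^2x^2$ becomes $x^{10}-x^6-x^2$.
\item $g_5$ is Bierbrauer's $f_5$ with $\mu=\omega^{3^s-1}$. Its order is $(3^{3s}-1)/(3^s-1)=3^{2s}+3^s+1$, and the hypothesis $(s+t)/\ell\equiv0\pmod3$ is exactly the required congruence.
\end{itemize}

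\textbf{Step 2: an invariant separating the three functions.} Since all three are PN, CCZ-equivalence coincides with EA-equivalence, which for quadratic planar DO maps is isotopism of the associated commutative presemifields $x\ast y=g(x+y)-g(x)-g(y)$. The orders of the left, middle, and right nuclei of the associated commutative semifields are isotopism invariants, so I would compute these nucleus orders for each function.
\begin{itemize}
\item For the Gold presemifield $x\ast y=x^{3^s}y+xy^{3^s}$, the middle nucleus is $\F_{3^{\gcd(s,n)}}=\F_{3^s}$, of order $3^s$.
\item For Ding--Yuan, the semifield is the known Coulter--Matthews/Ding--Yuan commutative semifield, whose nucleus is $\F_3$ (middle nucleus $\F_3$, by Coulter--Henderson--Kosick). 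This holds for all odd $n$, and I would cite the known computation.
\item For Bierbrauer's $f_5$, I would cite the nucleus computation in the literature. The middle nucleus should be $\F_{3^{\ell}}$ with $\ell=\gcd(s,t)$, and the conditions $s\nmid t$ and $\ell>1$ are designed to make this order $3^\ell$, strictly between $3$ and $3^s$. The hypothesis $s\nmid t$ also excludes the degenerate case in which $f_5$ collapses to a Gold monomial or has nucleus $\F_{3^s}$.
\end{itemize}
The three nucleus orders are therefore $3^s$, $3$, and $3^\ell$ with $1<\ell<s$. They are pairwise distinct, so the presemifields are pairwise non-isotopic, and the functions are pairwise EA-, hence CCZ-, inequivalent.

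\textbf{Step 3: the explicit family.} Take $s=5\ell$ and $t=\ell$ with $\ell>1$ odd. Then:
\begin{itemize}
\item $s$ is odd;
\item $\gcd(s,t)=\ell>1$;
\item $s\nmid t$;
\item $(s+t)/\ell=6\equiv0\pmod3$.
\end{itemize}
This gives $n=15\ell$, and the smallest case $\ell=3$ gives $n=45$.

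\textbf{Main obstacle.} The hardest step is pinning down the nucleus of the Bierbrauer presemifield as exactly $\F_{3^\ell}$ under these hypotheses, and in particular showing it is not $\F_{3^s}$. I would first look for this value in the literature on Bierbrauer/Zha--Kyureghyan--Wang semifields. Failing that, I would compute the middle nucleus directly as the set of $\alpha$ with $(\alpha x)\ast y=x\ast(\alpha y)$, comparing coefficients of the two linearized monomials to force $\alpha^{3^t}=\alpha$ and $\alpha^{3^{2s}}=\alpha^{3^{s+t}}$. These conditions reduce to $\alpha\in\F_{3^{\gcd(t,3s)}}\cap\F_{3^{\gcd(s-t,3s)}}$, which is contained in $\F_{3^\ell}$.
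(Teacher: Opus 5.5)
Your outline matches the paper's. You reduce via Corollary~\ref{cor:ccz-inequivalent-associated-functions} to pairwise CCZ-inequivalence of $g_1,g_3,g_5$, pass to (strong) isotopy of the associated commutative presemifields, and separate the isotopy classes by the nucleus orders $3^s$, $3$, $3^\ell$. Validity of the inputs and the specialization $s=5\ell$, $t=\ell$ are handled as in the paper.

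\textbf{The gap.} It lies in the one step with real content: the value $3^\ell$ for the Bierbrauer presemifield. You only assert that the middle nucleus ``should be'' $\F_{3^\ell}$, and your fallback computation does not prove it. The set $K:=\{\alpha\in\Fq:(\alpha x)\ast y=x\ast(\alpha y)\ \text{for all }x,y\}$ is not the middle nucleus. It is also not an isotopy invariant: it depends on the chosen representative and can change under $x\ast y\mapsto L(x)\ast L(y)$ for an $\Fthree$-linear bijection $L$.

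The invariant object is the set of pairs of additive maps $(A,B)$ with $A(x)\ast y=x\ast B(y)$ for all $x,y$. This set is in bijection with the middle nucleus $N_m$ of the Kaplansky semifield, and the scalar pairs $(\alpha,\alpha)$ with $\alpha\in K$ are only part of it.

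Your coefficient comparison is correct. Since $s\nmid t$ keeps the four bilinear monomials distinct, it even gives $K=\F_{3^\ell}$ exactly. But it yields only the lower bound $|N_m|\ge 3^\ell$. That bound separates $g_5$ from $g_3$. Separating $g_5$ from the Gold function $g_1$, however, requires $|N_m|\ne 3^s$, which is an upper bound, and nothing in your argument rules out a middle nucleus of order $3^s$.

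\textbf{How the paper closes it.} The paper handles exactly this point by citation. In the notation of Theorem~4.3 of G\"olo\u{g}lu--K\"olsch~\cite{GologluKolsch26}, $g_5$ is $F_{3^t,\omega}$. This function is $\operatorname{GL}(3,3^s)$-equivalent to a member of their Theorem~1.9 family, and that equivalence induces a strong isotopy of presemifields. Their Theorem~A.4 then gives nucleus and middle nucleus equal to the fixed field of $x\mapsto x^{3^t}$ on $\F_{3^s}$, namely $\F_{3^\ell}$. Finally, $s\nmid t$ gives $\ell<s$.

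To complete your proof you need that result or an equivalent one. Alternatively, you would have to determine all pairs $(A,B)$, not just the scalar ones.

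A minor imprecision: EA-equivalence of planar DO maps corresponds to strong isotopy rather than isotopy. Only the implication to isotopy is used, so this does no harm.
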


\begin{proof}
Equation~\eqref{eq:sec8-standing}, applied to each $g_i$, gives the APN and
boomerang-uniformity assertions.  It remains to prove that $g_1,g_3$, and
$g_5$ are pairwise CCZ-inequivalent.
Since $\omega$ is primitive, one has
$\operatorname{ord}(\omega^{3^s-1})=(3^{3s}-1)/(3^s-1)=3^{2s}+3^s+1$.
Hence the validity conditions for $g_1,g_3$, and $g_5$ follow from
Table~\ref{tab:known-pn-q3n}; in particular, $g_5$ is a member of the
Bierbrauer $f_5$ family~\cite[Thm.~4]{Bier10}.
Moreover, $s\nmid t$ excludes $t=2s$, the only value for which the two
exponents of $g_5$ are congruent modulo $3^{3s}-1$; hence $g_5$ is a
genuine binomial.
Semifields in the isotopy classes associated with $g_1,g_3$, and $g_5$
have nucleus and middle-nucleus orders
\begin{equation}\label{eq:three-class-nuclei}
  (3^s,3^s),
  \qquad
  (3,3),
  \qquad
  (3^\ell,3^\ell),
\end{equation}
respectively.  The first two pairs in \eqref{eq:three-class-nuclei} are
recorded, respectively, in
\cite[Sec.~3, p.~830]{AndreoliEtAl25} and
\cite[Sec.~3, p.~831]{AndreoliEtAl25}; see also
\cite{CoulterHenderson08}.  In the notation of
\cite[Sec.~4, Thm.~4.3, p.~18]{GologluKolsch26}, the function $g_5$ is
$F_{3^t,\omega}$ over $\F_{3^{3s}}$ and is
$\operatorname{GL}(3,3^s)$-equivalent to a function in the family of
their Theorem~1.9.  This equivalence induces a strong isotopy between
the associated presemifields~\cite[Sec.~2.7, p.~10]{GologluKolsch26}.
The automorphism $x\mapsto x^{3^t}$ of $\F_{3^s}$ has fixed field
$\F_{3^\ell}$, so
\cite[Appendix~A, Thm.~A.4, p.~23]{GologluKolsch26} gives the third pair in
\eqref{eq:three-class-nuclei}.  Since $1<\ell<s$, the three pairs in
\eqref{eq:three-class-nuclei} are distinct.
The orders of the nuclei are isotopy invariants
\cite[Sec.~2.2, p.~827]{AndreoliEtAl25}; hence the three associated
semifields are pairwise nonisotopic.  For planar DO
polynomials, CCZ equivalence is equivalent to strong isotopy
\cite[Sec.~2.3, p.~827]{AndreoliEtAl25}, and strong isotopy implies
isotopy.  Therefore the three inputs are pairwise CCZ-inequivalent.
Corollary~\ref{cor:ccz-inequivalent-associated-functions} then shows that
$\widetilde G_1,\widetilde G_3$, and $\widetilde G_5$ are pairwise
CCZ-inequivalent.
Finally, $s=5\ell$ and $t=\ell$ give $(s+t)/\ell=6$ and satisfy the other
conditions immediately.
\end{proof}

\section{Conclusion and further directions}
\label{sec:conclusion}

We have shown that every DO PN function $g$ over $\Fq$, where $q=3^n$ and
$n>1$ is odd, gives rise to a family of APN functions whose exact differential
and boomerang spectra can be determined.  Writing $\tau=g(1)$, the sign-switch
$G=g+\tau\epsilon$ is APN with $\delta_G=\beta_G=2$, and every function
\[
  \widetilde G_c(x):=G(x)+D_cg(x)=g(x+c)+\tau\epsilon(x),
  \qquad c\in\Fq,
\]
is APN.  Exactly $(q-3)/2$ choices of $c$ yield boomerang uniformity one, and
the remaining $(q+3)/2$ yield boomerang uniformity two.  This family has a
common exact differential spectrum, while its complete boomerang spectra fall
into three types in the former case and five in the latter.  We also proved
that, for functions over finite fields of odd characteristic with differential
uniformity at most two, boomerang uniformity zero implies perfect
nonlinearity; hence boomerang uniformity one is the least possible value for
an APN function in odd characteristic.

The common differential spectrum also separates the constructed functions
from every power function and every Ness--Helleseth-type binomial;
its CCZ-invariance therefore yields CCZ-inequivalence from both classes.
Furthermore, switch-rigidity shows that CCZ equivalence between sign-switches
of DO PN functions forces EA equivalence between their PN inputs.  Combining
this result with the different orders of the nuclei of the associated
presemifields yields three pairwise
CCZ-inequivalent APN functions with boomerang uniformity one for infinitely
many odd extension degrees, using the Gold, Ding--Yuan, and Bierbrauer
families.  The first degree obtained by this argument is $n=45$.

The present method relies essentially on the DO property, which makes every
derivative $D_cg$ affine.  A natural next case is therefore the
Coulter--Matthews PN family $x^{(3^k+1)/2}$, with $k$ odd and
$\gcd(n,k)=1$.  This family is generally non-DO, so the present
affine-perturbation argument does not apply directly.  It would be interesting
to determine the exact spectra of its sign-switches and to find a modified
perturbation that yields APN functions with boomerang uniformity one.  More
broadly, one may ask whether analogous switching and
perturbation methods can produce APN functions with boomerang uniformity one
beyond the ternary DO setting.

\newpage
\noindent\textbf{Acknowledgments:}
Soonhak Kwon was supported by Basic Science Research Program through the
National Research Foundation of Korea (NRF) funded by the Ministry of Education
(No.~RS-2019-NR040081). Namhun Koo was supported by Basic Science Research
Program through the National Research Foundation of Korea (NRF) funded by the
Ministry of Education (No.~RS-2026-25575984).


\begin{thebibliography}{99}
\setlength{\itemsep}{1pt}

\bibitem{AndreoliEtAl25}
S. Andreoli, L. Budaghyan, R. S. Coulter, A. Haukenes, N. Kaleyski, and E. Piccione,
\emph{On a classification of planar functions in characteristic three},
Cryptogr. Commun. \textbf{17} (2025), 823--853.
DOI: 10.1007/s12095-025-00781-y.

\bibitem{BartoliStanica26}
D. Bartoli and P. St\u{a}nic\u{a},
\emph{Non-existence of infinite APN families from patched monomials in odd characteristic},
Cryptogr. Commun. (2026), published online.
DOI: 10.1007/s12095-026-00880-4.

\bibitem{Bier10}
J. Bierbrauer,
\emph{New semifields, PN and APN functions},
Des. Codes Cryptogr. \textbf{54} (2010), 189--200. DOI: 10.1007/s10623-009-9318-7.

\bibitem{BudaghyanPal25}
L. Budaghyan and M. Pal,
\emph{Arithmetization-oriented APN permutations},
Des. Codes Cryptogr. \textbf{93} (2025), no.~4, 1067--1088.
DOI: 10.1007/s10623-024-01487-7.

\bibitem{CCZ98}
C. Carlet, P. Charpin, and V. Zinoviev,
\emph{Codes, bent functions and permutations suitable for DES-like cryptosystems},
Des. Codes Cryptogr. \textbf{15} (1998), 125--156.
DOI: 10.1023/A:1008344232130.

\bibitem{CHNC13}
S.-T. Choi, S. Hong, J.-S. No, and H. Chung,
\emph{Differential spectrum of some power functions in odd prime characteristic},
Finite Fields Appl. \textbf{21} (2013), 11--29.

\bibitem{CidEtAl18}
C. Cid, T. Huang, T. Peyrin, Y. Sasaki, and L. Song,
\emph{Boomerang connectivity table: A new cryptanalysis tool},
in Advances in Cryptology--EUROCRYPT 2018, Lecture Notes in Computer Science
\textbf{10821}, Springer, 2018, 683--714.
DOI: 10.1007/978-3-319-78375-8\_22.

\bibitem{CoulterHenderson08}
R. S. Coulter and M. Henderson,
\emph{Commutative presemifields and semifields},
Adv. Math. \textbf{217} (2008), no.~1, 282--304.
DOI: 10.1016/j.aim.2007.07.007.

\bibitem{CM97}
R. S. Coulter and R. W. Matthews,
\emph{Planar functions and planes of Lenz--Barlotti class II},
Des. Codes Cryptogr. \textbf{10} (1997), no. 2, 167--184.

\bibitem{CM11}
R. S. Coulter and R. W. Matthews,
\emph{On the number of distinct values of a class of functions over a finite field},
Finite Fields Appl. \textbf{17} (2011), no. 3, 220--224. DOI: 10.1016/j.ffa.2010.12.002.

\bibitem{DO68}
P. Dembowski and T. G. Ostrom,
\emph{Planes of order $n$ with collineation groups of order $n^2$},
Math. Z. \textbf{103} (1968), no. 3, 239--258.

\bibitem{Dempwolff18}
U. Dempwolff,
\emph{CCZ equivalence of power functions},
Des. Codes Cryptogr. \textbf{86} (2018), 665--692.

\bibitem{Dempwolff22}
U. Dempwolff,
\emph{Correction to: CCZ equivalence of power functions},
Des. Codes Cryptogr. \textbf{90} (2022), 473--475.
DOI: 10.1007/s10623-021-00979-0.

\bibitem{DY06}
C. Ding and J. Yuan,
\emph{A family of skew Hadamard difference sets},
J. Combin. Theory Ser. A \textbf{113} (2006), 1526--1535.

\bibitem{DMM03}
H. Dobbertin, D. Mills, E. N. M\"uller, A. Pott, and W. Willems,
\emph{APN functions in odd characteristic},
Discrete Math. \textbf{267} (2003), 95--112.

\bibitem{FL07}
K. Feng and J. Luo,
\emph{Value distributions of exponential sums from perfect nonlinear functions and their applications},
IEEE Trans. Inf. Theory \textbf{53} (2007), no. 9, 3035--3041. DOI: 10.1109/TIT.2007.903153.

\bibitem{GologluKolsch26}
F. G\"olo\u{g}lu and L. K\"olsch,
\emph{Commutative semifields from bijections of the Desarguesian plane},
J. Lond. Math. Soc. \textbf{114} (2026), no.~1, Art.~e70635.
DOI: 10.1112/jlms.70635.

\bibitem{HuEtAl23}
Z. Hu, N. Li, L. Xu, X. Zeng, and X. Tang,
\emph{The differential spectrum and boomerang spectrum of a class of locally-APN functions},
Des. Codes Cryptogr. \textbf{91} (2023), no.~5, 1695--1711.
DOI: 10.1007/s10623-022-01161-w.

\bibitem{KK26}
N. Koo and S. Kwon,
\emph{On differential and boomerang properties of a class of binomials over finite fields of odd characteristic},
IEEE Trans. Inf. Theory \textbf{72} (2026), no.~3, 1928--1942.
DOI: 10.1109/TIT.2026.3657603.

\bibitem{KKKB26b}
N. Koo, S. Kwon, M. Ko, and B. Kim,
\emph{Locally-APN binomials with low boomerang uniformity in odd characteristic},
arXiv:2512.17603v2, 2026.

\bibitem{KKKB26}
N. Koo, S. Kwon, M. Ko, and B. Kim,
\emph{On APN exponents and the differential and boomerang properties of binomials in characteristic $3$},
arXiv:2605.23224, 2026.

\bibitem{LangeWinterhof02}
T. Lange and A. Winterhof,
\emph{Incomplete character sums over finite fields and their application to the
interpolation of the discrete logarithm by Boolean functions},
Acta Arith. \textbf{101} (2002), no.~3, 223--229.
DOI: 10.4064/aa101-3-3.

\bibitem{Leducq12}
E. Leducq,
\emph{New families of APN functions in characteristic $3$ or $5$},
in Arithmetic, Geometry, Cryptography and Coding Theory,
Contemp. Math. \textbf{574}, Amer. Math. Soc., 2012, 115--123.

\bibitem{LiQuSunLi19}
K. Li, L. Qu, B. Sun, and C. Li,
\emph{New results about the boomerang uniformity of permutation polynomials},
IEEE Trans. Inf. Theory \textbf{65} (2019), no.~11, 7542--7553.
DOI: 10.1109/TIT.2019.2918531.

\bibitem{LyuWangZheng24}
C. Lyu, X. Wang, and D. Zheng,
\emph{A further study on the Ness--Helleseth function},
Finite Fields Appl. \textbf{98} (2024), Art.~102453.
DOI: 10.1016/j.ffa.2024.102453.

\bibitem{MW25}
S. Mesnager and H. Wu,
\emph{The differential and boomerang properties of a class of binomials},
IEEE Trans. Inf. Theory \textbf{71} (2025), no.~6, 4854--4871.
DOI: 10.1109/TIT.2025.3550851.

\bibitem{NH07}
G. J. Ness and T. Helleseth,
\emph{A new family of ternary almost perfect nonlinear mappings},
IEEE Trans. Inf. Theory \textbf{53} (2007), no.~7, 2581--2586.
DOI: 10.1109/TIT.2007.899508.

\bibitem{Nyberg94}
K. Nyberg,
\emph{Differentially uniform mappings for cryptography},
in Advances in Cryptology--EUROCRYPT '93, Lecture Notes in Computer Science
\textbf{765}, Springer, 1994, 55--64.
DOI: 10.1007/3-540-48285-7\_6.

\bibitem{PalStanica25}
M. Pal and P. St\u{a}nic\u{a},
\emph{A connection between the boomerang uniformity and the extended differential in odd characteristic and applications},
Adv. Math. Commun. \textbf{19} (2025), no.~5, 1382--1403.
DOI: 10.3934/amc.2024059.

\bibitem{XiaEtAl24}
Y. Xia, F. Bao, S. Chen, C. Li, and T. Helleseth,
\emph{More differential properties of the Ness--Helleseth function},
IEEE Trans. Inf. Theory \textbf{70} (2024), no.~8, 6076--6090.
DOI: 10.1109/TIT.2024.3408882.

\bibitem{XZLH20}
Y. Xia, X. Zhang, C. Li, and T. Helleseth,
\emph{The differential spectrum of a ternary power mapping},
Finite Fields Appl. \textbf{64} (2020), 101660.

\bibitem{XuCaoXu16}
G. Xu, X. Cao, and S. Xu,
\emph{Constructing new APN functions and bent functions over finite fields of odd characteristic via the switching method},
Cryptogr. Commun. \textbf{8} (2016), 155--171.
DOI: 10.1007/s12095-015-0145-6.

\bibitem{YMT24}
H. Yan, S. Mesnager, and X. Tan,
\emph{On a class of APN power functions over odd characteristic finite fields: their differential spectrum and $c$-differential properties},
Discrete Math. \textbf{347} (2024), 113881.

\bibitem{YXLHL22}
H. Yan, Y. Xia, C. Li, T. Helleseth, M. Xiong, and J. Luo,
\emph{The differential spectrum of the power mapping $x^{p^n-3}$},
IEEE Trans. Inf. Theory \textbf{68} (2022), no. 8, 5535--5547.

\bibitem{ZengHuYangJiang07}
X. Zeng, L. Hu, Y. Yang, and W. Jiang,
\emph{On the inequivalence of Ness--Helleseth APN functions},
IACR Cryptology ePrint Archive, Report 2007/379, 2007.

\bibitem{ZhaHu13}
Z. Zha and L. Hu,
\emph{Constructing new APN functions from known PN functions},
Int. J. Found. Comput. Sci. \textbf{24} (2013), no.~8, 1209--1219.
DOI: 10.1142/S0129054113500299.

\bibitem{ZKW09}
Z. Zha, G. M. Kyureghyan, and X. Wang,
\emph{Perfect nonlinear binomials and their semifields},
Finite Fields Appl. \textbf{15} (2009), no. 2, 125--133. DOI: 10.1016/j.ffa.2008.09.002.

\bibitem{ZW10}
Z. Zha and X. Wang,
\emph{Power functions with low uniformity on odd characteristic finite fields},
Sci. China Math. \textbf{53} (2010), 1931--1940.

\end{thebibliography}
\end{document}